\documentclass{article}
\usepackage{graphicx} 

\usepackage[text={17.1cm,24.6cm},centering]{geometry} 
\usepackage[utf8]{inputenc}
\usepackage{authblk}
\usepackage{blkarray}
\usepackage{cite}
\usepackage{comment}
\usepackage{amsfonts,amsmath,amssymb,amsthm}
\usepackage{mathrsfs,mathtools,bm}
\usepackage{graphicx,subcaption,caption,float,xcolor}
\usepackage{enumitem}
\usepackage{chngcntr}
\usepackage[breaklinks]{hyperref} 
\usepackage[normalem]{ulem}

\usepackage{tikz}
\usetikzlibrary{calc}

\usepackage{caption}

\usepackage{xcolor}

\newtheorem{theo}{Theorem}[section]

\newtheorem{rem}{Remark}[section]
\newtheorem*{theo*}{Theorem}
\newtheorem{lemm}[theo]{Lemma}
\newtheorem{prop}[theo]{Proposition}

\newtheorem{notation}[theo]{Notation}

\numberwithin{equation}{section}
\newtheorem{remax}[theo]{Remark}
\newenvironment{rema}
  {\pushQED{\qed}\remax}
  {\popQED\endremax}
  
\theoremstyle{definition}

\newcommand{\cH}{\mathcal{H}}
\newcommand{\cC}{\mathcal{C}}
\newcommand{\R}{\mathbb{R}}

\newcommand{\N}{\mathbb{N}}

\newcommand{\bra}[1]{\langle\,{#1}}
\newcommand{\ket}[1]{\mid{#1}\,\rangle}

\def\rr{{\boldsymbol{\rho}}}
\def\orr{\overline{{\boldsymbol{\rho}}}}
\def\ox{{\overline{x}}}

\def\oalpha{\overline{{\alpha}}}
\def\obeta{\overline{{\beta}}}
\def\oN{\overline{{N}}}

\newcommand{\argu}[3]{\left(\begin{array}{c} #1\\#2\end{array} ; #3\right)}

\title{The dynamical algebra of the generic superintegrable model\\ on the two-sphere}
\author{
Nicolas Crampé\textsuperscript{$1$}\footnote{E-mail: nicolas.crampe@cnrs.fr}~, 
Quentin Labriet\textsuperscript{$2$}\footnote{E-mail: quentin.labriet@umontreal.ca}~, 
Lucia Morey\textsuperscript{$2$}\footnote{E-mail: lucia.morey@umontreal.ca}~, 
Satoshi Tsujimoto\textsuperscript{$3$}\footnote{E-mail: tsujimoto.satoshi.5s@kyoto-u.jp}~,\\ 
Luc Vinet\textsuperscript{$2,4$}\footnote{E-mail: luc.vinet@umontreal.ca}~
\vspace{0.2cm}, 
Alexei Zhedanov\textsuperscript{$5$}\footnote{E-mail: zhedanov@yahoo.com}\vspace{0.2cm}\\
\textsuperscript{$1$}
\small Institut Denis-Poisson CNRS/UMR 7013 - Universit\'e de Tours \\
\small Parc de Grammont, 37200 Tours, France.\vspace{0.2cm}\\
\textsuperscript{$2$}
\small Centre de Recherches Math\'ematiques, Universit\'e de Montr\'eal, P.O. Box 6128, \\
\small Centre-ville Station, Montr\'eal (Qu\'ebec), H3C 3J7, Canada.\vspace{0.2cm}\\
\textsuperscript{$3$}
\small Graduate School of Informatics, Kyoto University,
Yoshida-Honmachi, Kyoto, Japan 606-8501.\vspace{0.2cm}\\
\textsuperscript{$4$}
\small IVADO, Montr\'eal (Qu\'ebec), H2S 3H1, Canada.\vspace{0.2cm} \\
\textsuperscript{$5$}
\small
Leonhard Euler International Mathematical Institute, Saint Petersburg, Russian Federation.
}

\date{May 2025}

\begin{document}

\maketitle

\begin{center}
\begin{minipage}{12cm}
\begin{center}
{\bf Abstract}\\
\end{center}  
The rank two Jacobi algebra $\mathfrak{J}_2$ is identified as the dynamical algebra of the generic quadratic superintegrable model on the two-sphere. The physical representation of this algebra is obtained from its embedding in $\mathfrak{su}(1,1)^{\otimes 3}$. The exact solution of the model is derived algebraically from this representation. The wavefunctions are found to be expressed in terms of two-variable Jacobi polynomials whose characterization is a by-product of the algebraic treatment of the model. 
\end{minipage}
\end{center}

\medskip
\section{Introduction}
Superintegrable systems \cite{miller2013classical}, like the non-relativistic Hydrogen atom \cite{pauli1926wasserstoffspektrum},  are the hallmark of symmetries. Apart from their intrinsic interest as models, they offer remarkable laboratories for the study of algebraic structures poised to describe invariance properties. A quantum model in $d$ dimensions is maximally superintegrable if it possesses $2d-1$ algebraically independent constants of motion $H_k, \; k=1,\dots 2d-1$, including its Hamiltonian $H$ such that 
\begin{equation}
    [H,H_k]=0, \text{ for all}\ k.
\end{equation}
These operators generate the \textit{symmetry algebra} of the system which is necessarily non-abelian. The dimensions of its unitary irreducible representations account for the degeneracies of the energy levels. One calls \textit{dynamical} an algebra that contains the symmetry algebra as a subalgebra and that has the span of the whole set of the energy eigenstates of the Hamiltonian as an irreducible (typically infinite dimensional) module \cite{bohm1988dynamical}. Clearly, this dynamical or spectrum-generating algebra has generators that do not commute with the Hamiltonian. The simplest example of such a dynamical algebra is the Heisenberg one that allows to connect any state vector of the harmonic oscillator to another through the repeated application of the annihilation or creation operators. For the bound state sector of the Hydrogen atom, the symmetry algebra is $\mathfrak{so}(4)$ \cite{baym2018lectures}, and the dynamical one is the conformal algebra $\mathfrak{so}(4,2)$ \cite{barut1971so}, \cite{d1985spectrum}.

The so-called generic superintegrable model on the sphere $S^n$ has been the object of much attention (see for instance \cite{kalnins2007wilson},\cite{kalnins2011two}, \cite{kalnins2013contractions}, \cite{GVZ2014},\cite{genest2014racah}, \cite{genest2014generic}, \cite{post2015racah}, \cite{miller2015quasi}, \cite{Iliev17}, \cite{de2017higher}, \cite{Iliev18}, \cite{post2024racah}). While the Racah algebras of rank $n-1$ have been identified as symmetry algebras, and hidden Lie algebra structures underlying exact solvability have been exhibited in terms of $\mathfrak{gl}_n$-type algebras \cite{TurbinerMiller2014}, \cite{miller2015quasi}, the dynamical algebra of these systems—namely an algebra generating the full spectrum and organizing both differential and recurrence structures—has, to the best of our knowledge, not been identified. We shall focus on the 2-dimensional system which leads to all the other second order non-relativistic superintegrable models on spaces with positive signature through specializations and contractions \cite{kalnins2013contractions}, and add this important missing element to its description. This will feature the rank two Jacobi algebra \cite{crampe2025rank}, \cite{crampe2026algebraic}, in the dynamical algebra role with the superintegrable model providing a natural framework for the construction of its physically relevant representation. It will be seen that the identification of this dynamical algebra allows for an algebraic solution of the associated Schr\"odinger equation from the knowledge of its representation. Moreover, this study will be an occasion to discuss and further explore algebras of Askey--Wilson types. 
The explicit definition of the model follows.

\subsection{The superintegrable system on the two-sphere}\label{sec:GenericSystem}

The generic second-order quantum superintegrable systems on the two-sphere \cite{kalnins2007wilson} is governed by the following Hamiltonian 
\begin{equation}
\label{eq:Hamiltonian}
\mathcal{H}=\Hat{\mathcal{J}}_1^2+\Hat{\mathcal{J}}_2^2+\Hat{\mathcal{J}}_3^2+\frac{a^2-\frac{1}{4}}{x_1^2}+\frac{b^2-\frac{1}{4}}{x_2^2}+\frac{c^2-\frac{1}{4}}{x_3^2},
\end{equation}
where $a, b, c$ are real parameters and the unit two-sphere is embedded in $\mathbb{R}^3$ with coordinates $x_1, x_2, x_3$ by imposing $x_1^2+x_2^2+x_3^2=1$. Using the manifest symmetry under the permutations of these three coordinates, we restrict to the octant with $x_1,x_2,x_3>0$.
The symbols $\Hat{\mathcal{J}}_k$, with $k=1,2,3$ denote the standard angular momentum operators:
\begin{equation}
\label{eq:angular-momentum-generators}
\Hat{\mathcal{J}}_k=-i \epsilon _{klm} x_l\partial_m,
\end{equation}
where $\partial_k$ denotes the partial derivative with respect to $x_k$, the sum over repeated indices taking the values 1,2 and 3 is understood and $\epsilon_{klm}$ is the Levi-Civita antisymmetric tensor (with $\epsilon _{123} = 1$). 
As already mentioned all second-order superintegrable models in two dimensions can be obtained as limits of this generic 3-parameter system  \cite{kalnins2013contractions}. 

\subsection{Outline}
We shall explain in the following that this model has a degenerate spectrum and is indeed maximally superintegrable. This will be done in the next section by reviewing results \cite{GVZ2014} establishing that the Hamiltonian $\mathcal{H}$ can be obtained as the total Casimir element in a differential realization of the three-fold tensor product of $\mathfrak{su}(1,1)$. We shall see in Section \ref{sec:3} that this framework is conducive to the identification of the constants of motion and of operators that connect all the energy levels thus providing the generators of the symmetry and dynamical algebras. The $\mathfrak{su}(1,1)^{\otimes 3}$ framework will allow to obtain straightforwardly the structure relations of the dynamical algebra and to recover those of its symmetry subalgebra. These algebras will be respectively recognized to be the rank two Jacobi algebra $\mathfrak{J}_2$\cite{crampe2025rank}, \cite{crampe2026algebraic} and the rank one Racah algebra $\mathfrak{R}_1$ \cite{kalnins2007wilson},\cite{GVZ2014}, \cite{Iliev18}. The embedding of $\mathfrak{J}_2$ in $\mathfrak{su}(1,1)^{\otimes 3}$ will be used in Sections \ref{sec:4} and \ref{sec:5} to construct the physically relevant representation over the $\mathfrak{su}(1,1)$-module made out of three positive discrete series. The representation basis will be given by the eigenstates of two generators that correspond to the Hamiltonian and one constant of motion in the differential realization. The restriction to the unit two-sphere will be performed by imposing a constraint $\textit{à la}$ Dirac. All this will require calling upon the contiguity relations of families of orthogonal polynomials and in particular those entering in the definition of the Clebsch--Gordan and Racah coefficients of $\mathfrak{su}(1,1)$. Equipped with this knowledge, it will then be explained in Section \ref{sec:eigenstateH}, how the exact solution of the model can be obtained in a purely algebraic fashion by using the representation of $\mathfrak{J}_2$. This will feature the two-variable Jacobi polynomials in terms of which the wavefunctions will be expressed. The differential equations, recurrence and structure relations of these orthogonal polynomials will be found as a by-product. A summary and an outlook will form the Conclusion. Two appendices will complete the paper: Appendix \ref{app:A} will collect the contiguity relations of certain families of orthogonal polynomials and the complete list of the defining relations of the rank two Jacobi algebra is the content of Appendix \ref{sec:Rank2Jacobi}.


\section{The generic superintegrable model and $\mathfrak{su}(1,1)$}

As indicated in the introduction the non-compact Lie algebra $\mathfrak{su}(1,1)$ will play a central role in the present study. We shall now recall how the Hamiltonian $\mathcal{H}$ can be identified as the total Casimir element in the tensor product of three positive discrete series representations of $\mathfrak{su}(1,1)$ as first shown in \cite{GVZ2014}. This will permit to readily determine the spectrum of $\mathcal{H}$ and to see that it is degenerate.

\subsection{Positive discrete series representations of $\mathfrak{su}(1,1)$ }\label{sec:Positive discrete series}

The Lie algebra $\mathfrak{su}(1,1)$ is generated by three elements $J_0,$ $J_{\pm}$ satisfying the commutation relations 
\begin{equation}
[J_0,J_{\pm}]=\pm J_{\pm}, \quad [J_+,J_-]=-2J_0    .\label{su11cr}
\end{equation}
The Casimir operator, which generates the center of its universal enveloping algebra, is given by
\begin{equation}
    C=J_0^2-J_0-J_+J_-. 
\end{equation}

The positive discrete series representations of $\mathfrak{su}(1,1)$  with real parameter $\nu>0$ acting on $\ket{\nu;n}$ for $n\in \N$ is defined by
\begin{align}
J_0 \ket{\nu;n}&=(\nu + n)\ket{\nu;n}, \\
J_+ \ket{\nu;n}&=\ \ket{\nu ;n+1},\\
J_-\ket{\nu;n}&=n(n+2\nu-1)\ket{\nu;n-1}.
\end{align}
The basis $\ket{\nu;n}$ is taken to be orthogonal
with 

\begin{equation}
    \langle \nu,m |\nu, n\rangle = n!(2\nu)_n \delta_{mn}, \label{orthbasis}
\end{equation}
making the above representation unitary. The Casimir operator $C$ has the eigenvalue $\nu(\nu-1)$ in these positive discrete series representations. 

\subsection{Three-fold tensor product}
Consider now  the algebra $\mathfrak{su}(1,1)^{\otimes 3}$ and define for any $T\in \mathfrak{su}(1,1)$ the embeddings
\begin{equation}
    T^{(1)}=T\otimes 1\otimes 1,\quad T^{(2)}=1\otimes T\otimes 1,\quad T^{(3)}=1\otimes 1\otimes T.
\end{equation}
These three embeddings produce a fourth one, called the diagonal embedding, which is defined as follows
\begin{equation}
    T^{(123)}=T^{(1)}+T^{(2)}+T^{(3)}.
\end{equation}
The total Casimir element $C^{(123)}={J_0^{(123)}}^2-J_0^{(123)}-J_+^{(123)}J_-^{(123)}$ is seen to have the following expression
\begin{equation}
    C^{(123)}=C^{(12)}+C^{(23)}+C^{(13)}-C^{(1)}-C^{(2)}-C^{(3)}, \label{c123cij}
\end{equation}
where $C^{(i)}={J_0^{(i)}}^2-J_0^{(i)}-J_+^{(i)}J_-^{(i)}$ are the individual Casimir operators and $C^{(ij)}$ are the intermediate Casimir operators given by
\begin{equation}
    C^{(ij)}=2J_0^{(i)}J_0^{(j)}-(J_+^{(i)}J_-^{(j)}+J_+^{(j)}J_-^{(i)})+C^{(i)}+C^{(j)}.
\end{equation}
The total Casimir operator $C^{(123)}$ commutes with all the intermediate Casimir operators $C^{(ij)}$ as well as with all the individual Casimir operators $C^{(i)}$. However the intermediate Casimir operators do not commutes with each other.

\subsection{The Hamiltonian $\mathcal{H}$ as a $\mathfrak{su}(1,1)$ Casimir element}
The connection between the triple tensor product of positive discrete series representations of $\mathfrak{su}(1,1)$ and the generic 3-parameter superintegrable system was established in \cite{GVZ2014} using a diﬀerential realization of $\mathfrak{su}(1,1)$. This relies on the fact that the one-dimensional singular oscillator has $\mathfrak{su}(1,1)$ as dynamical algebra. In the variable $x_i$ (with $i$ meaningless for the moment), the Hamiltonian of this oscillator system reads

\begin{equation}
    \mathcal{J}_0^{(i)}=\frac{1}{4}\left(-\partial_{i}^2+x_i^2+\frac{4\nu_i(\nu_i-1)+\frac{3}{4}}{x_i^2}\right).\label{J0}
\end{equation}
Adjoin to $\mathcal{J}_0^{(i)}$ the two operators $\mathcal{J}_{\pm}^{(i)}$ given by
\begin{equation}
   \mathcal{J}_{\pm}^{(i)}=\frac{1}{4}\left(\partial_{i}^2\mp2x_i\partial_{i}+(x_i^2\mp1)-\frac{4\nu_i(\nu_i-1)+\frac{3}{4}}{x_i^2}\right). \label{j+-}
\end{equation}
It is readily checked that $\mathcal{J}_0^{(i)}$ and $\mathcal{J}_{\pm}^{(i)}$ provide a realization of the $\mathfrak{su}(1,1)$ commutation relations \eqref{su11cr}. Owing to the fact that $\mathcal{J}_0^{(i)}$ can be factorized as follows
\begin{equation}
    \mathcal{J}_0^{(i)} = [\mathcal{A}_{\nu}^{(i)}]^\dagger \mathcal{A}_{\nu}^{(i)} + \nu _i,
\end{equation}
with
\begin{equation}
    \mathcal{A}_{\nu}^{(i)}=\frac{1}{2}\big(-\partial_i + x_i - \frac{2\nu _i - 1/2}{x_i}\big),
\end{equation}
this Hamiltonian is bounded below (and in fact positive for $\nu _i \in \mathbb{R}^+$).
The Casimir operator $\mathcal{C}^{(i)}$ is found to be $\nu_i(\nu_i-1) \mathbb{I}$.
This shows that the realization \eqref{J0}, \eqref{j+-} corresponds to a positive discrete series representation with parameter $\nu_i$. 
The sprectrum of $\mathcal{J}_0^{(i)}$ is hence given by $\nu _i + n, n \in \mathbb{N}$.

Let us now take $i=1,2,3$. Consider then the sum of the corresponding three singular oscillators and the $\mathfrak{su}(1,1)$ representation associated to the diagonal embedding in $\mathfrak{su}(1,1)^{\otimes 3}$ with the variable $x_i$ attached to the $i^{th}$ factor. Choose the $\nu_i$ as follows
\begin{equation}\label{eq:Identification Nu abc}
    \nu_1=\frac{a+1}{2},\ \nu_2=\frac{b+1}{2}\  \text{and}\ \nu_3=\frac{c+1}{2}.
\end{equation}

\begin{notation} 
We shall use the following notations in the rest of the paper:
\begin{enumerate}

    \item 
In keeping with subsection 2.2, we shall write 
\begin{equation}
    \nu_{ij}=\nu_i+\nu_j,\ \  \nu_{123}=\nu_{1}+\nu_2+\nu_3.
\end{equation}

\item 

Furthermore, we shall designate by a standard straight letter the elements of an algebra or their abstract representation and by calligraphic letters their differential realizations as the case may be. 
\end{enumerate}
\end{notation}

Upon using \eqref{J0}, \eqref{j+-}, the full Casimir operator $\cC^{(123)}$ and the intermediate Casimir operators $\cC^{(ij)}$ have the expressions 
\begin{align}
\cC^{(123)}&=\frac{1}{4}\left(\Hat{\mathcal{J}}_1^2+\Hat{\mathcal{J}}_2^2+\Hat{\mathcal{J}}_3^2+(x_1^2+x_2^2+x_3^2)\left(\frac{a^2-\frac{1}{4}}{x_1^2}+\frac{b^2-\frac{1}{4}}{x_2^2}+\frac{c^2-\frac{1}{4}}{x_3^2}\right)-\frac{3}{4}\right),\label{eq:decompositionCasimir}\\
\cC^{(ij)}&=\frac{1}{4}\left(\Hat{\mathcal{J}}_k^2+\frac{(4\nu_i(\nu_i-1)+\frac{3}{4})x_j^2}{x_i^2}+\frac{(4\nu_j(\nu_j-1)+\frac{3}{4})x_i^2}{x_j^2}+4\nu_i(\nu_i-1)+4\nu_j(\nu_j-1)+\frac{1}{2}\right),
\end{align}
where $\Hat{\mathcal{J}}_i$ are the angular momentum operators \eqref{eq:angular-momentum-generators} and the indices are such that $\epsilon_{ijk}\neq 0$.  As expected from \eqref{c123cij}, we check that 
\begin{equation}
    \cC^{(123)}=\cC^{(23)} +\cC^{(13)} +\cC^{(12)} - \cC^{(1)}- \cC^{(2)}- \cC^{(3)}.\label{realc123cij}
\end{equation}
Consider the following element of $\mathfrak{su}(1,1)^{\otimes 3}$
\begin{equation}
X^{(i)}=2J_0^{(i)}+J_{+}^{(i)}+J_{-}^{(i)}, \qquad i=1,2,3.
\end{equation}
In the realization \eqref{J0}, \eqref{j+-}, $X^{(i)}$ is represented by
\begin{equation}
    \mathcal{X}^{(i)}=x_i^2.
\end{equation}
Introduce $X^{(123)}=X^{(1)}+X^{(2)}+X^{(3)}$ which is realized by
\begin{equation}\label{eq:DefinitionX}
\mathcal{X}^{(123)}=\mathcal{X}^{(1)}+\mathcal{X}^{(2)}+\mathcal{X}^{(3)}=x_1^2+x_2^2+x_3^2 = r^2.
\end{equation}
Clearly, the operator $X^{(123)}$ commutes with all the Casimir operators and can thus be simultaneously diagonalized with those elements. Thus on the eigenspace of $\mathcal{X}^{(123)}$ with eigenvalue $r^2=1$, the Hamiltonian \eqref{eq:Hamiltonian} is immediately seen to be given by
\begin{equation}
    \mathcal{H}=4\cC^{(123)}+\frac{3}{4}, \label{HamandCas}
\end{equation}
that is, up to an affine transformation, by the total Casimir of the diagonal embedding in the three-fold product of the realization \eqref{J0} and $\eqref{j+-}$ of $\mathfrak{su}(1,1)$.

It is known that the irreducible components of the triple tensor product of positive discrete series are discrete series representations themselves with parameter $\nu_{123}+n$ with $n>0$. The eigenvalues of the Casimir operator $\cC^{(123)}$ are hence $(\nu_{123}+n)(\nu_{123}+n-1)$ and the spectrum of the Hamiltonian $\cH$ is given by
\begin{equation}\label{eq:Energy}
    E_n=4(\nu_{123}+n)(\nu_{123}+n-1)+\frac{3}{4}=(2(n+1)+a+b+c)^2-\frac{1}{4},
\end{equation}
for $n$ non negative integers. It exhibits degeneracies that can be explained in the $\mathfrak{su}(1,1)$ framework developed so far. This is the object of the next section that will confirm the superintegrable character of $ \mathcal{H}$.


\section{Symmetry and dynamical algebras \label{sec:3}}

We are now in a position to identify and present the algebras that encode the dynamics of the two-dimensional system with Hamiltonian $\mathcal{H}$. 

\subsection{The rank one Racah algebra as the symmetry algebra}

For $\mathcal{H}$ to  be maximally superintegrable it needs to possess three algebraically independent constants of motion including $\mathcal{H}$. Given that the Hamiltonian is essentially  the Casimir element $\cC^{(123)}$  as per \eqref{HamandCas}, the intermediate Casimir operators $\cC^{(ij)}$ that obviously commute with $\cC^{(123)}$ are all constants of motion and will generate a non-abelian algebra since they do not commute among themselves. In view of the relation \eqref{realc123cij}, we see that these constants of motion are not all independent and we shall take $\cC^{(123)}$ (i.e. $\mathcal{H}$), $\cC^{(12)}$ and $\cC^{(23)}$ as the three independent ones confirming that $\mathcal{H}$ is maximally superintegrable.

It is known that the algebra generated by the intermediate Casimir operators $C^{(ij)}$ is the rank one Racah algebra $\mathfrak{R}_1$ \footnote{Technically, it is a quotient by the relations specifying the values of the Casimir elements (see \cite{CFGPaRL21}) but we shall not be concerned with these nomenclature distinctions here.}. The presentation of the relations of this algebra when these $C^{(ij)}$ are used as generators can be found in  \cite{genest2014racah}, \cite{genest2014generic}, \cite{CFGPaRL21}. Hence, $\mathfrak{R}_1$ is the symmetry algebra of the generic superintegrable model with Hamiltonian $\mathcal{H}$.

It will prove convenient to use here as generators different expressions given by affine transformations of the intermediate Casimir elements so as to fix appropriately the spectra of the operators representing these generators. Let 
\begin{align}
    L_1=&-C^{(23)} +\frac{1}{4} (b+c+2)(b+c),\\
    L_2=&-C^{(13)} +\frac{1}{4} (a+c+2)(a+c), \\
    L_3=&-C^{(12)} +\frac{1}{4} (a+b+2)(a+b). \label{L3}
\end{align}
This is to ensure that the spectra of $L_1, L_2, L_3$ are respectively
\begin{equation}
   spec(L_1)=- k(k+b+c+1), \quad spec(L_2)= -k(k+a+c+1), \quad spec(L_3)= -k(k+a+b+1), \quad \text{with } k=0,1,\ldots
\end{equation}
Take moreover $L$ to be given as follows in terms of $C^{(123)}$:
\begin{equation}
    L=-C^{(123)}+\frac{1}{4}(a+b+c+3)(a+b+c+1),
\end{equation}
so that 
\begin{equation}
    spec(L)=-n(n+a+b+c+2) \quad \text{with} \quad n=0,1,\dots \label{specL}
\end{equation}
In light of \eqref{c123cij}, we observe that
\begin{equation}
    L=L_1+L_2+L_3.
\end{equation}

Equivalent to the choice of $C^{(123)}$, $C^{(12)}$ and $C^{(23)}$ as (abstract) generators of the symmetry algebra of $\cH$, known to be $\mathfrak{R}_1$, we shall adopt in this role the elements $L, L_1, L_3$. The commutation relations between these generators can be straightforwardly computed from their expressions in $\mathfrak{su}(1,1)^{\otimes3}$. One arrives at the following presentation of the defining relations of the (centrally extended) rank one Racah algebra with $L_1$ and $L_3$ as generators and $L$ as central element:
\begin{subequations}\label{Ract}
  \begin{align}
    [[L_1,L_3],L_1]=\; &2L_1^2 + 2\{L_1,L_3\}+ \xi L_1+ \eta_1 L_3+ \zeta_1, \label{Rac1}\\
    [L_3,[L_1,L_3]]=\; &2L_3^2 + 2\{L_1,L_3\} + \xi L_3+ \eta_2 L_1+ \zeta_2,\label{Rac2}
\end{align}  
\end{subequations}
where $\{.,.\}$ stands for the anticommutator and 
\begin{align}
    \xi=& \;-(b+c)(b+1)-(b-c)(a+1)-2L=-(a+b)(b+1)-(b-a)(c+1)-2L, \label{xiJac}\\
    \eta _1=& \;-(b+c)(b+c+2),\qquad
    \eta _2= \;-(a+b)(a+b+2), \label{eta2Jac}\\
    \zeta _1=&\;(b+c)(b+1)L, \hspace{1.5cm}
    \zeta _2=\;(a+b)(b+1)L \label{zeta2Jac}.  
\end{align}

\begin{rema}
If one introduces the following rescaling and shifts of the intermediate Casimirs:
\begin{equation}
\mathcal{S}_k=4\cC^{(ij)}-4\nu_i(\nu_i-1)-4\nu_j(\nu_j-1)-\frac{1}{2},
\end{equation}
where the indices are again such that $\epsilon_{ijk}\neq 0$, using the  differential realization of $\mathfrak{su}(1,1)$, we obtain the following expressions for $\mathcal{S}_k$
 using the identification \eqref{eq:Identification Nu abc}:
\begin{equation}
\label{eq:symmetry}
\mathcal{S}_1=\hat{\mathcal{J}}_1^2+\frac{(b^2-\frac{1}{4})x_3^2}{x_2^2}+\frac{(c^2-\frac{1}{4})x_2^2}{x_3^2},\quad \mathcal{S}_2=\hat{\mathcal{J}}_2^2+\frac{(c^2-\frac{1}{4})x_1^2}{x_3^2}+\frac{(a^2-\frac{1}{4})x_3^2}{x_1^2},\quad
\mathcal{S}_3=\hat{\mathcal{J}}_3^2+\frac{(a^2-\frac{1}{4})x_2^2}{x_1^2}+\frac{(b^2-\frac{1}{4})x_1^2}{x_2^2}
\end{equation}
which are the usual expressions of the integrals of motion of $\mathcal{H}$ taken in \cite{kalnins2007wilson}, \cite{genest2014generic}, \cite{genest2014racah}, \cite{KMP96}, \cite{KMP2000}.
The following formula showing that the Hamiltonian 
\eqref{eq:decompositionCasimir} can be decomposed in terms of the symmetry operators according to
\begin{equation}
    \cH=\mathcal{S}_1+\mathcal{S}_2+\mathcal{S}_3+a^2+b^2+c^2-\frac{3}{4},
\end{equation}
is just a rewriting of \eqref{c123cij}.    
\end{rema}

\subsection{The rank two Jacobi algebra as the dynamical algebra}

Recall that the operator $\mathcal{L}$ representing $L$ in the differential realization is linearly related to the Hamiltonian $\cH$. We observed that the operators $\mathcal{L}_1$ and $\mathcal{L}_3$ commute with $\mathcal{L}$ and realize its symmetry algebra identified as $\mathfrak{R}_1$. We shall now introduce the dynamical or spectrum-generating algebra of $\mathcal{H}$ that will have $\mathfrak{R}_1$ as a subalgebra and will connect irreducibly all the states vectors of the model. This will be done by adjoining to $\mathcal{L}$, $\mathcal{L}_1$ and $\mathcal{L}_3$, the operators $\mathcal{X}^{(1)} = x_1^2$ and $\mathcal{X}^{(3)} = x_3^2$.

We thus claim that a dynamical algebra is generated by the following set of $\mathfrak{su}(1,1)^{\otimes3}$ elements 
\begin{equation}
    \{L, L_1, L_3, X_1 \equiv X^{(1)}, X_3 \equiv X^{(3)} \},
\end{equation}
where 
\begin{equation}
  X_1=\left[2J_0+J_{+}+J_{-}\right] \otimes 1\otimes 1,  \qquad X_3 = 1 \otimes 1\otimes \left[2J_0+J_{+}+J_{-}\right].
\end{equation}
The commutation relations of these generators can be obtained from the $\mathfrak{su}(1,1)$ product and the results are given in Appendix \ref{sec:Rank2Jacobi}. The algebra thus defined is recognized to be the rank two Jacobi algebra $\mathfrak{J}_2$ introduced in \cite{crampe2025rank}.

We shall aim to construct the physical representation of $\mathfrak{J}_2$ by exploiting the $\mathfrak{su}(1,1)$ framework that has been developed. This will entail determining the actions of $L$, $L_1$, $L_3$, $X_1$, and $X_3$ on the basis vectors of irreducible submodules of the triple tensor product of $\mathfrak{su}(1,1)$ positive discrete series representations. In this way, we shall see that all state vectors of the model can be related to one another through the repeated actions of the operators $L$, $L_1$, $L_3$, $X_1$, and $X_3$, which hence generate a dynamical algebra. The next step is hence to characterize the appropriate representation bases.

\section{The physical states and a representation basis of the dynamical algebra $\mathfrak{J}_2$ \label{sec:4}}

The natural quantum numbers to describe the states of the superintegrable model under consideration are the integer $n$ and $k$ that are provided by the eigenvalues of the Hamiltonian $\mathcal{H}$ (equivalently $\mathcal{L}$) and those of one constant of motion say $\mathcal{S}_1$ (or equivalently $\mathcal{L}_1$) in the notation of the last section. In order to account for the parameters of the model, we shall denote the corresponding state vectors by $|n,k;a,b,c\rangle$. It is natural to use the $\mathfrak{su}(1,1)$ module that was introduced to proceed. Given the connections established between the physical quantities and the Casimir elements, we know that the physical states will be found by recoupling the tensor product vectors that are characterized by the eigenvalues of the three algebra elements $J_0^{(i)}$, in addition to initial representation labels $\nu _i$ provided by the individual Casimir operators $C^{(i)}$. The vectors of relevance are those that are eigenstates of $C^{(123)}$ and $C^{(23)}$. A quick count shows that a label is missing to identify new basis vectors for the $\mathfrak{su}(1,1)$-module. The appropriate additional operator to diagonalize is $X^{(123)}$ which manifestly commutes with $C^{(123)}$ and $C^{(23)}$. This will lead us to use Dirac's approach to quantum mechanics with constraints. The physical space of states with support on the two-sphere, will thus be obtained through a canonical identification with the distributional states selected by the corresponding constraint on the $\mathfrak{su}(1,1)$ module. As we shall see in the next section, this auxiliary space will also play a key role in the determination of the action of the dynamical algebra generators on the physical space.

\subsection{A continuous $\mathfrak{su}(1,1)$ representation basis}\label{sec:eigenstate}

The aim of this subsection is to provide a description of the joint eigenstates of $C^{(123)}$, $C^{(23)}$ and $X^{(123)}$ in the abstract presentation of the positive discrete series representations. 
The  eigenvalue $(\nu_{123}+n)(\nu_{123}+n-1)$ of the total Casimir operator $C^{(123)}$ is known to have degeneracies of multiplicity $n+1$ in the triple tensor product of positive discrete series representations. The use of the intermediate Casimir operator $C^{(23)}$ is classical to resolve these degeneracies and the joint eigenspaces of $C^{(123)}$ and $C^{(23)}$, with respective eigenvalues $(\nu_{123}+n)(\nu_{123}+n-1)$ and $(\nu_{23}+k)(\nu_{23}+k-1)$, are known to be positive discrete series with parameters $\nu_{123}+n$. 

As announced at the beginning of this section, we first want to obtain the joint eigenstates, denoted $\;$ $\ket{r,n,k;a,b,c}$, of $C^{(123)}$, $C^{(23)}$ and $X^{(123)}$ in the triple tensor product of positive discrete series. We will see that there is only one such eigenvector for each triple $(r,n,k)$. The vectors $\ket{r,n,k;a,b,c}$  will be described in terms of the abstract basis $\ket{\nu_1;i}\ket{\nu_2;j}\ket{\nu_3;\ell}$ using the Clebsch--Gordan coefficients. 

\subsubsection{The Clebsch--Gordan coefficients $CG_{m,i,j}^{\nu_1,\nu_2}$}

Let us start with recalling the definition of the Clebsch--Gordan coefficients $CG_{m,i,j}^{\nu_1,\nu_2}$ for the tensor product of two representations of the positive discrete series of $\mathfrak{su}(1,1)$ with parameters $\nu_i$, $i=1,2$. Explicitly, they are defined by the change of basis
\begin{align}\label{eq:DefCG}
\ket{\nu_{12}+i;m}&=\sum_{j=0}^{m+i}CG_{m,i,j}^{\nu_1,\nu_2}\ket{\nu_1;j}\ket{\nu_2;m+i-j},
\end{align}
here the vector $\ket{\nu_{12}+i;m}$ is a joint eigenvector of $J_0^{(12)}$ and $C^{(12)}$.
The Clebsch--Gordan coefficients $CG_{m,i,j}^{\nu_1,\nu_2}$ are given by 
\begin{align}
\label{eq:clebsh-gordan-dfn}
CG_{m,i,j}^{\nu_1,\nu_2}&=\frac{(2\nu_1)_i}{i!}\binom{m+i}{j} p_i(j;2\nu_1-1,2\nu_2-1,i+m),
\end{align}
 where $p_i$ denotes the Hahn polynomial; its definition and properties are recalled in Appendix \ref{sec:Hahn-pols}.

 \begin{rema}
The Clebsch--Gordan coefficients are usually defined for an orthonormal basis of the discrete series representation. 
     In terms of our $CG^{\nu_1,\nu_2}_{m,i,j} $ they are given by 
     \begin{equation}
        \left( \frac{j!i!(m+i-j)!(2\nu_1)_j(2\nu_2)_{m+i-j}}{m!(2\nu_1)_j(2\nu_{12}+2i)_m(2\nu_2)_i(2\nu_{12}+i-1)_i}\right)^{\frac{1}{2}} CG^{\nu_1,\nu_2}_{m,i,j}.
     \end{equation}
 \end{rema} 

 \subsubsection{Eigenvectors of $X$} \label{subsection:eigX}

Consider the description of the eigenvectors of the operator
\begin{equation}
X=2J_0+(J_++J_-),
\end{equation}
in one copy of a positive discrete series representation acting on $\ket{\nu;i}$ as described in Section \ref{sec:Positive discrete series}. 
The spectrum of $X$ is known to be the half-line $\R^+$\cite{klimyk1995representations}. Take 
the associated generalized eigenvectors of the form, for $r\in \R^+$,
\begin{equation}\label{eq:DefEigenvectors}
    \ket{\nu;r}=\sum_{i=0}^\infty q^{(\nu)}_i(r) \ket{\nu;i}.
\end{equation}
Imposing that $X|\nu;r\rangle=r|\nu;r\rangle$, one finds that the functions $q^{(\nu)}_i(r)$ must satisfy:
\begin{equation}\label{eq:recurrence_qn}
r q^{(\nu)}_i(r)=(i+2\nu)(i+1)q^{(\nu)}_{i+1}(r)+2(\nu+i)q^{(\nu)}_i(r)+ q^{(\nu)}_{i-1}(r).  
\end{equation}
From the recurrence relation of the Laguerre polynomials \eqref{eq:LaguerreRecurrence}, we see that 
\begin{equation} \label{fnq}
     q^{(\nu)}_i(r)=\frac{(-1)^i}{(2\nu)_i }f^{(\nu)}(r)L_i^{(2\nu-1)}(r),
\end{equation}
where $f^{(\nu)}(r)$ is a function that will be determined by the normalization.

Notice that, for each $r\in \R^+$, there is only one generalized eigenvector for $X$.

Regarding the orthogonality and norm of the generalized eigenvectors $\ket{\nu;r}$ thus defined, bearing in mind \eqref{orthbasis}, we have
\begin{align}
    \langle \nu;s\ket{\nu;r}=\sum_{i,j=0}^\infty q_i^{(\nu)}(r)q_j^{(\nu)}(s)\langle\nu;i\ket{\nu;j}=f^{(\nu)}(r)f^{(\nu)}(s)\sum_{i=0}^\infty L_i^{(2\nu-1)}(r)L_i^{(2\nu-1)}(s)\frac{i!}{(2\nu)_i}.
\end{align}
This sum is infinite and has to be understood as a distribution kernel on the space $L^2(\R^+,e^{-r}r^{2\nu-1}~dr)$.
From the resolution of the identity formula \eqref{resid} of the Laguerre polynomials, we find that
\begin{equation}
    \langle \nu;s\ket{\nu;r}=\frac{\left[f^{(\nu)}(r)\right]^2}{e^{-r}r^{2\nu-1}}\Gamma(2\nu) \delta(r-s).
\end{equation}
By choosing 
\begin{equation}
    f^{(\nu)}(r)=\frac{e^{-\frac{r}{2}}r^{\nu-\frac{1}{2}}}{\Gamma(2\nu)^{\frac{1}{2}}},\label{fnf}
\end{equation}
we ensure that $\bra {\nu;s}\ket{\nu;r}=\delta(r-s).$

\subsubsection{The generalized vectors $\ket{r,n,k;a,b,c}$}

These vectors are given in the following proposition.

\begin{prop}\label{prop:vectorR,n,k}
The joint eigenvectors $\ket{r,n,k;a,b,c}$ of $C^{(123)}$, $C^{(23)}$ and $X^{(123)}$ with respective eigenvalues 
$(\nu_{123}+n)(\nu_{123}+n-1)$, $(\nu_{23}+k)(\nu_{23}+k-1)$ and $r$ have the expression
    \begin{multline}
    \ket{r,n,k;a,b,c}= \\
    \frac{1}{c_{n,k,r}}\sum_{m= 0}^{\infty} \sum_{j=0}^{m+n-k} \sum_{\ell=0}^{k+j}q_m^{(\nu_{123}+n)}(r) CG_{m,n-k,m+n-k-j}^{\nu_1,\nu_{23}+k} CG_{j,k,k+j-\ell}^{\nu_2,\nu_3}\ket{\nu_1;m+n-k-j}\ket{\nu_2;k+j-\ell} \ket{\nu_3;\ell}, \label{eq:vijtripleTensor}
\end{multline}
 where $c_{n,k,r}$ is a normalization constant to be determined. We recall that 
 \begin{equation}
  \nu_1=\frac{a+1}{2},\qquad
  \nu_2=\frac{b+1}{2},\qquad
  \nu_3=\frac{c+1}{2},\qquad
  \nu_{23}=\nu_2+\nu_3,\qquad
  \nu_{123}=\nu_1+\nu_2+\nu_3. \label{nuabc}
\end{equation}

\end{prop}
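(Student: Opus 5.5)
The proof will build the vector in two recoupling stages and then diagonalize $X^{(123)}$ inside a single irreducible component.

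\textbf{Coupling factors 2 and 3.} By \eqref{eq:DefCG} applied to the second and third factors, with $(\nu_1,\nu_2)\to(\nu_2,\nu_3)$, $i\to k$, $m\to j$ and summation index $k+j-\ell$, the vector
\[
\ket{\nu_{23}+k;j}=\sum_{\ell=0}^{k+j}CG^{\nu_2,\nu_3}_{j,k,k+j-\ell}\ket{\nu_2;k+j-\ell}\ket{\nu_3;\ell}
\]
is a joint eigenvector of $C^{(23)}$, with eigenvalue $(\nu_{23}+k)(\nu_{23}+k-1)$, and of $J_0^{(23)}$. Moreover, as $j$ varies, these vectors carry a positive discrete series representation of parameter $\nu_{23}+k$ for the diagonal generators $J^{(23)}_{0,\pm}$, normalized as in Section~\ref{sec:Positive discrete series}. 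This normalization is what makes \eqref{eq:clebsh-gordan-dfn} hold: $J_+$ acts without coefficient.

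\textbf{Coupling with factor 1.} We now couple factor $1$ (parameter $\nu_1$) with this module (parameter $\nu_{23}+k$), using \eqref{eq:DefCG} with $i=n-k\ge0$. This gives
\[
\ket{\nu_{123}+n;m}=\sum_{j}CG^{\nu_1,\nu_{23}+k}_{m,n-k,m+n-k-j}\ket{\nu_1;m+n-k-j}\ket{\nu_{23}+k;j}.
\]
Here the index in factor $1$ is $m+n-k-j$, so the second index is $j$, and summing over $j$ from $0$ to $m+n-k$ matches the range in \eqref{eq:vijtripleTensor}. These vectors are eigenvectors of $C^{(123)}$ with eigenvalue $(\nu_{123}+n)(\nu_{123}+n-1)$. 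They are still eigenvectors of $C^{(23)}$, since $C^{(23)}$ commutes with $J^{(1)}$ and $J^{(23)}$ and we stayed inside its eigenspace. They form, in $m$, a positive discrete series basis of parameter $\nu_{123}+n$ for $J^{(123)}_{0,\pm}$.

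\textbf{Diagonalizing $X^{(123)}$.} Since $X^{(123)}=2J^{(123)}_0+J^{(123)}_++J^{(123)}_-$ is built from the diagonal generators, it preserves this irreducible component. On it, $X^{(123)}$ acts exactly as $X$ in Section~\ref{subsection:eigX} with $\nu=\nu_{123}+n$. By \eqref{eq:DefEigenvectors}--\eqref{fnf}, the generalized eigenvector with eigenvalue $r$ is therefore
\[
\sum_m q^{(\nu_{123}+n)}_m(r)\ket{\nu_{123}+n;m}.
\]
Substituting the two expansions above yields \eqref{eq:vijtripleTensor}, with an overall constant $c_{n,k,r}$ that is left free.

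\textbf{Uniqueness.} The joint $(C^{(123)},C^{(23)})$ eigenspace for given $(n,k)$ is exactly one copy of the discrete series of parameter $\nu_{123}+n$. This follows from the multiplicity-free decompositions $\nu_2\otimes\nu_3=\bigoplus_k(\nu_{23}+k)$ and $\nu_1\otimes(\nu_{23}+k)=\bigoplus_{i}(\nu_{123}+k+i)$. Uniqueness of the generalized eigenvector of $X$ in such a module (noted in Section~\ref{subsection:eigX}) then gives uniqueness up to normalization.

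\textbf{Main obstacle.} The delicate point is the bookkeeping in the coupling steps. One must check that the CG change of basis intertwines the diagonal $\mathfrak{su}(1,1)$ action with the unnormalized basis conventions, i.e.\ $J_+\ket{\nu;n}=\ket{\nu;n+1}$ on both sides, so that the coefficients $q_m$ of Section~\ref{subsection:eigX} apply verbatim. I would verify this by applying $J_+^{(123)}$ to \eqref{eq:DefCG} and using the contiguity relations of the Hahn polynomials from Appendix~\ref{app:A}. A further minor point is that $X^{(123)}$ has continuous spectrum, so the eigenvectors are distributional and convergence is understood in the sense of the resolution of identity \eqref{resid}.
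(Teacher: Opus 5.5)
Your proposal is correct and follows the paper's proof: couple factors 2 and 3, then factor 1, to obtain $\ket{\nu_{123}+n,\nu_{23}+k;m}$, recognize this family as a positive discrete series of parameter $\nu_{123}+n$ on which $X^{(123)}$ acts as the operator $X$ of Section~\ref{subsection:eigX}, and substitute. Your uniqueness argument via the multiplicity-free decompositions, and your check that the Clebsch--Gordan vectors satisfy $J_+\ket{\nu;m}=\ket{\nu;m+1}$ (so that the $q_m$ apply verbatim, which indeed follows from the Hahn contiguity relations), make explicit two points the paper takes for granted.
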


\begin{proof}
 
 Consider the vectors $\ket{\nu_{123}+n,\nu_{23}+k;m}$ that are the joint eigenvectors of $C^{(123)}$, $C^{(23)}$ and $J_0^{(123)}$ with respective eigenvalues $(\nu_{123}+n)(\nu_{123}+n-1)$, $(\nu_{23}+k)(\nu_{23}+k-1)$ and $\nu_{123}+n+m$. These can be obtained using the Clebsch--Gordan coefficients and read
\begin{equation}\label{eq:JointC123 and C23}
    \ket{\nu_{123}+n,\nu_{23}+k;m}= \sum_{i=0}^{m+n-k} \sum_{\ell=0}^{i+k} CG_{m,n-k,m+n-k-i}^{\nu_1,\nu_{23}+k} CG_{i,k,i+k-\ell}^{\nu_2,\nu_3}\ket{\nu_1;m+n-k-i}\ket{\nu_2;i+k-\ell} \ket{\nu_3;\ell}.
\end{equation} 
As already mentioned, the joint eigenspaces for $C^{(123)}$ and  $C^{(23)}$ with respective eigenvalues $(\nu_{123}+n)(\nu_{123}+n-1)$ and $(\nu_{23}+k)(\nu_{23}+k-1)$ are known to realize a positive discrete series representation with parameter $\nu_{123}+n$. Thus, making use of the description given in subsection \eqref{eq:DefEigenvectors} of the eigenvectors $\ket{\nu;r}$ of $X$, it is readily seen that the vectors $\ket{r,n,k;a,b,c}$ are expressed as follows:
\begin{equation}\label{eq:DefVij}
\ket{r,n,k;a,b,c}= \frac{1}{c_{n,k,r}}\sum_{m=0}^\infty q_m^{(\nu_{123}+n)}(r)\ket{\nu_{123}+n,\nu_{23}+k;m}.
\end{equation}
where we have introduced a normalization constant $c_{n,k,r}$.
Relations \eqref{eq:DefVij} and \eqref{eq:JointC123 and C23} lead to \eqref{eq:vijtripleTensor}.
\end{proof}

\begin{rema}
An intermediate expression of the vectors $\ket{r,n,k;a,b,c}$ will be required for future use
\begin{equation}
\ket{r,n,k;a,b,c}=  \frac{1}{c_{n,k,r}}\sum_{m\geq 0} \sum_{i=0}^{m+n-k}q_m^{(\nu_{123}+n)}(r) CG_{m,n-k,i}^{\nu_1,\nu_{23}+k} \ket{\nu_1;i} \ket{\nu_{23}+k;m+n-k-i}.\label{eq:vijDoubleTensor}
\end{equation}
It can be obtained from formula \eqref{eq:vijtripleTensor} together with the definition of the Clebsch--Gordan coefficients \eqref{eq:DefCG}.
\end{rema}

\subsubsection{Orthogonality and norms}

We conclude this subsection by providing the orthogonality properties and the distributional norms of the vectors $\ket{r,n,k;a,b,c}$
in the auxiliary Hilbert space where the positive discrete series of
$\mathfrak{su}(1,1)$ are realized.

\begin{prop}\label{prop:norm of r,n,k}
  The generalized vectors $\ket{r,n,k;a,b,c}$ form an orthogonal basis in the sense that
  \begin{equation}\label{eq:norm of r,n,k}
      \langle s,n',k';a,b,c\ket{r,n,k;a,b,c}=M_{n,k,r}
      \delta(s-r)\delta_{n,n'}\delta_{k,k'},
\end{equation}
where the coefficient $M_{n,k,r}$ is given by
\begin{align}
 M_{n,k,r}=&\frac{1}{c_{n,k,r}^2} \frac{\Gamma(2\nu_{123}+2n)}{\Gamma(2\nu_1)\Gamma(2\nu_2)\Gamma(2\nu_3)}\; \nonumber\\
 \times&\left[\frac{\Gamma(2\nu_2+k)\Gamma(2\nu_3+k)\Gamma(2\nu_1+n-k)\Gamma(2\nu_{23}+n+k)}{k!(n-k)!(2\nu_{23}+2k-1)(2\nu_{123}+2n-1)\Gamma(2\nu_{123}+n+k-1)\Gamma(2\nu_{23}+k-1)}\right].
\end{align}
\end{prop}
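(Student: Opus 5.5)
The plan is to use the intermediate form \eqref{eq:DefVij}, $\ket{r,n,k;a,b,c}=c_{n,k,r}^{-1}\sum_m q_m^{(\nu_{123}+n)}(r)\ket{\nu_{123}+n,\nu_{23}+k;m}$. This reduces the computation to two ingredients: the inner products of the recoupled vectors $\ket{\nu_{123}+n,\nu_{23}+k;m}$, and the Laguerre resolution of the identity already used in subsection \ref{subsection:eigX}.

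\textbf{Orthogonality in $(n,k)$ and $m$.} The vectors $\ket{\nu_{123}+n,\nu_{23}+k;m}$ are joint eigenvectors of $C^{(123)}$, $C^{(23)}$ and $J_0^{(123)}$. These operators are self-adjoint for the inner product \eqref{orthbasis} on the triple tensor product, and their eigenvalues separate the labels because $\nu_i>0$. Hence
\begin{equation*}
\langle \nu_{123}+n',\nu_{23}+k';m'\ket{\nu_{123}+n,\nu_{23}+k;m}=N_{n,k}(m)\,\delta_{nn'}\delta_{kk'}\delta_{mm'}.
\end{equation*}
Since $J_-^{(123)}$ commutes with both Casimirs, we have $\ket{\cdot\,;m}\propto (J_+^{(123)})^m\ket{\cdot\,;0}$. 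The $\mathfrak{su}(1,1)$ relations then give $N_{n,k}(m)=m!\,(2\nu_{123}+2n)_m\,N_{n,k}(0)$, exactly as in \eqref{orthbasis}. For this to hold, the normalization of \eqref{eq:DefCG} must be compatible with $J_+$ acting with coefficient $1$; I would check this first from \eqref{eq:clebsh-gordan-dfn} with $m\to m+1$.

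\textbf{Lowest-weight norm.} It remains to compute $N_{n,k}(0)$. I would do this in two stages using \eqref{eq:DefCG}.

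First, the inner vector $\ket{\nu_{23}+k;j}$ in $\nu_2\otimes\nu_3$ has norm
\begin{equation*}
\sum_\ell (CG^{\nu_2,\nu_3}_{j,k,k+j-\ell})^2\,(k+j-\ell)!\,(2\nu_2)_{k+j-\ell}\,\ell!\,(2\nu_3)_\ell .
\end{equation*}
This is a weighted sum of squared Hahn polynomials, and the Hahn orthogonality relation from Appendix \ref{sec:Hahn-pols} evaluates it. Writing $\binom{k+j}{\ell}$ times the Hahn weight should reproduce the measure $\binom{N}{x}(\alpha+1)_x(\beta+1)_{N-x}$.

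Second, the same computation applied to $\nu_1\otimes(\nu_{23}+k)$ at $m=0$ gives the norm of $\ket{\nu_{123}+n,\nu_{23}+k;0}$. Here the norm of $\ket{\nu_{23}+k;j}$ enters as the weight, and by the first stage it equals $j!(2\nu_{23}+2k)_j$ times its $j=0$ value.

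The product of the two Hahn squared norms, expressed with Pochhammer symbols converted to Gamma functions, should produce the bracketed factor in $M_{n,k,r}$. In particular, the factors $(2\nu_{23}+2k-1)$ and $(2\nu_{123}+2n-1)$ with $\Gamma(2\nu_{23}+k-1)$ and $\Gamma(2\nu_{123}+n+k-1)$ are typical of the Hahn norm $h_i$.

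\textbf{Summation over $m$.} Finally,
\begin{equation*}
\langle s,n',k'\ket{r,n,k}=\frac{N_{n,k}(0)}{c_{n,k,r}^2}\,\delta_{nn'}\delta_{kk'}\sum_m q_m^{(\nu)}(r)q_m^{(\nu)}(s)\,m!\,(2\nu)_m,\qquad \nu=\nu_{123}+n.
\end{equation*}
By the computation preceding \eqref{fnf}, with $f^{(\nu)}$ as chosen there, the sum equals $\delta(r-s)$. Strictly, $c_{n,k,r}c_{n,k,s}$ appears, but on the support of the delta this is $c_{n,k,r}^2$.

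Thus $M_{n,k,r}=N_{n,k}(0)/c_{n,k,r}^2$, and the proof reduces to identifying $N_{n,k}(0)$ with the displayed Gamma expression. The main obstacle is this last bookkeeping step. It requires matching the conventions for $p_i$ and its squared norm in the appendix with the prefactors $(2\nu_1)_i/i!$ and $\binom{m+i}{j}$ in \eqref{eq:clebsh-gordan-dfn}. I would first sanity-check the formula in the case $n=k=0$, where everything collapses to $\Gamma(2\nu_{123})/\prod\Gamma(2\nu_i)$ times simple factors.
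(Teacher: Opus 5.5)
Your proposal is correct and follows essentially the paper's route: expand via \eqref{eq:DefVij}, obtain orthogonality in $(n,k)$ from self-adjointness, evaluate the norm of $\ket{\nu_{123}+n,\nu_{23}+k;m}$ by two successive applications of the Hahn orthogonality relation \eqref{eq:orthogonality Hahn}, and finish with the Laguerre resolution of the identity \eqref{resid}. The only difference is that you factor out the $m$-dependence $m!\,(2\nu_{123}+2n)_m$ with the ladder operator $J_+^{(123)}$ before doing the Hahn sums at $m=0$; note that $J_+^{(123)}$, not $J_-^{(123)}$, is the operator you need to commute with the Casimirs. The paper instead carries general $m$ through the Hahn sums and gets the factor $m!\,\Gamma(2\nu_{123}+2n+m)$ directly, and both routes lead to the stated $M_{n,k,r}$.
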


\begin{proof}
The orthogonality in the discrete indices $n$ and $k$ is clear since we are dealing with the eigenvalues of self-adjoint operators.
For conciseness we shall use in the proof the notations:
\begin{equation} \label{nufnuI}
    \nu_f=\nu_{123}+n \qquad \text{and} \qquad\nu_I=\nu_{23}+k. 
\end{equation}
Given the expression \eqref{eq:DefVij} for the generalized vectors we have 
    \begin{equation} \label{orthpty}
    \langle s,n',k';a,b,c\ket{r,n,k;a,b,c}=\frac{1}{c_{n,k,r}^2}\sum_{m=0}^\infty q_m^{(\nu_f)}(s)q_m^{(\nu_f)}(r)\langle\nu_f,\nu_I;m\ket{ \nu_f,\nu_I;m} \delta _{n'n}\delta_{k'k}.
    \end{equation}
Let us first determine the norm squared of $\ket{ \nu_f,\nu_I;m}$.   
    Using equation \eqref{eq:JointC123 and C23} and $\langle \nu;i\ket{\nu;i}=i!(2\nu)_i$ we obtain
    \begin{multline}
        \langle\nu_f,\nu_I;m\ket{ \nu_f,\nu_I;m}=\\  \sum_{i=0}^{m+n-k}\sum_{\ell=0}^{i+k} \left( CG_{m,n-k,m+n-k-i}^{\nu_1,\nu_I} CG_{i,k,i+k-\ell}^{\nu_2,\nu_3}\right)^2\ell!(i+k-\ell)!(m+n-k-i)!(2\nu_2)_{i+k-\ell}(2\nu_3)_\ell(2\nu_1)_{m+n-k-i}.
    \end{multline}
    Combining the definition of Clebsch--Gordan coefficients \eqref{eq:clebsh-gordan-dfn} with the orthogonality relation for Hahn polynomials \eqref{eq:orthogonality Hahn}, we get after simplification
    \begin{equation}
        \sum_{\ell=0}^{i+k}\left( CG_{i,k,i+k-\ell}^{\nu_2,\nu_3}\right)^2\ell!(i+k-\ell)!(2\nu_2)_{i+k-\ell}(2\nu_3)_{\ell}=\frac{(2\nu_3)_k(2\nu_2)_k(k+2\nu_{23}-1)_{i+k+1}i!}{(2\nu_I-1) k!}.
    \end{equation}
    A similar computation leads to
    \begin{equation}
         \langle\nu_f,\nu_I;m\ket{ \nu_f,\nu_I;m}=\frac{(2\nu_3)_k(2\nu_2)_k (2\nu_1)_{n-k}(2\nu_I)_{n-k}(2\nu_{23}+k-1)_{k+1}m!\Gamma(2\nu_f+m)}{k!(n-k)!(2\nu_I-1) (2\nu_f-1)\Gamma(2\nu_{123}+n+k-1)}.
    \end{equation}
   The insertion of this expression in \eqref{orthpty} and the use of the formulas \eqref{fnq} and \eqref{fnf} that define $q_m^{(\nu)}$ give
    \begin{align}
        \langle s,n',k';a,b,c\ket{r,n,k;a,b,c}=&\frac{1}{c_{n,k,r}^2} \frac{(2\nu_3)_k(2\nu_2)_k (2\nu_1)_{n-k}(2\nu_I)_{n-k}(2\nu_{23}+k-1)_{k+1}}{k!(n-k)!(2\nu_I-1) (2\nu_f-1)\Gamma(2\nu_{123}+n+k-1)}\nonumber\\
       \times &e^{-\frac{1}{2}(s+r)}(sr)^{(\nu_f-\frac{1}{2})}\left(\sum_{m=0}^{\infty}L_m^{(2\nu_f-1)}(s)L_m^{(2\nu_f-1)}(r)\frac{m!}{(2\nu_f)_m}\right) \delta_{n'n}\delta_{k'k}.
    \end{align}
    With the help of the resolution of the identity \eqref{resid} with the Laguerre polynomials, one finds:
    \begin{equation}
         \langle s,n',k';a,b,c\ket{r,n,k;a,b,c}=\frac{\Gamma(2\nu_f)}{c_{n,k,r}^2}\frac{(2\nu_3)_k(2\nu_2)_k (2\nu_1)_{n-k}(2\nu_I)_{n-k}(2\nu_{23}+k-1)_{k+1}}{k!(n-k)!(2\nu_I-1) (2\nu_f-1)\Gamma(2\nu_{123}+n+k-1)}\; \delta(s-r)\delta_{n'n}\delta_{k'k}.
    \end{equation}
    Making repeated use of the relation $(\rho)_m=\frac{\Gamma(\rho+m)}{\Gamma(\rho)}$ and after straightforward simplifications, one completes the proof of \ref{prop:norm of r,n,k} keeping in mind the definitions \eqref{nufnuI} of $\nu_f$ and $\nu_I$.

\end{proof}

\subsection{Dirac quantization and a discrete representation basis for $\mathfrak{J}_2$}

In order to describe the quantum model on the two-sphere $S^2$, we impose the constraint
\begin{equation}
    X^{(123)} = 1,
\end{equation}
which amounts, in the differential realization, to $x_1^2 + x_2^2 + x_3^2 = 1$.
Following Dirac’s prescription for constrained systems
\cite{Dirac2001,HenneauxTeitelboim1992},
this constraint is implemented \emph{a posteriori} on the auxiliary
$\mathfrak{su}(1,1)$ module by requiring that physical states $\ket{\psi}$ satisfy
\begin{equation}
    \bigl(X^{(123)} - 1\bigr)\ket{\psi} = 0 .
\end{equation}
Since $X^{(123)}$ has a continuous spectrum in this representation, the solutions
of this equation should be understood as generalized (distributional) vectors,
notionally supported on the two-sphere. The physical space is obtained by
identifying these constrained distributional states with the normalized vectors
\begin{equation}
    \ket{n,k;a,b,c}
    \equiv \ket{r=1,n,k;a,b,c}.
    \label{physt}
\end{equation}

It is convenient to choose for the normalization constant $c_{n,k,r}$ the following expression that only involves $n$ and shall henceforth be designated by $c_n$:
\begin{equation}\label{cn}
    c_{n,k,r} = c_n=\left[\frac{\Gamma(2\nu_{123}+2n)}{\Gamma(2\nu_1)\Gamma(2\nu_2)\Gamma(2\nu_3)}\right]^{\frac{1}{2}}.
\end{equation}

It can be checked that the Hermitian form of the
auxiliary representation, when restricted to the constrained sector, induces a
well-defined scalar product on these vectors. Mindful of \eqref{nuabc} and given the choice \eqref{cn} for $c_{n,k,r}$, this leads to following the discrete
orthogonality relation where the norm takes a factorized form, namely
\begin{equation}
    \langle n',k';a,b,c \mid n,k;a,b,c\rangle
    =
    N_{n-k}^{(a,b+c+2k+1)}\,
    N_k^{(b,c)}\,
    \delta_{n,n'}\,
    \delta_{k,k'},
    \label{scapr}
\end{equation}
with
\begin{equation}
    N_n^{(a,b)}
    =
    \frac{\Gamma(n+a+1)\Gamma(n+b+1)}
         {(2n+a+b+1)\,n!\,\Gamma(n+a+b+1)} .
    \label{normJac}
\end{equation}
On this constrained space, the radial degree of freedom is frozen and only the
discrete labels $(n,k)$ remain, yielding a discrete representation basis for the
dynamical algebra $\mathfrak{J}_2$.

\noindent In summary, we have
\begin{prop}\label{prop:norm-phys-nk}
The family of vectors $\{\ket{n,k;a,b,c}\}_{n\in\mathbb{N},\,0\le k\le n}$ defined in \eqref{physt}
forms an orthogonal basis of the physical Hilbert space obtained by Dirac
reduction from the triple tensor product of positive discrete series
representations of $\mathfrak{su}(1,1)$. Their orthogonality and norms are given by \eqref{scapr} and \eqref{normJac}.
\end{prop}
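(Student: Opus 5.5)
The plan is to derive the statement from Proposition~\ref{prop:norm of r,n,k}. The Dirac reduction at $r=1$ will be read as stripping the factor $\delta(s-r)$ and keeping its coefficient. Orthogonality in $(n,k)$ is already contained in that proposition: distinct pairs give distinct joint eigenvalues of the self-adjoint operators $C^{(123)}$ and $C^{(23)}$, so $\delta_{n,n'}\delta_{k,k'}$ appears. What remains is to check that, with the choice \eqref{cn}, the coefficient $M_{n,k,1}$ equals $N_{n-k}^{(a,b+c+2k+1)}N_k^{(b,c)}$, and to argue that the vectors span the physical space.

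\textbf{Step 1: the norm.} Insert $c_n^2=\Gamma(2\nu_{123}+2n)/(\Gamma(2\nu_1)\Gamma(2\nu_2)\Gamma(2\nu_3))$ into $M_{n,k,r}$. The prefactor cancels exactly, leaving the bracket, which is independent of $r$. Next substitute \eqref{nuabc}: $2\nu_1=a+1$, $2\nu_2=b+1$, $2\nu_3=c+1$, $2\nu_{23}=b+c+2$, $2\nu_{123}=a+b+c+3$. The bracket then becomes
\[
\frac{\Gamma(k+b+1)\Gamma(k+c+1)\,\Gamma(n-k+a+1)\,\Gamma(n+k+b+c+2)}{k!\,(n-k)!\,(2k+b+c+1)(2n+a+b+c+2)\,\Gamma(n+k+a+b+c+2)\,\Gamma(k+b+c+1)}.
\]
Compare this with $N_k^{(b,c)}=\Gamma(k+b+1)\Gamma(k+c+1)/\big((2k+b+c+1)k!\,\Gamma(k+b+c+1)\big)$. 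Then set $\beta=b+c+2k+1$ and compare with $N_{n-k}^{(a,\beta)}$. Its factors are $\Gamma(n-k+a+1)$ and $\Gamma(n-k+\beta+1)=\Gamma(n+k+b+c+2)$, with denominator $(2(n-k)+a+\beta+1)=(2n+a+b+c+2)$, $(n-k)!$, and $\Gamma(n-k+a+\beta+1)=\Gamma(n+k+a+b+c+2)$. All factors match one to one, which gives \eqref{scapr}--\eqref{normJac}.

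\textbf{Step 2: completeness.} This is the main obstacle, since it depends on what "physical Hilbert space" means in the Dirac reduction. I would argue as follows. The triple tensor product decomposes as the direct sum, over $n\ge 0$ and $0\le k\le n$, of the positive discrete series modules $\{\ket{\nu_{123}+n,\nu_{23}+k;m}\}_m$; this is the standard Clebsch--Gordan multiplicity count already used in Proposition~\ref{prop:vectorR,n,k}. On each such module, $X^{(123)}$ acts as $X$ in a single positive discrete series. Subsection~\ref{subsection:eigX} shows that $X$ has simple spectrum $\R^+$, with exactly one generalized eigenvector per $r$ (the Laguerre kernel is complete in $L^2(\R^+,e^{-r}r^{2\nu-1}dr)$).

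It follows that the generalized eigenspace of $X^{(123)}$ at $r=1$ is spanned by exactly one vector for each pair $(n,k)$, namely $\ket{1,n,k;a,b,c}$. The induced form, defined by the coefficient of $\delta(s-r)$ in the direct-integral decomposition, is the one computed above. The family is therefore an orthogonal basis of the fibre at $r=1$, which is the physical space.

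Where rigor is needed, I would phrase this through the spectral theorem: represent the auxiliary space as $\bigoplus_{n,k}L^2(\R^+,dr)$ and take the fibre at $r=1$ to be $\ell^2$ over the labels $(n,k)$.
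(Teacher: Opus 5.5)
Your proposal is correct and follows the paper's own route. You insert \eqref{cn} into $M_{n,k,r}$ from Proposition~\ref{prop:norm of r,n,k}, read the coefficient of $\delta(s-r)$ at $r=1$ as the induced norm, and match it factor by factor with $N_{n-k}^{(a,b+c+2k+1)}N_k^{(b,c)}$ via \eqref{nuabc}, which is exactly right; your Step~2 (one generalized eigenvector of $X^{(123)}$ per $r$ in each discrete-series component with $0\le k\le n$) just spells out the completeness that the paper takes for granted from the Clebsch--Gordan decomposition and the simple spectrum of $X$ in subsection~\ref{subsection:eigX}.
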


\noindent We now record the following consequence of the construction from the properties of
$\mathfrak{su}(1,1)^{\otimes 3}$ and its discrete series representations.

\begin{lemm}\label{her}
With respect to the scalar product defined by
\eqref{scapr}, the operators $L_1$, $L_3$, $X_1$ and $X_3$
are symmetric on the linear span of the basis vectors
$\{\ket{n,k;a,b,c}\}_{n\in\mathbb{N},\,0\le k\le n}$.
\end{lemm}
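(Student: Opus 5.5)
The plan is to deduce symmetry on the physical space from the symmetry of the corresponding elements of $\mathfrak{su}(1,1)^{\otimes 3}$ on the auxiliary module, using the fact that the scalar product \eqref{scapr} is the restriction of the auxiliary Hermitian form to the constrained sector.

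\textbf{Step 1: symmetry on the auxiliary module.} In the positive discrete series with the form \eqref{orthbasis}, one checks directly that $J_0^\dagger=J_0$ and $J_\pm^\dagger=J_\mp$. Indeed,
\begin{equation*}
\langle \nu;n+1|J_+|\nu;n\rangle=(n+1)!(2\nu)_{n+1},
\end{equation*}
and
\begin{equation*}
\langle \nu;n|J_-|\nu;n+1\rangle=(n+1)(n+2\nu)\,n!(2\nu)_n,
\end{equation*}
which coincide. It follows that $X=2J_0+J_++J_-$ and $C=J_0^2-J_0-J_+J_-$ are symmetric. The same then holds for $X^{(1)}$, $X^{(3)}$, $X^{(123)}$ and the intermediate Casimirs $C^{(ij)}$ on the finite linear span of the product basis. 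Since $L_1$ and $L_3$ are real affine shifts of $C^{(23)}$ and $C^{(12)}$, they are symmetric as well.

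\textbf{Step 2: passing to the constrained sector.} The operators $L_1$ and $L_3$ commute with $X^{(123)}$, because they commute with $J_\pm^{(123)}$ and $J_0^{(123)}$. They therefore act within the fibre $r=1$. For $L_1$ the statement is immediate: $\ket{n,k;a,b,c}$ is an eigenvector of $L_1$ with real eigenvalue $-k(k+b+c+1)$, so $L_1$ is diagonal in an orthogonal basis with real entries. For $L_3$ I would argue as follows.
\begin{itemize}
\item[(i)] $L_3$ commutes with $C^{(123)}$, so it preserves $n$ and also $r$.
\item[(ii)] Hence $L_3\ket{r,n,k}=\sum_{k'}A_{k'k}(n)\ket{r,n,k'}$. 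Because the $k$-dependence of $c_{n,k,r}$ has been removed in \eqref{cn}, the coefficients are those of $C^{(12)}$ in the recoupled (Racah) basis and do not depend on $r$.
\item[(iii)] Symmetry of $C^{(12)}$ in the auxiliary form, together with \eqref{eq:norm of r,n,k}, gives $A_{k'k}M_{n,k'}=A_{kk'}M_{n,k}$ after the common $\delta(s-r)$ factor is cancelled. Evaluating at $r=1$ then yields symmetry for the form \eqref{scapr}, whose weights are exactly $M_{n,k,1}$ with the choice \eqref{cn}.
\end{itemize}

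\textbf{Step 3: the operators $X_1$ and $X_3$.} These do \emph{not} commute with $C^{(123)}$, but they do commute with $X^{(123)}$, since all the $X^{(i)}$ commute with one another. So each of them maps the fibre $X^{(123)}=r$ into itself, and one can write
\begin{equation*}
X_i\ket{r,n,k}=\sum_{n',k'}B^{(i)}_{n'k',nk}(r)\ket{r,n',k'}.
\end{equation*}
I would then pair the generalized vectors, using Step 1 for $X_i$:
\begin{equation*}
\langle s,n',k'|X_i|r,n,k\rangle=\overline{\langle r,n,k|X_i|s,n',k'\rangle}.
\end{equation*}
Comparing the coefficients of $\delta(s-r)$ with the help of Proposition \ref{prop:norm of r,n,k} gives
\begin{equation*}
B^{(i)}_{n'k',nk}(r)\,M_{n',k',r}=\overline{B^{(i)}_{nk,n'k'}(r)}\,M_{n,k,r}.
\end{equation*}
Setting $r=1$ yields the claim, provided the $B^{(i)}$ are real. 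Reality follows because all the coefficients (Laguerre, Hahn and Clebsch--Gordan) are real.

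\textbf{Main obstacle.} The delicate point is justifying the identification of the coefficient of $\delta(s-r)$: the matrix elements must be genuinely continuous in $r$, and the fibrewise decomposition must be legitimate. The cleanest route, which I would try first, is to obtain $B^{(i)}(r)$ explicitly as finite sums with polynomial dependence on $r$. This should come from the Laguerre three-term relation \eqref{eq:recurrence_qn} combined with the Hahn contiguity relations in Appendix \ref{app:A}. As a fallback, symmetry could be checked directly from those explicit finite formulas against the norms \eqref{normJac}.
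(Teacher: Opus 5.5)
Your proposal is correct and follows the same route as the paper. Both arguments use self-adjointness of $J_0^{(i)}$ and $J_\pm^{(i)}$ in the auxiliary module, hence of the $C^{(ij)}$ and $X^{(i)}$; the fact that $L_1,L_3,X_1,X_3$ commute with $X^{(123)}$; and the fact that \eqref{scapr} is the auxiliary Hermitian form restricted to the sector $X^{(123)}=1$. Your fibrewise comparison of the coefficients of $\delta(s-r)$ makes precise the restriction step that the paper only asserts. That comparison uses $M_{n,k,1}=N_{n-k}^{(a,b+c+2k+1)}N_k^{(b,c)}$ and the explicit coefficients, polynomial in $r$, obtained as in Propositions \ref{prop:X1Action} and \ref{prop:X3Action}.
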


\begin{proof}
In the auxiliary Hilbert space, the generators
$J_0^{(i)}$ and $J_{\pm}^{(i)}$ of each copy of $\mathfrak{su}(1,1)$
are represented in such a way that
$J_0^{(i)}$ is selfadjoint and $J_+^{(i)}$ is the adjoint of $J_-^{(i)}$.
The Casimir operators $C^{(i)}$ and $C^{(ij)}$, as well as the elements
$X^{(i)}=2J_0^{(i)}+J_+^{(i)}+J_-^{(i)}$, are therefore selfadjoint.

By construction, $L$, $L_1$, $L_3$, $X_1$ and $X_3$ are real linear
combinations of selfadjoint generators in
$\mathfrak{su}(1,1)^{\otimes 3}$ and are therefore selfadjoint in the
auxiliary Hilbert space. Moreover, they commute with $X^{(123)}$, so that
the family of (generalized) states satisfying the constraint
$X^{(123)}=1$ is stable under their action. The physical scalar product is
obtained by considering the Hermitian form of the auxiliary representation
restricted to this constrained sector. With respect to this induced scalar
product, the operators $L_1$, $L_3$, $X_1$ and $X_3$ act symmetrically on
the physical vectors, which is precisely the content of the lemma.
\end{proof}

\section{The physical representation of $\mathfrak{J}_2$ \label{sec:5}}

In this section, we shall exploit the embedding of the rank two Jacobi algebra $\mathfrak{J}_2$ in $\mathfrak{su}(1,1)^{\otimes 3}$ to construct its representation in the basis provided by the  vectors $\ket{n,k;a,b,c}$ that are the eigenvectors of the generators $L$ and $L_1$. We shall first determine the action of $L_3$ thus providing the representation of the symmetry algebra already identified as the rank one Racah algebra. We shall then determine in turn the actions of the remaining two generators of $\mathfrak{J}_2$, namely $X_1$ and $X_3$. As shall be seen, to perform thse computations, it will be best to work at first in the auxiliary $\mathfrak{su}(1,1)$ module before passing to the quotient corresponding to the restriction on $S^2$. We shall also provide the raising and lowering relations for the vectors $\ket{n,k;a,b,c}$. 

\subsection{Representation of the symmetry algebra $\mathfrak{R}_1$}

The construction of the representation of the dynamical algebra $\mathfrak{J}_2$ in the basis of the physical state vectors $\ket{n,k;a,b,c}$ amounts to determining 
the action of the operators $L,L_1,L_3,X_1,X_3$ on these vectors . We shall begin with the representation of the symmetry subalgebra, i.e., the algebra generated by $L, L_1$ and $L_3$. 

Since the basis vectors $\ket{n,k;a,b,c}$ are taken to diagonalize $L$ and $L_1$ we have:
\begin{align} \label{actionLket}
    L \ket{n,k;a,b,c}&=-n(n+a+b+c+2)\ket{n,k;a,b,c},\\
    L_1\ket{n,k;a,b,c}&=-k(k+b+c+1)\ket{n,k;a,b,c}.\label{actionL1}
\end{align}
There thus remains for this first task to characterize the action of $L_3$. To that end, we will need the  Racah coefficients that is, the overlaps between the eigenvectors of $L_1$ and $L_3$ known to be affine transformations of the intermediate Casimir operators $C^{(23)}$ and $C^{(12)}$,  respectively. Let us be more precise. 
Consider again the triple tensor product of three positive discrete series representations of $\mathfrak{su}(1,1)$ with parameters $\nu_i>0,\ i=1,2,3$. The Racah coefficients are the elements of the transition matrix between the eigenbasis obtained from simultaneously diagonalizing $C^{(123)}$, $C^{(23)}$ and $J_0^{(123)}$ and the one diagonalizing jointly $C^{(123)}$, $C^{(12)}$ and $J_0^{(123)}$. The common eigenvectors of $C^{(123)}$, $C^{(23)}$ and $J_0^{(123)}$ were introduced in the proof of Proposition \ref{prop:vectorR,n,k} and read $\ket{\nu_{123}+n,\nu_{23}+k;m}$. We shall denote by $\ket{\nu_{123}+n,\nu_{12}+j;m}$ the eigenvectors of $C^{(123)}$, $C^{(12)}$ and $J_0^{(123)}$ with respective eigenvalues $(\nu_{123}+n)(\nu_{123}+n-1)$,  $(\nu_{12}+j)(\nu_{12}+j-1)$ and $m+\nu_{123}+n$. These vectors can be obtained as was done for the vectors $\ket{\nu_{123}+n,\nu_{23}+k;m}$ by recoupling the tensor product basis elements with the help of Clebsch--Gordan coefficients. One finds:
\begin{equation}
    \ket{\nu_{123}+n,\nu_{12}+j;m}=\sum_{i=0}^{m+n-j} \sum_{\ell=0}^{i+j} CG_{m,n-j,i}^{\nu_{12}+j,\nu_3} CG_{i,j,\ell}^{\nu_1,\nu_2}\ket{\nu_1;\ell}\ket{\nu_2;i+j-\ell} \ket{\nu_3;m+n-j-i}.
\end{equation}
The Racah coefficients $R_{j,k,n}^{\nu_1,\nu_2,\nu_3}$ provide the overlaps between these two bases
\begin{equation}\label{eq:DefRacahCoeff}
    \ket{\nu_{123}+n,\nu_{23}+k;m}=\sum_{j=0}^n R_{j,k,n}^{\nu_1,\nu_2,\nu_3}\ket{\nu_{123}+n,\nu_{12}+j;m}.
\end{equation}
They are explicitly given by 
\begin{equation}
  R_{j,k,n}^{\nu_1,\nu_2,\nu_3}=\binom{n}{k}\frac{(2\nu_2)_k(2\nu_1)_{n-k}(2\nu_{123}+n-1)_j}{(2\nu_1)_j(2\nu_{12}+j-1)_j(2\nu_{12}+2j)_{n-j}}R_k(\lambda(j);2\nu_2-1,2\nu_3-1,-n-1,2\nu_{12}+n-1),
\end{equation}
where $R_k$ are the Racah polynomials whose definition and properties are recalled in Appendix \ref{sec:Racah-pols}.
   
\begin{rema}
     As for the Clebsch--Gordan coefficients, the Racah coefficients are usually given for an orthonormal basis of the discrete series representation. In terms of the polynomials $R_{j,k,n}^{\nu_1,\nu_2,\nu_3}$ these coefficients read
     \begin{multline}
         \left(\frac{(n-k)!k!(2\nu_{123}+k+n-1)_{n-k}}{(n-j)!j!(2\nu_{123}+j+n-1)_{n-j}}\frac{(2\nu_1)_j(2\nu_2)_j(2\nu_3)_{n-j}(2\nu_{12}+j-1)_j(2\nu_{12}+2j)_{n-j}}{(2\nu_3)_k(2\nu_2)_k(2\nu_1)_{n-k}(2\nu_{23}+k-1)_k(2\nu_{23}+2k)_{n-k}}\right)^{\frac{1}{2}}\\
         \times \frac{(2\nu_{123}+n-1)_k}{(2\nu_{123}+n-1)_j}R_{j,k,n}^{\nu_1,\nu_2,\nu_3}.
     \end{multline}
\end{rema} 
We can now prove the following Proposition which gives the action of $L_3$ on the vectors $\ket{n,k;a,b,c}$. 
\begin{prop}\label{prop:L3Action}
    The operator $L_3$ acts as follows on the vectors $\ket{n,k;a,b,c}$
    \begin{equation}
        L_3\ket{n,k;a,b,c}=\gamma_{nk}^1 \ket{n,k+1;a,b,c}+\gamma_{nk}^2 \ket{n,k;a,b,c} +\gamma_{nk}^3 \ket{n,k-1;a,b,c},
    \end{equation}
    with
    \begin{align} \label{gamma1}
        \gamma_{nk}^1 &= \frac{(k+1)(n-k+a)(k+b+c+1)(n+k+a+b+c+2)}{(2k+b+c+1)(2k+b+c+2)},\\
        \gamma_{nk}^2 &=-\left(\frac{(n-k)(k+b+1)(k+b+c+1)(n+k+a+b+c+2)}{(2k+b+c+1)(2k+b+c+2)}\right. \label{gamma2}\\
        &\ \ +\left.\frac{k(n-k+a+1)(k+c)(n+k+b+c+1)}{(2k+b+c+1)(2k+b+c)} \right),\nonumber\\
        \gamma_{nk}^3 &=\frac{(n-k+1)(k+b)(k+c)(n+k+b+c+1)}{(2k+b+c+1)(2k+b+c)} .\label{gamma3}
    \end{align}
\end{prop}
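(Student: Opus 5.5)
The plan is to reduce the computation to a single fixed irreducible component of $C^{(123)}$, and there to use the Racah coefficients together with the three-term recurrence of the Racah polynomials.

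\textbf{Reduction to fixed $n$.} Since $L_3$ commutes with $C^{(123)}$ and with every $J_0^{(123)}$-type operator of the diagonal embedding, it acts on the span of $\{\ket{\nu_{123}+n,\nu_{23}+k;m}\}_{k=0}^n$ independently of $m$. In particular it commutes with $X^{(123)}$. It therefore suffices to compute
\[
L_3\ket{\nu_{123}+n,\nu_{23}+k;m}
\]
and then transport the result through the definitions \eqref{eq:DefVij} and \eqref{physt}. Because $c_{n,k,r}=c_n$ does not depend on $k$, the coefficients will be unchanged by this transport.

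\textbf{Expansion in the $C^{(12)}$ eigenbasis.} Expand $\ket{\nu_{123}+n,\nu_{23}+k;m}$ using \eqref{eq:DefRacahCoeff}. On $\ket{\nu_{123}+n,\nu_{12}+j;m}$ the operator $L_3$ acts diagonally with eigenvalue
\[
\lambda_j=-j(j+a+b+1),
\]
obtained from \eqref{L3} and $\nu_{12}=(a+b+2)/2$. One checks that this $\lambda_j$ coincides, up to an affine change, with the Racah grid $\lambda(j)$ appearing in $R_k(\lambda(j);2\nu_2-1,2\nu_3-1,-n-1,2\nu_{12}+n-1)$. Hence
\[
L_3\ket{\nu_{123}+n,\nu_{23}+k;m}=\sum_j \lambda_j\,R_{j,k,n}^{\nu_1,\nu_2,\nu_3}\,\ket{\nu_{123}+n,\nu_{12}+j;m}.
\]
Now apply the three-term recurrence of the Racah polynomials from Appendix \ref{sec:Racah-pols}:
\[
\lambda(j)R_k=A_kR_{k+1}-(A_k+C_k)R_k+C_kR_{k-1}.
\]
Only the $k$-dependent prefactor of $R_{j,k,n}$, namely $\binom{n}{k}(2\nu_2)_k(2\nu_1)_{n-k}$, is affected, while the $j$-dependent prefactor factors out. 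Re-summing over $j$ with \eqref{eq:DefRacahCoeff} read backwards then produces exactly a combination of $\ket{\nu_{123}+n,\nu_{23}+k\pm1;m}$ and $\ket{\nu_{123}+n,\nu_{23}+k;m}$.

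\textbf{Identification of the coefficients.} The result is
\[
\gamma^1_{nk}=A_k\,\frac{\binom{n}{k}(2\nu_2)_k(2\nu_1)_{n-k}}{\binom{n}{k+1}(2\nu_2)_{k+1}(2\nu_1)_{n-k-1}},
\]
and similarly for $\gamma^3_{nk}$ with $C_k$, while $\gamma^2_{nk}$ is $-(A_k+C_k)$ plus the affine shift coming from the relation between $\lambda_j$ and $\lambda(j)$. To finish, substitute $\alpha=b$, $\beta=c$, $\gamma=-n-1$, $\delta=a+b+n+1$ into the standard expressions for $A_k$ and $C_k$, and use $2\nu_1=a+1$, $2\nu_2=b+1$.

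\textbf{Main obstacle.} The main obstacle is bookkeeping: matching the paper's Racah parametrization and grid $\lambda(j)$ with the eigenvalue $\lambda_j$, including sign and shift conventions. As a check, the diagonal coefficient should satisfy $\gamma^1+\gamma^2+\gamma^3$ evaluated consistently (e.g. at $k=0$, $\gamma^3=0$). One can also cross-check $\gamma^1_{nk}\gamma^3_{n,k+1}$ against the norms \eqref{scapr} via Lemma \ref{her}, which gives
\[
\gamma^1_{nk}\,\|n,k+1\|^2=\gamma^3_{n,k+1}\,\|n,k\|^2 .
\]
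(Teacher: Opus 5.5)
Your plan is the paper's proof. You expand over the $C^{(12)}$-eigenbasis via the Racah coefficients and apply the three-term recurrence of $R_k(\lambda(j);b,c,-n-1,a+b+n+1)$ in the degree $k$. Since the parameters are $k$-independent, only $P_k=\binom{n}{k}(2\nu_2)_k(2\nu_1)_{n-k}$ is affected. You then re-sum and use that $c_n$ does not depend on $k$.

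There is one concrete slip to fix in your identification step. Since $\gamma+\delta+1=a+b+1$, the grid is $\lambda(j)=j(j+a+b+1)$. The $L_3$-eigenvalue is therefore exactly $\lambda_j=-\lambda(j)$: a sign flip with no shift. This is the minus sign the paper inherits from $L_3=-C^{(12)}+\frac14(a+b+2)(a+b)$. Consequently $\gamma^1_{nk}=-A_kP_k/P_{k+1}$, $\gamma^2_{nk}=A_k+C_k$ and $\gamma^3_{nk}=-C_kP_k/P_{k-1}$.

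As written, your formula $\gamma^1_{nk}=A_kP_k/P_{k+1}$ gives the negative of the stated value, since $A_k$ carries the factor $k+\gamma+1=k-n$. Likewise $\gamma^2_{nk}$ is $+(A_k+C_k)$, not $-(A_k+C_k)$ plus a constant.

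With the sign corrected, $P_k/P_{k+1}=\frac{(k+1)(n-k+a)}{(n-k)(k+b+1)}$ and $P_k/P_{k-1}=\frac{(n-k+1)(k+b)}{k(n-k+a+1)}$ reproduce the three stated coefficients exactly. Your symmetry cross-check $\gamma^1_{nk}\|n,k+1\|^2=\gamma^3_{n,k+1}\|n,k\|^2$ against the norms then holds.
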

\begin{proof}
Using definition \eqref{eq:DefVij} of the vectors $\ket{r,n,k;a,b,c}$ together with  \eqref{eq:DefRacahCoeff} to obtain the expansion of the vectors $\ket{\nu_{123}+n,\nu_{23}+k;m}$ over the basis $\{\ket{\nu_{123}+n,\nu_{12}+j;m}\}$, we have
\begin{equation}
    \ket{r,n,k;a,b,c}= \frac{1}{c_{n}}\sum_{m=0}^\infty q_m^{(\nu_{123}+n)}(r)\sum_{j=0}^{n}R_{j,k,n}^{\nu_1,\nu_2,\nu_3}\ket{\nu_{123}+n,\nu_{12}+j;m}.
\end{equation} 
Acting with the intermediate Casimir operator $C^{(12)}$, we use the fact that this operator has $\ket{\nu_{123}+n,\nu_{12}+j;m}$ as eigenvectors with eigenvalues $(\nu_{12}+j)(\nu_{12}+j-1)=j(2\nu_{12}+j-1)+\nu_{12}(\nu_{12}-1)$. We then call upon the three-term recurrence relation of the Racah polynomials \eqref{eq:Racah_Recurrence} to find that the Racah coefficients satisfy the following three-term recurrence relation
\begin{align}
j(2\nu_{12}+j-1)R_{j,k,n}^{\nu_1,\nu_2,\nu_3}
&= -\frac{(k+1)(2\nu_1+n-k-1)(2\nu_{23}+k-1)(2\nu_{123}+n+k-1)}
        {(2\nu_{23}+2k-1)(2\nu_{23}+2k)} 
   R_{j,k+1,n}^{\nu_1,\nu_2,\nu_3} \nonumber\\[4pt]
&\quad + \left(
      \frac{(n-k)(k+2\nu_2)(k+2\nu_{23}-1)(2\nu_{123}+n+k-1)}
           {(2\nu_{23}+2k-1)(2\nu_{23}+2k)}
    \right. \nonumber\\
&\qquad\quad \left.
    + \frac{k(2\nu_1+n-k)(2\nu_{3}+k-1)(2\nu_{23}+n+k-1)}
           {(2\nu_{23}+2k-1)(2\nu_{23}+2k-2)} 
    \right)
    R_{j,k,n}^{\nu_1,\nu_2,\nu_3} \nonumber\\[4pt]
&\quad - \frac{(n-k+1)(2\nu_2+k-1)(2\nu_{3}+k-1)(2\nu_{23}+n+k-1)}
              {(2\nu_{23}+2k-1)(2\nu_{23}+2k-2)}
    R_{j,k-1,n}^{\nu_1,\nu_2,\nu_3}. 
\end{align}
To conclude we
use the fact that $L_3=-C^{(12)}+\frac{1}{4}(a+b+2)(a+b)$ as per \eqref{L3}, while the link with the parameters $a,b,c$ is made through \eqref{nuabc}. Finally, a direct computation yields the result. 
\end{proof}

The next two subsections will provide the actions of the generators $X_1$ and $X_3$ thereby completing the construction of the representation of $\mathfrak{J}_2$ in the physical basis. In both cases, the transformations of the recombined state vectors will have to be recovered from the simple actions of $X_1$ or $X_3$ on the first or third factor of the tensor product state vectors.

\subsection{Action of $X_1$}

The proposition below gives the action of $X_1$ we are looking for.
\begin{prop}\label{prop:X1Action}
The operator $X_1$ acts in the following three–diagonal fashion in the basis
$\{\ket{n,k;a,b,c}\}$:
\begin{equation}
X_1 \ket{n,k;a,b,c}
=
\alpha^1_{nk} \ket{n+1,k;a,b,c}
+\alpha^2_{nk} \ket{n,k;a,b,c}
+\alpha^3_{nk} \ket{n-1,k;a,b,c},
\label{actX1}
\end{equation}
with
\begin{align}
\alpha^1_{nk}
&=
-\frac{(n-k+1)(n+k+a+b+c+2)}
{(2n+a+b+c+2)(2n+a+b+c+3)},\\
\alpha^2_{nk}
&=
-\alpha^1_{nk}-\alpha^3_{nk}
=
\frac12\left(
1-
\frac{(2k+a+b+c+1)(2k-a+b+c+1)}
{(2n+a+b+c+1)(2n+a+b+c+3)}
\right),\\
\alpha^3_{nk}
&=
-\frac{(n+k+b+c+1)(n-k+a)}
{(2n+a+b+c+1)(2n+a+b+c+2)}.
\end{align}
\end{prop}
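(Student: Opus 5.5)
We work in the auxiliary module with the intermediate expression \eqref{eq:vijDoubleTensor}, which presents $\ket{r,n,k;a,b,c}$ as an element of the two-fold tensor product of the positive discrete series with parameters $\nu_1$ and $\nu_I=\nu_{23}+k$.

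Since $X_1=X^{(1)}$ acts only on the first factor, it commutes with $C^{(23)}$ and with $X^{(123)}$. Hence $X_1\ket{r,n,k;a,b,c}$ remains in the $C^{(23)}$-eigenspace labelled $k$ and in the $X^{(123)}$-eigenspace labelled $r$. The whole problem therefore reduces to a rank-one question: how $X^{(1)}$ acts on the basis diagonalizing $C^{(12)}$ and $X^{(12)}$ in $\mathcal{V}_{\nu_1}\otimes\mathcal{V}_{\nu_I}$. In that two-fold coupling the total Casimir has eigenvalue $(\nu_{123}+n)(\nu_{123}+n-1)$ with $n-k=0,1,\dots$.

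First, recall the standard fact (Lie-algebraic, or read off from the Hahn contiguity relations of Appendix~\ref{app:A}) that in $\mathfrak{su}(1,1)^{\otimes 2}$ the element $X^{(1)}$ connects a coupled $C^{(12)}$ eigenvector only to coupled eigenvectors with neighbouring Casimir labels. The reason is that $X^{(1)}$ is a component of a vector (adjoint) operator relative to the diagonal $\mathfrak{su}(1,1)$. Concretely, we would verify this by acting with $X^{(1)}=2J_0^{(1)}+J_+^{(1)}+J_-^{(1)}$ on \eqref{eq:vijDoubleTensor}, writing out the action on $\ket{\nu_1;i}$, and shifting the summation index.

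Second, we rewrite the resulting coefficient of each product vector $\ket{\nu_1;i}\ket{\nu_I;m+n-k-i}$ in terms of Hahn polynomials. Using \eqref{eq:clebsh-gordan-dfn}, the coefficient is $q_m^{(\nu_{123}+n)}(r)\,\binom{m+n-k}{i}p_{n-k}(i;2\nu_1-1,2\nu_I-1,m+n-k)$ up to constants. After the shift, the combination involves $i\,p$, $p(i\pm1)$ and binomial ratios. We then apply the Hahn contiguity relations in Appendix~\ref{app:A}, which change the degree $n-k$ by $\pm1$ and the total $N=m+n-k$ by $0,\pm1$, to re-express everything as combinations of $CG^{\nu_1,\nu_I}_{m',n-k+\epsilon,\cdot}$ with $\epsilon\in\{-1,0,1\}$.

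Third, we must reorganize the $m$-sum: each $CG$ with label $n+\epsilon$ has to be paired with $q_{m'}^{(\nu_{123}+n+\epsilon)}(r)$. Here we use the Laguerre contiguity relations, for example $rL_m^{(\alpha+1)}=(m+\alpha+1)L_m^{(\alpha)}-(m+1)L_{m+1}^{(\alpha)}$ together with $L_m^{(\alpha)}=L_m^{(\alpha+1)}-L_{m-1}^{(\alpha+1)}$, together with the prefactors $f^{(\nu)}(r)$ in \eqref{fnq} and \eqref{fnf}. This shows that
\[
X_1\ket{r,n,k}=\alpha^1(r)\ket{r,n+1,k}+\alpha^2(r)\ket{r,n,k}+\alpha^3(r)\ket{r,n-1,k},
\]
with coefficients carrying simple powers of $r$. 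The normalizations $c_n$ from \eqref{cn} enter through the ratios $c_{n\pm1}/c_n$. Setting $r=1$ via \eqref{physt} and substituting \eqref{nuabc} then gives the stated $\alpha^i_{nk}$.

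As a cross-check, and as a shortcut for $\alpha^2$, we use two consistency conditions. The first is $\sum_i X^{(i)}=X^{(123)}=1$, together with the analogous result for $X_3$. The second is the identity $\alpha^2=-\alpha^1-\alpha^3$ in the statement, which follows because the vector $\ket{\nu_{123}+n,\nu_{23}+k;\cdot}$ component sums to $r$-independent constraints. The symmetry of Lemma~\ref{her} combined with the norms \eqref{scapr} provides another check: it forces $\alpha^1_{n-1,k}\,\|n,k\|^2=\alpha^3_{nk}\,\|n-1,k\|^2$.

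The main obstacle is the third step, the bookkeeping that converts the Hahn-level shifts (in $n-k$ and in the total degree) into clean shifts of both the Laguerre index $m$ and the Laguerre parameter $2\nu_{123}+2n-1$, so that the $m$-sum recombines into whole vectors $\ket{r,n\pm1,k}$. We would first attempt this at the level of generating functions in $m$, where the Laguerre generating function $\sum_m L_m^{(\alpha)}(r)t^m$ absorbs the parameter shift, before specializing to $r=1$.
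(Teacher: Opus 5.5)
Your steps 1--3 are the paper's proof run in the opposite direction. You use the same double-tensor form \eqref{eq:vijDoubleTensor} and the same action of $X^{(1)}$ on the first factor. You also use the same ingredients: the Hahn recurrence \eqref{eq:recuR}, the contiguity relations (HI/II) and (HII/I), the two Laguerre contiguity relations, the ratios $c_{n\pm1}/c_n$, and finally $r=1$. Your vector-operator remark, which gives the three-term shape a priori, is a small bonus the paper does not use.

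The bookkeeping you call the main obstacle disappears if you verify rather than derive, as the paper does. Posit
\[
X_1|r,n,k\rangle=r\bigl(\alpha^1_{nk}|r,n+1,k\rangle+\alpha^2_{nk}|r,n,k\rangle+\alpha^3_{nk}|r,n-1,k\rangle\bigr).
\]
Convert $r\,q^{(\nu_f\pm1)}_{m\mp1}$ into combinations of $q^{(\nu_f)}_{m+1},q^{(\nu_f)}_m,q^{(\nu_f)}_{m-1}$ via \eqref{eq:contiguity-q1-new}--\eqref{eq:contiguity-q2-new}. Then match, in front of each product vector $|\nu_1;i\rangle|\nu_{23}+k;m+n-k-i\rangle$, the coefficients of these three functions. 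This yields three Clebsch--Gordan identities, which are exactly (HII/I), \eqref{eq:recuR} and (HI/II). No generating functions are needed.

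The one step that would fail is your shortcut for $\alpha^2$. The equality $\alpha^2_{nk}=-\alpha^1_{nk}-\alpha^3_{nk}$ printed in the statement is wrong unless $a=0$; the explicit formula for $\alpha^2_{nk}$ is the correct one. For example, with $n=1$, $k=0$, $a=1$, $b=c=0$ the stated formulas give $\alpha^1=-4/15$ and $\alpha^3=-1/5$, but $\alpha^2=1/2\neq 7/15$. Your justification (``component sums to $r$-independent constraints'') is not an actual argument, and using it would produce a wrong $\alpha^2$.

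The correct sum rule comes from the $i=0$ component of the third Clebsch--Gordan identity, the one multiplying $q^{(\nu_f)}_{m-1}$. There the left side $CG_{m-1,n-k,-1}$ vanishes, while $CG^{\nu_1,\nu_{23}+k}_{m',j,0}=(a+1)_j/j!$. Equivalently, it is the univariate Jacobi recurrence evaluated at $x=0$. It gives $\alpha^2_{nk}=-\frac{n-k+a+1}{n-k+1}\alpha^1_{nk}-\frac{n-k}{n-k+a}\alpha^3_{nk}$, that is,
\[
\alpha^2_{nk}=\frac{(n-k+a+1)(n+k+a+b+c+2)}{(2n+a+b+c+2)(2n+a+b+c+3)}+\frac{(n-k)(n+k+b+c+1)}{(2n+a+b+c+1)(2n+a+b+c+2)}.
\]
This agrees with the explicit $\alpha^2_{nk}$ and reduces to $-\alpha^1_{nk}-\alpha^3_{nk}$ only when $a=0$. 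So $\alpha^2$ must come from the recurrence step, or from this weighted rule.

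Your check via $X_1+X_2+X_3=1$ says nothing about $X_1$ unless the action of $X_2$ is also known. Of your three checks, only the one from Lemma~\ref{her} and \eqref{scapr} is a genuine test: $\alpha^1_{n-1,k}\langle n,k|n,k\rangle=\alpha^3_{nk}\langle n-1,k|n-1,k\rangle$, and it does hold.
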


\begin{proof}

We set again $\nu_f=\nu_{123}+n$ and bear \eqref{nuabc} in mind.
For computational reasons we first consider the action of $X_1$ on the
auxiliary vectors $\ket{r,n,k;a,b,c}$ and use also the formula
\eqref{eq:vijDoubleTensor}.  We give a direct proof: the left–hand side of
\eqref{actX1} with $r$ reinstated is computed explicitly and shown to be
equal to the right–hand side. More explicitly, we will prove the following
\begin{equation}
\label{eq:three-term-r}
X_1 \ket{r,n,k;a,b,c}
=
\alpha^1_{nk} r \ket{r,n+1,k;a,b,c}
+
\alpha^2_{nk} r \ket{r,n,k;a,b,c}
+
\alpha^3_{nk} r \ket{r,n-1,k;a,b,c}.
\end{equation}
We want to collect the coefficient in front of the vector
$\ket{\nu_1;i}\ket{\nu_{23}+k;n-k+m-i}$ on both sides.
On the left–hand side we obtain
\begin{equation}\label{eq:LHSdouble}
\frac{1}{c_n}\left(
(i+1)(i+2\nu_1)q_{m+1}^{(\nu_f)}(r)
CG_{m+1,n-k,i+1}^{\nu_1,\nu_{23}+k}
+
2(i+\nu_1)q_m^{(\nu_f)}(r)
CG_{m,n-k,i}^{\nu_1,\nu_{23}+k}
+
q_{m-1}^{(\nu_f)}(r)
CG_{m-1,n-k,i-1}^{\nu_1,\nu_{23}+k}
\right).
\end{equation}
The right–hand side gives
\begin{align}\label{eq:RHSdouble}
\frac{\alpha_{nk}^1 r}{c_{n+1}}
\,q_{m-1}^{(\nu_f+1)}(r)\,
CG_{m-1,n-k+1,i}^{\nu_1,\nu_{23}+k}+
\frac{\alpha_{nk}^2 r}{c_n}
\,q_m^{(\nu_f)}(r)\,
CG_{m,n-k,i}^{\nu_1,\nu_{23}+k}+
\frac{\alpha_{nk}^3 r}{c_{n-1}}
\,q_{m+1}^{(\nu_f-1)}(r)\,
CG_{m+1,n-k-1,i}^{\nu_1,\nu_{23}+k}.
\end{align}
Using twice the contiguity relations for Laguerre polynomials
\eqref{eq:contiguityLaguerre1}–\eqref{eq:contiguityLaguerre2},
we obtain the following relations for the normalized functions
$q_m^{(\nu_f)}$:
\begin{align}
r\,q_{m-1}^{(\nu_f+1)}(r)
&=
\sqrt{2\nu_f(2\nu_f+1)}
\left(
m(m+1)q_{m+1}^{(\nu_f)}(r)
+2m q_m^{(\nu_f)}(r)
+q_{m-1}^{(\nu_f)}(r)
\right),
\label{eq:contiguity-q1-new}
\\
r\,q_{m+1}^{(\nu_f-1)}(r)
&=
\frac{1}{\sqrt{(2\nu_f-2)(2\nu_f-1)}}
\left(
(2\nu_f+m-1)(2\nu_f+m)q_{m+1}^{(\nu_f)}(r)
+
2(2\nu_f+m-1)q_m^{(\nu_f)}(r)
+
q_{m-1}^{(\nu_f)}(r)
\right).
\label{eq:contiguity-q2-new}
\end{align}
Because of the choice \eqref{cn} of the normalization constants $c_n$,
the square–root factors are exactly compensated by the ratios
$c_{n+1}/c_n$ and $c_{n-1}/c_n$.
Using these two relations together with the recurrence relation
\eqref{eq:recurrence_qn}, we can arrange the formulas so as to only have
functions $q_m^{(\nu_f)}$ with the same parameter $\nu_f$ appearing.
Expanding \eqref{eq:RHSdouble} in the basis
$q_{m+1}^{(\nu_f)}, q_m^{(\nu_f)}, q_{m-1}^{(\nu_f)}$
and identifying the coefficients with those of \eqref{eq:LHSdouble},
we obtain respectively the following three contiguity relations for the
Clebsch–Gordan coefficients:
\begin{align}
(i+1)(i+2\nu_1)
CG_{m+1,n-k,i+1}^{\nu_1,\nu_{23}+k}
&=
m(m+1)\alpha_{nk}^1
CG_{m-1,n-k+1,i}^{\nu_1,\nu_{23}+k}
+
(m+1)(m+2\nu_f)\alpha_{nk}^2
CG_{m,n-k,i}^{\nu_1,\nu_{23}+k}
\nonumber\\
&\qquad
+
(m+2\nu_f-1)(m+2\nu_f)\alpha_{nk}^3
CG_{m+1,n-k-1,i}^{\nu_1,\nu_{23}+k},
\\
(i+\nu_1)
CG_{m,n-k,i}^{\nu_1,\nu_{23}+k}
&=
m\alpha_{nk}^1
CG_{m-1,n-k+1,i}^{\nu_1,\nu_{23}+k}
+
(m+\nu_f)\alpha_{nk}^2
CG_{m,n-k,i}^{\nu_1,\nu_{23}+k}
\nonumber\\
&\qquad
+
(2\nu_f+m-1)\alpha_{nk}^3
CG_{m+1,n-k-1,i}^{\nu_1,\nu_{23}+k},
\\
CG_{m-1,n-k,i-1}^{\nu_1,\nu_{23}+k}
&=
\alpha_{nk}^1
CG_{m-1,n-k+1,i}^{\nu_1,\nu_{23}+k}
+
\alpha_{nk}^2
CG_{m,n-k,i}^{\nu_1,\nu_{23}+k}
+
\alpha^3_{nk}
CG_{m+1,n-k-1,i}^{\nu_1,\nu_{23}+k}.
\end{align}
To conclude the proof one checks that the first relation coincides with
the contiguity relation (HII/I) for the Hahn polynomials,
the second one with the recurrence relation \eqref{eq:recuR},
and the third with the contiguity relation (HI/II).
The corresponding identities are listed in Appendix
\ref{sec:Hahn-pols}.  The last step is obtained by setting $r=1$
in \eqref{eq:three-term-r}.
\end{proof}

\subsection{Action of $X_3$}

We now complete the construction of the physical representation of $\mathfrak{J}_2$ by determining the action of $X_3$ on the vectors $\ket{n,k;a,b,c}$. The result is given in the  proposition below.
\begin{prop}\label{prop:X3Action}
The operator $X_3$ has the following nine-term action in the basis $\{\ket{n,k;a,b,c}\}$
\begin{multline}
\label{eq:9term}
X_3 \ket{n,k;a,b,c}=\beta^1_{nk}\ket{n+1,k-1;a,b,c}+\beta^2_{nk} \ket{n+1,k;a,b,c}+\beta^3_{nk} \ket{n+1,k+1;a,b,c}\\
+\beta^4_{nk} \ket{n,k-1;a,b,c}+  \beta^5_{nk} \ket{n,k;a,b,c}+\beta^6_{nk} \ket{n,k+1;a,b,c}
\\+\beta^7_{nk} \ket{n-1,k-1;a,b,c}+\beta^8_{nk} \ket{n-1,k;a,b,c}+\beta^9_{nk} \ket{n-1,k+1;a,b,c},
\end{multline}
with
\begin{align}
\beta_{nk}^1&=\frac{(k+b)(k+c)(n-k+1)(n-k+2)}{(2k+b+c)(2k+b+c+1)(2n+a+b+c+2)(2n+a+b+c+3)},\\
\beta_{nk}^2&=\frac{(n-k +1) (n+k+a+b+c+2)}{2 (2n+a+b+c+2)(2n+a+b+c+3)}\left(\frac{c^{2}-b^{2}}{(2k +b+c)(2k+b+c+2)}+1\right),\\
\beta_{nk}^3&=\frac{(k+1)(k+b+c+1)(n+k+a+b+c+2)(n+k+a+b+c+3)}{(2k+b+c+1)(2k+b+c+2)(2n+a+b+c+2)(2n+a+b+c+3)},\\
\beta_{nk}^4&=\frac{2(k+b)(k+c)(n-k+1)(n+k+b+c+1)}{(2k+b+c)(2k+b+c+1)(2n+a+b+c+1)(2n+a+b+c+3)},\\
\beta_{nk}^5&=\frac{3}{4}+\frac{(2k+a+b+c+1)(2k-a+b+c+1)}{4 (2n+a+b+c+1)(2n+a+b+c+3)}\nonumber\\
&\ \ + \frac{c^2-b^2}{4}\left(\frac{1}{(2k+b+c)(2k+b+c+2)} +\frac{1}{(2n+a+b+c+1)(2n+a+b+c+3)}\right.\nonumber\\
&\ \ \left.+\frac{1-a^2}{(2k +b+c)(2k+b+c+2) (2n+a+b+c+1)(2n+a+b+c+3)}\right),\\
\displaybreak[1]
\beta_{nk}^6&=\frac{2(k+1)(k+b+c+1)(n-k+a)(n+k+a+b+c+2)}{(2k+b+c+1)(2k+b+c+2)(2n+a+b+c+1)(2n+a+b+c+3)},\\
\beta^7_{nk}&=\frac{(k+b)(k+c)(n+k+b+c)(n+k+b+c+1)}{(2k+b+c)(2k+b+c+1)(2n+a+b+c+1)(2n+a+b+c+2)},\\
\beta_{nk}^8&=\frac{(n-k+a) (n+k+b+c+1)}{2 (2n+a+b+c+1) (2n+a+b+c+2)}\left(\frac{c^2-b^2}{(2k +b+c) (2k+b+c+2)}+1\right),\\
\beta_{nk}^9&=\frac{(k+1)(k+b+c+1)(n-k+a-1)(n-k+a)}{(2k+b+c+1)(2k+b+c+2)(2n+a+b+c+1)(2n+a+b+c+2)}.
\end{align}
\end{prop}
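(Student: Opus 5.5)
The idea is to reduce the action of $X_3$ to the already established action of $X_1$ by exploiting the symmetry $1\leftrightarrow 3$ of the construction, together with the Racah coefficients. The point is that $X_3$ plays, with respect to the coupling scheme $((12)3)$, exactly the role that $X_1$ plays with respect to the scheme $(1(23))$. Consider the auxiliary generalized vectors $\ket{r,n,j;a,b,c}'$, built like $\ket{r,n,k;a,b,c}$ but diagonalizing $C^{(123)}$, $C^{(12)}$ and $X^{(123)}$. That is, in \eqref{eq:DefVij} one replaces $\ket{\nu_{123}+n,\nu_{23}+k;m}$ by $\ket{\nu_{123}+n,\nu_{12}+j;m}$, with the same normalization $c_n$. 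Relabelling $1\leftrightarrow3$, i.e. $a\leftrightarrow c$, reverses the order of the tensor factors. The proof of Proposition \ref{prop:X1Action}, which only used the Clebsch--Gordan structure of the outer coupling and the Laguerre contiguity relations, then gives the analogue of \eqref{eq:three-term-r}:
\begin{equation*}
X_3\ket{r,n,j}'=r\big(\alpha^{1}_{nj}|_{a\leftrightarrow c}\ket{r,n+1,j}'+\alpha^{2}_{nj}|_{a\leftrightarrow c}\ket{r,n,j}'+\alpha^{3}_{nj}|_{a\leftrightarrow c}\ket{r,n-1,j}'\big).
\end{equation*}
The CG coefficient $CG^{\nu_{12}+j,\nu_3}$ has its arguments in reversed order, so I would check that the Hahn polynomial symmetry $p_i(x;\alpha,\beta,N)\propto p_i(N-x;\beta,\alpha,N)$ produces exactly the same three contiguity relations. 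Any sign discrepancy would be absorbed there.

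Next I would expand $\ket{r,n,k}=\sum_{j}R_{j,k,n}\ket{r,n,j}'$ using \eqref{eq:DefRacahCoeff}. This is legitimate because the $q_m^{(\nu_{123}+n)}(r)$ factors and $c_n$ do not depend on $k$ or $j$. Applying $X_3$ yields combinations of $\ket{r,n\pm1,j}'$ and $\ket{r,n,j}'$ with coefficients $R_{j,k,n}\,\alpha^{\cdot}_{nj}$. To return to the $k$-basis at levels $n$ and $n\pm1$, I would invert the Racah transform, using the orthogonality of the Racah polynomials, and establish three identities:
\[
\alpha^{1}_{nj}|_{a\leftrightarrow c}R_{j,k,n}=\sum_{\epsilon\in\{-1,0,1\}}\beta_{nk}^{2+\epsilon}R_{j,k+\epsilon,n+1},
\]
and the similar ones for the diagonal and lowering parts. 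A priori the sum over $k'$ is full. It truncates to $|k'-k|\le1$ because $\alpha^{\cdot}_{nj}$ is affine (after clearing denominators) in the $C^{(12)}$ eigenvalue $\lambda(j)=j(j+a+b+1)$. Multiplication by $\lambda(j)$ acts tridiagonally on the Racah coefficients in $k$, which is the recurrence in the proof of Proposition \ref{prop:L3Action}, i.e. $L_3$ acts tridiagonally. The shifts $n\to n\pm1$ in the Racah coefficients (contiguity in $n$, i.e. in the parameter $N$ of $R_k$) are handled by the Racah contiguity relations in Appendix \ref{app:A}. Equivalently, and more cleanly, I would write the $n$-diagonal part as $X_3$ restricted to fixed $n$. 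This is expressible via $X_1$, $L_1$, $L_3$, $L$ and the relations of $\mathfrak{J}_2$ in Appendix \ref{sec:Rank2Jacobi}: one uses a relation expressing $[L_3,[L_3,X_1]]$-type combinations or $X_1+X_2+X_3=X^{(123)}$ together with $[L_3,X_1]$, which involves $X_2$, so that $X_3=r-X_1-X_2$ on the $r$-sector. This also serves as a cross-check of $\beta^4,\beta^5,\beta^6$.

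Finally I would set $r=1$ to obtain \eqref{eq:9term}. As consistency checks I would use Lemma \ref{her}, namely the symmetry relation $\beta^{1}_{nk}N_{n+1,k-1}=\beta^{9}_{n+1,k-1}N_{n,k}$ with the norms \eqref{scapr}, and the limit $b=c$. The main obstacle is the off-diagonal-in-$n$ step: proving that the mixed Racah identities relating $R_{\cdot,\cdot,n}$ and $R_{\cdot,\cdot,n\pm1}$ hold with exactly the stated $\beta$'s. If the contiguity route becomes unwieldy, I would instead determine $\beta^{1,2,3}$ directly by computing $[L_1,X_3]$ and $[L_1,[L_1,X_3]]$ from the $\mathfrak{J}_2$ relations. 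Knowing the $n\to n+1$ part is spanned by three $k$-shifts, I would then fix the three coefficients by evaluating on extremal vectors, $k=0$ and $k=n$, where the Racah sums collapse to single terms. I would obtain $\beta^{7,8,9}$ from symmetry.
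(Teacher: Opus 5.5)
\textbf{There is a genuine gap: the six $n$-changing coefficients are not established.}

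Your $n$-diagonal argument is sound and gives $\beta^4,\beta^5,\beta^6$ cleanly. After $a\leftrightarrow c$, set $Q_n=(2n+a+b+c+1)(2n+a+b+c+3)$. Then $\tilde\alpha^2_{nj}=\frac{1}{2}-\frac{(2j+a+b+1)^2-c^2}{2Q_n}$ is affine in $\lambda(j)=j(j+a+b+1)$. So the level-$n$ block of $X_3$ is $\frac{1}{2}+\frac{2L_3}{Q_n}-\frac{(a+b+1)^2-c^2}{2Q_n}$, and Proposition~\ref{prop:L3Action} yields $\beta^6_{nk}=2\gamma^1_{nk}/Q_n$ and $\beta^4_{nk}=2\gamma^3_{nk}/Q_n$.

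The gap is the six coefficients that change $n$. This is exactly the step you call the main obstacle, and neither of your routes closes it.
\begin{itemize}
\item The truncation argument fails there. The numerator of $\tilde\alpha^1_{nj}$ is $(n-j+1)(n+j+a+b+c+2)=(n+1)(n+a+b+c+2)-\lambda(j)-cj$, which is not affine in $\lambda(j)$.
\item More fundamentally, the coefficient of $\ket{r,n+1,k'}$ is $\sum_j R_{j,k,n}\,\tilde\alpha^1_{nj}\,[R^{(n+1)}]^{-1}_{k'j}$. This pairs Racah transforms at two different levels, so tridiagonality of $L_3$ at fixed $n$ says nothing about it.
\item You would need a contiguity relation in $n$ for $R_k(\lambda(j);2\nu_2-1,2\nu_3-1,-n-1,2\nu_{12}+n-1)$ and its prefactors, with two parameters moving at once. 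Appendix~\ref{app:A} contains no such relation, only the recurrence \eqref{eq:Racah_Recurrence}.
\item The fallback, as stated, does not help. Relation \eqref{L1X3L1} reduces to $0=0$ on the entries with $k'=k\pm1$; it is exactly the condition that forces tridiagonality in $k$. So it only fixes $\beta^2,\beta^5,\beta^8$ in terms of the $\alpha^i_{nk}$, not $\beta^1$ or $\beta^3$.
\item Evaluating at $k=0$ and $k=n$ gives boundary values only, not the coefficients for general $k$.
\end{itemize}
Incidentally, reversing the tensor factors costs exactly $(-1)^n$, since $CG^{\nu_2,\nu_1}_{m,i,m+i-j}=(-1)^iCG^{\nu_1,\nu_2}_{m,i,j}$. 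So the $n$-changing signs in your displayed formula flip.

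\textbf{How to close it.} The missing ingredient is already in Appendix~\ref{sec:Rank2Jacobi}: relation \eqref{identity}. Using $[L_1,X_1]=[L_3,X_3]=0$, it is equivalent to $[L_2,X_1+X_3]=0$ with $L_2=L-L_1-L_3$, which holds because $C^{(13)}$ commutes with $X^{(1)}+X^{(3)}$. Write $\Lambda_n=-n(n+a+b+c+2)$ and $\mu_k=-k(k+b+c+1)$. Since $L-L_1$ is diagonal with eigenvalue $\Lambda_n-\mu_k$, the relation gives, for $(n',k')\neq(n,k)$,
\[
(\mu_{k'}-\mu_k-\Lambda_{n'}+\Lambda_n)\,(X_3)_{(n',k'),(n,k)}=\bigl([L-L_3,X_1]\bigr)_{(n',k'),(n,k)} .
\]
The right-hand side is explicit in the $\alpha^i$ and $\gamma^i$, and the prefactor does not vanish on the eight neighbours when $a,b,c>-1$. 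For instance
\[
\beta^3_{nk}=\frac{\gamma^1_{nk}\alpha^1_{n,k+1}-\alpha^1_{nk}\gamma^1_{n+1,k}}{2(n-k)+a+1}.
\]
Its numerator contains the factor $2(n-k)+a+1$, and it reduces to the stated $\beta^3_{nk}$. The coefficients $\beta^1$ and $\beta^2$ come out the same way, and $\beta^7,\beta^8,\beta^9$ likewise or via Lemma~\ref{her}.

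Add the band structure: three terms in $n$ from your $((12)3)$ argument, three in $k$ from \eqref{L1X3L1}. This derives the proposition from Propositions~\ref{prop:L3Action} and \ref{prop:X1Action} by pure algebra, with no Racah contiguity and no extremal-vector step. The paper instead works directly in the triple product basis. It splits the inner $CG^{\nu_2,\nu_3}$ with the Hahn $B_2$ relation (HIV/III), then identifies the resulting outer identities with the Hahn relations (HIV/IV), (HIV/III) and (HIII/III). The corrected route trades that contiguity bookkeeping for results already established.

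\textbf{A misprint in the statement.} Any of these computations, or your own consistency checks, shows that the constant $\frac{3}{4}$ in $\beta^5_{nk}$ must be $\frac{1}{4}$. At $n=k=0$ one needs $\beta^5_{00}=\frac{c+1}{a+b+c+3}$, the ground-state mean of $z$. For $a=0$, $b=c=1$ this is $\frac{2}{5}$, whereas the displayed formula gives $\frac{9}{10}$.
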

\begin{proof}
    We again write $\nu_f=\nu_{123}+n$ and recall \eqref{nuabc}. As was done for Proposition \ref{prop:X1Action}, we shall prove an extension of equation \eqref{eq:9term} to the vectors $\ket{r,n,k;a,b,c}$ which reads:
        \begin{multline}
        X_3 \ket{r,n,k;a,b,c}=\beta^1_{nk}r \ket{r,n+1,k-1;a,b,c}+\beta^2_{nk}r \ket{r,n+1,k;a,b,c}+\beta^3_{nk}r \ket{r,n+1,k+1;a,b,c}\\
        +\beta^4_{nk}r \ket{r,n,k-1;a,b,c}+  \beta^5_{nk}r \ket{r,n,k;a,b,c}+\beta^6_{nk} r\ket{r,n,k+1;a,b,c}
        \\+\beta^7_{nk}r \ket{r,n-1,k-1;a,b,c}+\beta^8_{nk}r \ket{r,n-1,k;a,b,c}+\beta^9_{nk}r \ket{r,n-1,k+1;a,b,c}. \label{9termext}
    \end{multline}
    After this reinsertion of $r$, we shall show that the left hand side that is found upon computing the action of $X_3$ matches the 9 terms given on the right hand side. We shall use formula \eqref{eq:vijtripleTensor}.
     We start by looking at the coefficient in front of $\ket{\nu_1;m+n-k-j}\ket{\nu_2;j+k-\ell} \ket{\nu_3;\ell}$  in both sides of equation \eqref{9termext}. 
    On the left hand side, we find
    \begin{multline}
    \label{eq:9terms-lhs}
       \frac{1}{c_n}\left((\ell+1)(2\nu_3+\ell)q_{m+1}^{(\nu_f)}(r)CG_{m+1,n-k,m+n-k-j}^{\nu_1,\nu_{23}+k}CG_{j+1,k,k+j-\ell}^{\nu_2,\nu_3}\right.\\
      \left. +2(\nu_3+\ell)q_m^{(\nu_f)}(r)CG_{m,n-k,m+n-k-j}^{\nu_1,\nu_{23}+k}CG_{j,k,k+j-\ell}^{\nu_2,\nu_3}+q_{m-1}^{(\nu_f)}(r)CG_{m-1,n-k,m+n-k-j}^{\nu_1,\nu_{23}+k}CG_{j-1,k,k+j-\ell}^{\nu_2,\nu_3} \right).
    \end{multline}
On the right hand side, we get
\begin{align}
\label{eq:9terms-rhs}
&\frac{\beta_{nk}^1}{c_{n+1}}r\,q^{(\nu_f+1)}_{m-1}(r)CG_{m-1,n-k+2,m+n-k-j}^{\nu_1,\nu_{23}+k-1}CG_{j+1,k-1,k+j-\ell}^{\nu_2,\nu_3}+\frac{\beta_{nk}^2}{c_{n+1}}r\,q^{(\nu_f+1)}_{m-1}(r)CG_{m-1,n-k+1,m+n-k-j}^{\nu_1,\nu_{23}+k}CG_{j,k,k+j-\ell}^{\nu_2,\nu_3}\nonumber\\
&+\frac{\beta_{nk}^3}{c_{n+1}}r\,q^{(\nu_f+1)}_{m-1}(r)CG_{m-1,n-k,m+n-k-j}^{\nu_1,\nu_{23}+k+1}CG_{j-1,k+1,k+j-\ell}^{\nu_2,\nu_3}+\frac{\beta_{nk}^4}{c_{n}}r\,q^{(\nu_f)}_{m}(r)CG_{m,n-k+1,m+n-k-j}^{\nu_1,\nu_{23}+k-1}CG_{j+1,k-1,k+j-\ell}^{\nu_2,\nu_3}\nonumber\\
&+\frac{\beta_{nk}^5}{c_{n}}r\,q^{(\nu_f)}_{m}(r)CG_{m,n-k,m+n-k-j}^{\nu_1,\nu_{23}+k}CG_{j,k,k+j-\ell}^{\nu_2,\nu_3}+\frac{\beta_{nk}^6}{c_{n}}r\,q^{(\nu_f)}_{m}(r)CG_{m,n-k-1,m+n-k-j}^{\nu_1,\nu_{23}+k+1}CG_{j-1,k+1,k+j-\ell}^{\nu_2,\nu_3}\nonumber\\
&+\frac{\beta_{nk}^7}{c_{n-1}}r\,q^{(\nu_f-1)}_{m+1}(r)CG_{m+1,n-k,m+n-k-j}^{\nu_1,\nu_{23}+k-1}CG_{j+1,k-1,k+j-\ell}^{\nu_2,\nu_3}+\frac{\beta_{nk}^8}{c_{n-1}}r\,q^{(\nu_f-1)}_{m+1}(r)CG_{m+1,n-k-1,m+n-k-j}^{\nu_1,\nu_{23}+k}CG_{j,k,k+j-\ell}^{\nu_2,\nu_3}\nonumber\\
&+\frac{\beta_{nk}^9}{c_{n-1}}r\,q^{(\nu_f-1)}_{m+1}(r)CG_{m+1,n-k-2,m+n-k-j}^{\nu_1,\nu_{23}+k+1}CG_{j-1,k+1,k+j-\ell}^{\nu_2,\nu_3}.
\end{align}
We again apply the contiguity relations \eqref{eq:contiguity-q1-new} and \eqref{eq:contiguity-q2-new} as well as the recurrence relation \eqref{eq:recurrence_qn} of the Laguerre functions $q_m^{(\nu_f)}$ so as to arrange that only Laguerre functions with the same parameter $\nu_f$ remain in the formulas. 
We can now expand \eqref{eq:9terms-rhs} into the basis consisting of these Laguerre functions $q_m^{(\nu_f)}$. By identifying the coefficients of $q_{m+1}^{(\nu_f)}$, in \eqref{eq:9terms-lhs} and \eqref{eq:9terms-rhs}, we obtain the following equation involving the Clebsch--Gordan coefficients 
\begin{align}
\label{eq:coef-pm1-9terms}
&CG_{j+1,k-1,k+j-\ell}^{\nu_2,\nu_3}\left(m(m+1)\beta_{nk}^1CG_{m-1,n-k+2,m+n-k-j}^{\nu_1,\nu_{23}+k-1}+(m+1)(2\nu_f+m)\beta_{nk}^4CG^{\nu_1,\nu_{23}+k-1}_{m,n-k+1,m+n-k-j}\right.\nonumber\\
&\left.\hspace{2.7cm}+(2\nu_f+m)(2\nu_f+m-1)\beta_{nk}^7CG_{m+1,n-k,m+n-k-j}^{\nu_1,\nu_{23}+k-1} \right)\nonumber\\
&+CG_{j,k,k+j-\ell}^{\nu_2,\nu_3}\left(m(m+1)\beta_{nk}^2CG_{m-1,n-k+1,m+n-k-j}^{\nu_1,\nu_{23}+k}+(m+1)(2\nu_f+m)\beta_{nk}^5CG_{m,n-k,m+n-k-j}^{\nu_1,\nu_{23}+k}\right.\nonumber\\
&\left. \hspace{2.5cm}+(2\nu_f+m)(2\nu_f+m-1)\beta_{nk}^8CG_{m+1,n-k-1,m+n-k-j}^{\nu_1,\nu_{23}+k} \right)\nonumber\\
&+CG_{j-1,k+1,k+j-\ell}^{\nu_2,\nu_3}\left(m(m+1)\beta_{nk}^3CG_{m-1,n-k,m+n-k-j}^{\nu_1,\nu_{23}+k+1} +(m+1)(2\nu_f+m)\beta_{nk}^6CG_{m,n-k-1,m+n-k-j}^{\nu_1,\nu_{23}+k+1}\right.\nonumber\\
&\left.\hspace{3.2cm}+(2\nu_f+m)(2\nu_f+m-1)\beta_{nk}^9CG_{m+1,n-k-2,m+n-k-j}^{\nu_1,\nu_{23}+k+1}\right)\nonumber\\
&=(\ell+1)(2\nu_3+\ell)CG_{m+1,n-k,m+n-k-j}^{\nu_1,\nu_{23}+k}CG_{j+1,k,k+j-\ell}^{\nu_2,\nu_3}.
\end{align}
Similarly by identifying the coefficients of $q_{m}^{(\nu_f)}$, in \eqref{eq:9terms-lhs} and \eqref{eq:9terms-rhs}, we find
\begin{align}
\label{eq:coef-pm0-9terms}
&CG_{j+1,k-1,k+j-\ell}^{\nu_2,\nu_3}\left(m\beta_{nk}^1CG_{m-1,n-k+2,m+n-k-j}^{\nu_1,\nu_{23}+k-1}+(\nu_f+m)\beta_{nk}^4CG_{m,n-k+1,m+n-k-j}^{\nu_1,\nu_{23}+k-1}\right.\nonumber \\
&\left. \hspace{2.7cm}+(2\nu_f+m-1)\beta_{nk}^7CG_{m+1,n-k,m+n-k-j}^{\nu_1,\nu_{23}+k-1} \right)\nonumber\\
&+CG_{j,k,k+j-\ell}^{\nu_2,\nu_3}\left(m\beta_{nk}^2CG_{m-1,n-k+1,m+n-k-j}^{\nu_1,\nu_{23}+k}+(m+\nu_f)\beta_{nk}^5CG_{m,n-k,m+n-k-j}^{\nu_1,\nu_{23}+k}\right.\nonumber\\
&\left.\hspace{2.5cm}+(2\nu_f+m-1)\beta_{nk}^8CG_{m+1,n-k-1,m+n-k-j}^{\nu_1,\nu_{23}+k} \right)\nonumber\\
&+CG_{j-1,k+1,k+j-\ell}^{\nu_2,\nu_3}\left(m\beta_{nk}^3CG_{m-1,n-k,m+n-k-j}^{\nu_1,\nu_{23}+k+1}+(\nu_f+m)\beta_{nk}^6CG_{m,n-k-1,m+n-k-j}^{\nu_1,\nu_{23}+k+1}\right.\nonumber\\
&\left.\hspace{3.2cm}+(2\nu_f+m-1)\beta_{nk}^9CG_{m+1,n-k-2,m+n-k-j}^{\nu_1,\nu_{23}+k+1}\right)\nonumber\\
&=(\nu_3+\ell)CG_{m,n-k,m+n-k-j}^{\nu_1,\nu_{23}+k}CG_{j,k,k+j-\ell}^{\nu_2,\nu_3}.
\end{align}
Finally by identifying the coefficients of $q_{m-1}^{(\nu_f)}$, in \eqref{eq:9terms-lhs} and \eqref{eq:9terms-rhs}, we get
\begin{align}
\label{eq:coef-pmm1-9terms}
&CG_{j+1,k-1,k+j-\ell}^{\nu_2,\nu_3}\left(\beta_{nk}^1CG_{m-1,n-k+2,m+n-k-j}^{\nu_1,\nu_{23}+k-1}+\beta_{nk}^4CG_{m,n-k+1,m+n-k-j}^{\nu_1,\nu_{23}+k-1}+\beta_{nk}^7CG_{m+1,n-k,m+n-k-j}^{\nu_1,\nu_{23}+k-1} \right)\nonumber\\
&+CG_{j,k,k+j-\ell}^{\nu_2,\nu_3}\left(\beta_{nk}^2CG_{m-1,n-k+1,m+n-k-j}^{\nu_1,\nu_{23}+k}+\beta_{nk}^5CG_{m,n-k,m+n-k-j}^{\nu_1,\nu_{23}+k}+\beta_{nk}^8CG_{m+1,n-k-1,m+n-k-j}^{\nu_1,\nu_{23}+k} \right)\nonumber\\
&+CG_{j-1,k+1,k+j-\ell}^{\nu_2,\nu_3}\left(\beta_{nk}^3CG_{m-1,n-k,m+n-k-j}^{\nu_1,\nu_{23}+k+1}+\beta_{nk}^6CG_{m,n-k-1,m+n-k-j}^{\nu_1,\nu_{23}+k+1}+\beta_{nk}^9CG_{m+1,n-k-2,m+n-k-j}^{\nu_1,\nu_{23}+k+1}\right)\nonumber\\
&=CG_{m-1,n-k,m+n-k-j}^{\nu_1,\nu_{23}+k}CG_{j-1,k,k+j-\ell}^{\nu_2,\nu_3}.
\end{align}
We shall now demonstrate the validity of equation \eqref{eq:coef-pm1-9terms}; the proofs of equations \eqref{eq:coef-pm0-9terms} and \eqref{eq:coef-pmm1-9terms} follow analogously. 

Applying equation \eqref{eq:clebsh-gordan-dfn} and the $B_2$-contiguity relation (HIV/III) given in Appendix \ref{sec:Hahn-pols} with the parameters $\alpha=2\nu_2-1,$ $\beta=2\nu_3-1,$ $N=k+j+1$, $x=k+j-\ell,$ the right hand side of equation \eqref{eq:coef-pm1-9terms} becomes
\begin{equation}
\label{eq:contiguity-HIV/III-clebsch}
CG_{m+1,n-k,m+n-k-j}^{\nu_1,\nu_{23}+k}\left(\Psi_{k,j}^{+1,+} CG_{j-1,k+1,k+j-\ell}^{\nu_2,\nu_3}+\Psi_{k,j}^{0,+} CG_{j,k,k+j-\ell}^{\nu_2,\nu_3}+\Psi_{k,j}^{-1,+} CG_{j+1,k-1,k+j-\ell}^{\nu_2,\nu_3} \right).
\end{equation}
with 
\begin{align}
\Psi_{k,j}^{+1,+}&=\frac{j\left(j +1\right) \left(k +2 \nu_{23}-1 \right)(k+1)}{ \left(2 k+2 \nu_{23}-1 \right)\left(2k +2\nu_{23} \right)},\\
\Psi_{k,j}^{0,+}&=\frac{\left(j +1\right)\left(k^2+2 k \nu_2 +2 k \nu_3 +2 \nu_3 \nu_2 +2 \nu_3^{2}-k -2 \nu_3 \right)  \left(2 k +2 \nu_{23} +j \right)}{2 \left(k+\nu_{23}-1 \right) \left(k +\nu_{23} \right) },\\
\Psi_{k,j}^{-1,+}&=\frac{\left(2 k +2 \nu_{23} +j \right) \left(2 k+2 \nu_{23} +j-1 \right) \left(2 \nu_3+k-1 \right)(2\nu_2+k-1)}{ \left(2 k+2 \nu_{23}-1 \right)\left(2k+2\nu_{23}-2\right)}.
\end{align}
Identifying the coefficients of $CG_{j-1,k+1,k+j-\ell}^{\nu_2,\nu_3}, CG_{j,k,k+j-\ell}^{\nu_2,\nu_3}$ and $CG_{j+1,k-1,k+j-\ell}^{\nu_2,\nu_3}$, in the left hand side of equation \eqref{eq:coef-pm1-9terms} and equation \eqref{eq:contiguity-HIV/III-clebsch}, one gets the following three contiguity relations for the Clebsch--Gordan coefficients
\begin{multline}
\Psi_{n-k,j}^{-1,+}CG_{m+1,n-k,m+n-k-j}^{\nu_1,\nu_{23}+k}=m(m+1)\beta_{nk}^1CG_{m-1,n-k+2,m+n-k-j}^{\nu_1,\nu_{23}+k-1}\\+(m+1)(2\nu_f+m)\beta_{nk}^4CG^{\nu_1,\nu_{23}+k-1}_{m,n-k+1,m+n-k-j}+(2\nu_f+m)(2\nu_f+m-1)\beta_{nk}^7CG_{m+1,n-k,m+n-k-j}^{\nu_1,\nu_{23}+k-1},
\end{multline}
\begin{multline}
\Psi_{n-k,j}^{0,+}CG_{m+1,n-k,m+n-k-j}^{\nu_1,\nu_{23}+k}=m(m+1)\beta_{nk}^2CG_{m-1,n-k+1,m+n-k-j}^{\nu_1,\nu_{23}+k}+(m+1)(2\nu_f+m)\beta_{nk}^5CG_{m,n-k,m+n-k-j}^{\nu_1,\nu_{23}+k}\\ +(2\nu_f+m)(2\nu_f+m-1)\beta_{nk}^8CG_{m+1,n-k-1,m+n-k-j}^{\nu_1,\nu_{23}+k},
\end{multline}
\begin{multline}
\Psi_{n-k,j}^{+1,+}CG_{m+1,n-k,m+n-k-j}^{\nu_1,\nu_{23}+k}=m(m+1)\beta_{nk}^3CG_{m-1,n-k,m+n-k-j}^{\nu_1,\nu_{23}+k+1} +(m+1)(2\nu_f+m)\beta_{nk}^6CG_{m,n-k-1,m+n-k-j}^{\nu_1,\nu_{23}+k+1}\\+(2\nu_f+m)(2\nu_f+m-1)\beta_{nk}^9CG_{m+1,n-k-2,m+n-k-j}^{\nu_1,\nu_{23}+k+1}.
\end{multline}
To complete the proof one checks that the first equation is equivalent to the $B_2'$-contiguity relation (HIV/IV) for the Hahn polynomials, that the second one is equivalent to the $B_2$-contiguity relation (HIV/III)  and finally that the last one is the $B_2'$-contiguity relation (HIII/III) with all these contiguity relation given in Appendix \ref{sec:Hahn-pols}. The final step of the proof is achieved by reverting to $r=1$.
\end{proof}

\subsection{Raising and lowering relations}
We end this section with the description of the lowering and raising relations acting on the state vectors \,      $\ket{n,k;a,b,c}$ using the generators of the rank 2 Jacobi algebra. To do so we follow the ideas used for the Askey--Wilson polynomials in \cite{Koo2007}. The raising and lowering relations for the parameter $n$ can be described as follows
\begin{align} 
       & \big((2n+a+b+c+1)[X_1,L]+4[[X_1,L],L]\big)\ket{n,k;a,b,c}\nonumber\\
       &\hspace{3cm} =2(k-n-1)(n+k+a+b+c+2)\ket{n+1,k;a,b,c},\\
      &  \big((2n+a+b+c+3)[X_1,L]-4[[X_1,L],L]\big)\ket{n,k;a,b,c}\nonumber\\
      &\hspace{3cm}=2(n+k+b+c+1)(n-k+a)\ket{n-1,k;a,b,c}.
\end{align}
The first relation is a consequence of  
    \begin{equation}
    (\lambda_{n}-\lambda_{n-1})[X_1,L]\ket{n,k;a,b,c}=\alpha^1_{nk}(\lambda_{n}-\lambda_{n-1})(\lambda_{n}-\lambda_{n+1})\ket{n+1,k;a,b,c}+\alpha^3_{nk}(\lambda_{n}-\lambda_{n-1})^2\ket{n-1,k;a,b,c},
    \end{equation}
    and 
    \begin{equation}
    [[X_1,L],L]\ket{n,k;a,b,c}=\alpha^1_{nk}(\lambda_{n}-\lambda_{n-1})^2\ket{n+1,k;a,b,c}+\alpha^3_{nk}(\lambda_{n}-\lambda_{n-1})^2\ket{n-1,k;a,b,c},
    \end{equation}
    where $\lambda _n = -n(n+a+b+c+2)$ is the eigenvalue of $L$ given in formula \eqref{specL} and $\alpha^1_{nk},\ \alpha^3_{nk}$ are defined in Proposition \ref{prop:X1Action}. The difference between both equations leads to the formula for the raising relation after a direct computation of the coefficients. The formula for the lowering relation is obtained in a similar fashion.

Similarly, the raising and lowering relations for the parameter $k$ are given by
    \begin{align}
      &  \big( (b+c+2k)[L_3,L_1]+4[[L_3,L_1],L_1]\big)\ket{n,k;a,b,c}\nonumber\\
       & \hspace{3cm}=2(k+1)(n-k+a)(k+b+c+1)(n+k+a+b+c+2)\ket{n,k+1;a,b,c},\\
       & \big(-(b+c+2k+2)[L_3,L_1]+4[[L_3,L_1],L_1]\big)\ket{n,k;a,b,c}\nonumber\\
       &\hspace{3cm} =2(n-k+1)(k+b)(k+c)(n+k+b+c+1)\ket{n,k-1;a,b,c}.
    \end{align}
The proof is similar to the raising and lowering relations for the parameter $n$.


\section{Algebraic solution of the generic superintegrable model on $S^2$}\label{sec:eigenstateH}

In this section we wish to explain how the model on $S^2$ with Hamiltonian \eqref{eq:Identification Nu abc} can be solved algebraically by exploiting its dynamical algebra $\mathfrak{J}_2$. The goal is to obtain the wavefunctions of this system from the representations of $\mathfrak{J}_2$.

\subsection{The position eigenbasis in barycentric coordinates}

The operators $X_i, i=1,2,3$ are naturally identified as position operators. We shall take the joint eigenvectors of $X_1, X_2,X_3$ to have for eigenvalues $x,y,z$. We already observed at the beginning of subsection 4.1.2 that their range is $(0,+\infty)$. Since $X^{(123)}=X_1+X_2+X_3 \sim 1 $ on the physical space we have $x+y+z=1$. A simple argument (see the remark below)
shows that this implies that $0\leq x\leq 1-y\leq 1$. These eigenvalues can thus be identified to the barycentric coordinates of the triangle that we can view as an octant of $S^2$. We shall therefore use the following basis vectors:
\begin{equation} \label{posbas}
    |x,y;a,b,c\rangle =|\nu_1;x\rangle|\nu_2;y\rangle|\nu_3;z=1-x-y\rangle \qquad  \text{with} \qquad 0\leq x\leq 1-y\leq 1,
\end{equation}
in the notation of subsection \ref{subsection:eigX} and with \eqref{nuabc} providing the relations between $a, b, c$ and $\nu_1, \nu_2, \nu_3$. Given the normalisation of the vectors $\ket{\nu,x}$ we have
\begin{equation}
    \bra{x',y'}\ket{x,y}=\delta(x'-x)\delta(y'-y)  \qquad \text{and}\qquad
 \int_{0\le x\le  1-y \le 1}dxdy \ket{x,y}\bra{x,y}|=1.
\end{equation}

\begin{rem}
Since $z\ge 0$, it follows that
\[
x+y\le 1.
\]
Hence
\[
x\le 1-y.
\]
Because $x\ge 0$ and $y\ge 0$, we also have
\[
0\le x\le 1-y\le 1.
\]
In particular, each of the variables $x,y,z$ takes values in the unit interval and $(x,y,z)$ lies in the standard $2$–simplex.
\end{rem}

\subsection{The wavefunctions \label{sec:wave}}

In these simplex coordinates, the wavefunctions will be:
\begin{equation}
    \phi_{n,k}^{(a,b,c)}(x,y)= \langle x,y;a,b,c|n,k;a,b,c\rangle.
\end{equation}
As a consequence of the identity
\begin{equation}
    \langle x,y;a,b,c|\ X^{(1)}\ket{n,k;a,b,c}=\left(\langle x, y;a,b,c|X^{(1)}\right)\ket{n,k;a,b,c}=x\langle x,y;a,b,c\ket{n,k;a,b,c},
\end{equation}
and of a similar one with $X^{(1)}$ replaced by $X^{(3)}$ and the factor $x$ in the right-most expression by $(1-x-y)$, these wavefunctions satisfy the three-term relation given by
\begin{equation} \label{rec1}
    x \phi_{n,k}^{(a,b,c)}(x,y)=\alpha^1_{nk} \phi_{n+1,k}^{(a,b,c)}(x,y) + \alpha^2_{nk}  \phi_{n,k}^{(a,b,c)}(x,y) +\alpha^3_{nk} \phi_{n-1,k}^{(a,b,c)}(x,y),  
\end{equation}
where $\alpha^i_{nk},\ i=1,2,3$ are defined in Proposition \ref{prop:X1Action}, and satisfy also the nine-term relation  
\begin{align} \label{rec2}
    (1-x-y)\phi_{n,k}^{(a,b,c)}(x,y)&=\beta^1_{nk}\phi_{n+1,k-1}^{(a,b,c)}(x,y)+\beta^2_{nk} \phi_{n+1,k}^{(a,b,c)}(x,y)+\beta^3_{nk} \phi_{n+1,k+1}^{(a,b,c)}(x,y)\\
&\ \ +\beta^4_{nk}\phi_{n,k-1}^{(a,b,c)}(x,y) +  \beta^5_{nk} \phi_{n,k}^{(a,b,c)}(x,y)+\beta^6_{nk}\phi_{n,k+1}^{(a,b,c)}(x,y)\nonumber\\
&\ \ +\beta^7_{nk} \phi_{n-1,k-1}^{(a,b,c)}(x,y)+\beta^8_{nk} \phi_{n-1,k}^{(a,b,c)}(x,y)+\beta^9_{nk} \phi_{n-1,k+1}^{(a,b,c)}(x,y),\nonumber
\end{align}
with $\beta_{n,k}^i$ defined in Proposition \ref{prop:X3Action}.

These recurrence relations which are part of the representation theory of $\mathcal{J}_2$, are known \cite{crampe2025rank} to be verified by the two-variable Jacobi polynomials $J_{n.k}^{(a,b,c)}(x,y)$ on the triangle. Hence, since $J_{0.0}^{(a,b,c)}(x,y)=1$, 
\begin{equation} \label{wavefunctionxy}
    \phi_{n,k}^{(a,b,c)}(x,y)=\phi_{0,0}^{(a,b,c)}(x,y) J_{n.k}^{(a,b,c)}(x,y).
\end{equation}
In order to compute the ground state wavefunction $\phi_{0,0}^{(a,b,c)}(x,y)$, we shall use the explicit expression \eqref{posbas} of the position vectors $|x,y;a,b,c\rangle $ (noting that we will soon drop the parameters $a,b,c$ in their designation) as well as the vectors
\begin{equation}
    |n=0, k=0;a,b,c\rangle =\frac{1}{c_0}\sum_{m=0}^{\infty} q_m^{(\nu_{123})}(1) |\nu_{123}, \nu_{23};m\rangle.
\end{equation}
Recalling that $\langle \nu;j\ket{\nu;j'}=(2\nu)_jj!\delta_{j,j'}$ and the expression \eqref{eq:JointC123 and C23} for $   \ket{\nu_{123}+n,\nu_{23}+k;m}$,  we find
\begin{multline}
    \langle x,y\ket{0,0;a,b,c}=\frac{1}{c_0}\sum_{m=0}^\infty \sum_{j=0}^m\sum_{\ell=0}^{m-j}q_j^{(\nu_1)}(x)q_{m-j-\ell}^{(\nu_2)}(y)q_\ell^{(\nu_3)}(1-x-y)q_m^{(\nu_{123})}(1)\\
    \times\binom{m}{j}\binom{m-j}{m-j-\ell} j!(m-j-\ell)! \ell! (2\nu_1)_j(2\nu_2)_{m-j-\ell}(2\nu_3)_\ell.
\end{multline}
Simple combinatorics gives
\begin{equation}
    \binom{m}{j}\binom{m-j}{m-j-\ell} j!(m-j-\ell)! \ell!=m!
\end{equation}
and yields
\begin{multline}
    \langle x,y\ket{0,0;a,b,c}=\\
    \frac{1}{c_0}\sum_{m=0}^\infty m! \;q_m^{(\nu_{123})}(1)\sum_{j=0}^m\sum_{\ell=0}^{m-j}q_j^{(\nu_1)}(x)q_{m-j-\ell}^{(\nu_2)}(y)q_\ell^{(\nu_3)}(1-x-y)
    (2\nu_1)_j(2\nu_2)_{m-j-\ell}(2\nu_3)_\ell.
\end{multline}
Now insert the expressions for the functions $q_i^{(\nu)}$ that we reproduce below
\begin{equation}
    q_i^{(\nu)}(r)=\frac{(-1)^ i}{\Gamma(2\nu)(2\nu)_i} e^{-\frac{r}{2}}r^{\nu -\frac{1}{2}}L_i^{(2\nu -1)}(r).
\end{equation}
One gets
\begin{multline}
    \langle x,y\ket{0,0;a,b,c}=\\
    \frac{1}{c_0}\sum_{m=0}^\infty m! \;q_m^{(\nu_{123})}(1) \frac{e^{-\frac{1}{2}}x^{\nu_1-\frac{1}{2}}y^{\nu_2-\frac{1}{2}}(1-x-y)^{\nu_3-\frac{1}{2}}}{\sqrt{\Gamma(2\nu_1\Gamma(2\nu_2)\Gamma(2\nu_3)}}\sum_{j=0}^m\sum_{\ell=0}^{m-j}L_j^{(2\nu_1)}(x)L_{m-j-\ell}^{(2\nu_2)}(y)L_\ell^{(2\nu_3)}(1-x-y).
\end{multline}
Using twice the addition formula for the Laguerre polynomials
\begin{equation}
    \sum_{i=0}^nL_j^{(\alpha)}(x)L_{n-j}^{(\beta)}(y)=L_n^{(\alpha+\beta+1)}(x+y),
\end{equation}
one finds
\[
\sum_{j=0}^m \sum_{\ell=0}^{m-j}
L_j^{(2\nu_1-1)}(x)\,
L_{m-j-\ell}^{(2\nu_2-1)}(y)\,
L_\ell^{(2\nu_3-1)}(1-x-y) = L_m^{(2\nu_{123}-1)}(1).
\]
Putting this together, we have
\begin{equation} \label{xy00}
   \langle x,y|0,0;a,b,c\rangle=\frac{1}{c_0} \frac{e^{-\frac{1}{2}}x^{\nu_1-\frac{1}{2}}y^{\nu_2-\frac{1}{2}}(1-x-y)^{\nu_3-\frac{1}{2}}}{\sqrt{\Gamma(2\nu_1\Gamma(2\nu_2)\Gamma(2\nu_3)\Gamma(2\nu_{123})}}\sum_{m=0}^{\infty}\frac{m!}{(2\nu_{123})_m} \left(L_m^{(2\nu_{123}-1)}(1)\right)^2.
\end{equation}
Formally the sum in \eqref{xy00} is singular and per \eqref{resid} is equal to $e\Gamma(2\nu_{123})\delta(0)$. In the physical space spanned by the discrete vectors $\ket{n,k;a,b,c}$, we assume that $\delta(0)$ is regulated to be a constant $K$. Recalling from \eqref{cn} that 
\begin{equation}
    c_0=\left[\frac{\Gamma(2\nu_{123})}{\Gamma(2\nu_1\Gamma(2\nu_2)\Gamma(2\nu_3)}\right]^\frac{1}{2}, 
\end{equation}
we finally obtain
\begin{equation}
      \langle x,y|0,0;a,b,c\rangle=K x^{\frac{a}{2}}y^{\frac{b}{2}}(1-x-y)^{\frac{c}{2}}.
\end{equation}
There only remains to determine the constant $K$. This is done from the fact that we know on the one hand (see \eqref{scapr} and \eqref{normJac}) that the norm squared of $|0,0;a,b,c\rangle$  is given by
\begin{equation}
    \langle 0,0;a,b,c|0,0;a,b,c\rangle=N_0^{(a,b+c+1)}N_0^{(b,c)}=\frac{\Gamma(a+1)\Gamma (b+1) \Gamma (c+1)}{\Gamma(a+b+c+3)},
\end{equation}
and one the other hand, that 
\begin{align}
     &\int_{0\le x\le  1-y \le 1}dxdy \; \langle 0,0, a,b,c\ket{x,y}\bra{x,y}|0,0;a,b,c\rangle = \langle 0,0, a,b,c|0,0;a,b,c\rangle \nonumber\\
     &=|K|^2 \int_{0\le x\le  1-y \le 1}dxdy \; x^ ay^b(1-x-y)^c = |K|^2 \;\frac{\Gamma(a+1)\Gamma (b+1) \Gamma (c+1)}{\Gamma(a+b+c+3)},
\end{align}
 with the last equality known from the expression of the Dirichlet integral. This hence implies that $K=1$.
 It thus results that in the barycentric coordinates the wavefunctions are
 \begin{equation}\label{wavefuncbar}
      \phi_{n,k}^{(a,b,c)}(x,y)= \langle x,y;a,b,c|n,k;a,b,c\rangle = x^{\frac{a}{2}}y^{\frac{b}{2}}(1-x-y)^{\frac{c}{2}}J_{n.k}^{(a,b,c)}(x,y),
 \end{equation}
with $J_{n.k}^{(a,b,c)}(x,y)$ the two-variable Jacobi polynomials. 

We recall that these are polynomials of total degree $n$ in the variables $x$ and $y$, orthogonal on the simplex $0\le x \le 1-y \le1$, and are defined as follows \cite{Koornwinder}, \cite{dunkl2014orthogonal}, \cite{crampe2025rank}, \cite{crampe2026algebraic}:
\begin{align} 
 &J_{n,k}^{(a,b,c)}(x,y)=J_{n-k}^{(a,b+c+2k+1)}\bigl(x\bigr) \ (1-x)^k \ J_k^{(b,c)}\left(\frac{y}{1-x}\right), \quad \mbox{}\quad n \ge k \ge 0, \label{defJac2}
\end{align}
in terms of the univariate Jacobi polynomial $J_n^{(\alpha,\beta)}(u)$ on the interval $[0,1]$ given by
\begin{equation}
     J_n^{(\alpha,\beta)}(u) = \dfrac{(\alpha+1)_n}{n!}\ {}_2F_1 \argu{-n,n+\alpha+\beta+1}{\alpha+1}{x}. \label{defJac1}
\end{equation}
Their orthogonality property follows from in the present context from
\begin{align}
     &N_{n-k}^{(a,b+c+2k+1)}\,N_k^{(b,c)} \delta_{n'n}\delta_{k'k} = \langle n', k';a,b,c|n,k;a,b,c\rangle \nonumber\\
     =&\int_{0\le x\le  1-y \le 1}dxdy \; \langle n',k', a,b,c\ket{x,y}\bra{x,y}|n,k;a,b,c\rangle  \\
     =&\int_{0\le x\le  1-y \le 1}dxdy \; x^ ay^b(1-x-y)^c J_{n'.k'}^{(a,b,c)}(x,y)J_{n.k}^{(a,b,c)}(x,y).
\end{align}

\subsection{Spherical coordinates}

One might wish to return to the original physical/Cartesian coordinates $(x_1, x_2,x_3)$ in which the model was defined initially or, the spherical coordinates $(\vartheta, \varphi)$ given by
\begin{equation} \label{sph_coord}
    x_1=\sin\varphi \sin\vartheta, \quad x_2=\cos\varphi \sin\vartheta, \quad x_3=\cos\vartheta, \qquad 0 \le \vartheta \le \frac{\pi}{2}, \qquad 0 \le \varphi \le \frac{\pi}{2},
\end{equation}
where the range corresponds to the octant with  $x_1,x_2,x_3>0$. Recall that 
\begin{equation} \label{carbar}
    x=x_1^2, \qquad y=x_2^2,\qquad  z=1-x-y=x_3^2.
\end{equation}

The relation between the flat measure $dxdy$ on the triangle and the standard one, $d\mu_{S^2}=\sin\vartheta d\vartheta d\varphi$, on $S^2$ is readily found to be 
\begin{equation}
     dxdy=4\sqrt{xy(1-x-y)}\;d\mu _{S^2}.
\end{equation}
This implies that the orthogonal wave functions of $\cH$ for the usual measure on the sphere $\sin \theta~d\theta d\varphi$ are given by
\begin{equation}
    \Psi_{n,k}^{(a,b,c)}(\vartheta,\varphi)=x^{\frac{a}{2}+\frac{1}{4}}y^{\frac{b}{2}+\frac{1}{4}}(1-x-y)^{\frac{c}{2}+\frac{1}{4}}J_{n,k}^{(a,b,c)}(x,y),
\end{equation}
with $(x,y)=(x(\vartheta,\varphi),y(\vartheta,\varphi))$. 
This matches the formulas in \cite{Iliev17,Iliev18} where the generic superintegrable model on the two-sphere is discussed in the simplex coordinates. 

It is observed that the wavefunctions $\Psi^{(a,b,c)}_{n,k}(\vartheta,\varphi)$ labeled by the eigenvalues (involving $k$) of $L_1$, are not separated in the spherical coordinates. In this respect, it has been observed in \cite{crampe2026algebraic} that an alternative basis for the physical representation of $\mathcal{J}_2$ can be obtained by diagonalizing $L_3$. This leads to the functions
\begin{equation}
  ^{\pi}  \phi_{n,k}^{(a,b,c)}(x,y)
  =\left[x^ ay^b(1-x-y)^c\right]^{\frac{1}{2}} J_{n.k}^{(c,b,a)}(1-x-y,y),
\end{equation}
which is obtained from the wavefunction \eqref{wavefunctionxy} by performing the reflection $\pi$:
\begin{equation}
    x\leftrightarrow z=(1-x-y), \qquad a\leftrightarrow c.
\end{equation} It was also shown in \cite{crampe2026algebraic} that the overlaps between the functions $^{\pi}  \phi_{n,k}^{(a,b,c)}(x,y)$ and $\phi_{n,k}^{(a,b,c)}(x,y)$ are given in terms of Racah polynomials.
Now, it is readily found that
\begin{equation}
    1-x-y=\cos^2\vartheta,\qquad \frac{y}{x+y}=\cos^2\varphi.
\end{equation}
Hence, in the spherical coordinates \eqref{sph_coord}, the corresponding wavefunctions (with the inclusion of the Jacobian factor) will read:
\begin{align}
& ^{\pi}\Psi_{n,k}^{(a,b,c)}(\vartheta,\varphi)=\\
& (\sin\varphi \; \sin\vartheta )^{a+\frac{1}{2}} (\cos\varphi \; \sin\vartheta)^{b+\frac{1}{2}} (\cos \vartheta)^{c+\frac{1}{2}} \; J_{n-k}^{(c, a+b+2k+1)}(\cos^2\vartheta)\; \sin^{2k}\vartheta \; J_k^{(b,a)}(\cos^2\varphi),\nonumber
\end{align}
which are manifestly separated.
If we rewrite this wavefunction $^{\pi}\Psi_{n,k}^{(a,b,c)}(\vartheta,\varphi)$ in terms of the standard Jacobi polynomials $P_n^{(a, b)}(x)$ \cite{koekoek2010hypergeometric} which are related to $J_n^{(a,b)}(x)$ as follows
\begin{equation}
J_n^{(a,b)}(x) = P_n^{(a,b)}(1-2x),
\end{equation}
we find 
\begin{align} \label{wfPn}
& ^{\pi}\Psi_{n,k}^{(a,b,c)}(\vartheta,\varphi)=\\
& (\sin\varphi)^{a+\frac{1}{2}} (\cos\varphi)^{b+\frac{1}{2}} (\cos \vartheta)^{c+\frac{1}{2}} \; (\sin \vartheta) ^{a+b+2k+1} \; P_{n-k}^{(c, a+b+2k+1)}(-\cos2\vartheta)\;  P_k^{(b,a)}(-\cos2\varphi).\nonumber
\end{align}
Using the property \cite{functions1955bateman} according to which $P_n^{(a, b)}(-x)=(-1)^nP_n^{(b, a)}(x)$, we find that \eqref{wfPn} matches (up to an unimportant sign) with the expression for this wavefunction that is found in \cite{kalnins2007wilson} between eqs (8) and (9) if the (square root of the ) $\sin \vartheta$ of the measure $\mu_{S^2}$ is included in the function and if one makes the substitution $k \rightarrow n$ and $n \rightarrow k+n$.

Let us note that the wavefunctions $\Psi_{n,k}^{(a,b,c)}(\vartheta,\varphi)$ that we focused on in this paper, are separated in the \textit{different} spherical coordinates where $x_1$ and $x_3$ are exchanged (together with $a$ and $c$). This is in keeping with the fact \cite{miller2013classical} that superintegrable models with constants of motion that are quadratic in the momenta separate in more than one coordinate system and that in two dimensions these are in correspondance with the quadratic conserved quantity that is diagonalized in addition to the Hamiltonian.

\subsection{Differential properties of the two-variable Jacobi polynomials}

We have recalled in Section 2, the studies \cite{GVZ2014}, \cite{genest2014racah} that identify the generic superintegrable model on $S^2$ as the total Casimir operator subjected to a constraint, for the diagonal embedding of $\mathfrak{su}(1,1)$ realized through the sum of three singular oscillators. These studies allowed to identified the (differential) operators $\mathcal{L}, \mathcal{L}_1, \mathcal{L}_3, \mathcal{X}_1, \mathcal{X}_3$ that form the generators of the dynamical algebra. While the determination of the commutation relations of these operators is a priori straightforward given their explicit expressions, we have instead obtained those relations by using the $\mathfrak{su}(1,1)$ underpinning. This was then taken to model the abstract algebra $\mathcal{J}_2$ with generators correspondingly designated by straight letters and verifying the same relations. We then proceeded to construct the physical representation of $\mathcal{J}_2$ and to find that the two-variable Jacobi polynomials arise in the wavefunctions that are overlaps between two representation bases of $\mathcal{J}_2$.

We can now revert to the differential realization of the generators and exploit our representation theory results to recover in this superintegrable model framework the differential equations and recurrence relations obeyed by these bivariate polynomials. Precisely, we consider the expressions for the operators $\mathcal{L}, \mathcal{L}_1, \mathcal{L}_3, \mathcal{X}_1, \mathcal{X}_3$ as they arise from the model in the Cartesian coordinates $x_1, x_2, x_3$, we then pass to the coordinates $x, y$ given by \eqref{carbar} and perform the gauge transformation 
\begin{equation}
    W \rightarrow \widetilde{W}=  \left[\mathcal{G}^{(a,b,c)}(x,y)\right]^{-1}W\;\mathcal{G}^{(a,b,c)}(x,y) \qquad  \text{with} \qquad \mathcal{G}^{(a,b,c)}(x,y)= x^{\frac{a}{2}}y^{\frac{b}{2}}(1-x-y)^{\frac{c}{2}},
\end{equation}
for  $W=\mathcal{X}_1$, $\mathcal{X}_3$, $\mathcal{L}_1$, $\mathcal{L}_3$, $\mathcal{L}$ to find
\begin{align}
        &\widetilde{\mathcal{X}}_1=x,\\
    &\widetilde{\mathcal{X}}_3=1-x-y,\\
    &\widetilde{\mathcal{L}}_1=((b+1)(1-x)-(b+c+2)y)\partial_y + y(1-x-y) \partial_{yy},\label{L1}\\
    &\widetilde{\mathcal{L}}_3=((a+1)y-(b+1)x)(\partial_x-\partial_y) + xy \left(\partial_{xx}+\partial_{yy}-2\partial_{xy}\right)\label{L3t},\\
    &\widetilde{\mathcal{L}}= x(1-x)\partial_{xx}+y(1-y)\partial_{yy} \nonumber \\
    &\quad-2xy\partial_{xy}+(a+1-(a+b+c+3)x)\partial_{x}+(b+1-(a+b+c+3)y)\partial_{y}.
\end{align}
The properties of the two-variable Jacobi polynomials $J_{n,k}^{(a,b,c)}(x,y)$ defined in \eqref{defJac2} under the actions of these operators then follow from the representation that has been worked out. We have already given the two recurrence relations \eqref{rec1} and \eqref{rec2} that stem from the actions of  $\widetilde{X}_1$ and $\widetilde{X}_3$. In fact we have used them to identify the polynomials $J_{n,k}^{(a,b,c)}(x,y)$. From \eqref{actionLket} and \eqref{actionL1}, we have
\begin{align}
 &\widetilde{\mathcal{L}}\ [J_{n,k}^{(a,b,c)}(x,y)]=-n\,(n+a+b+c+2)J_{n,k}^{(a,b,c)}(x,y),\label{eqdiff1}\\
 &\widetilde{\mathcal{L}_1}\ [J_{n,k}^{(a,b,c)}(x,y)]=-k\,(k+b+c+1)J_{n,k}^{(a,b,c)}(x,y) \label{eqdiff2}.
\end{align}
Finally, Proposition \ref{prop:L3Action} implies that
\begin{equation} \label{eqdiff3}
    \widetilde{\mathcal{L}_3}\ [J_{n,k}^{(a,b,c)}(x,y)]=\gamma_{nk}^1 J_{n,k+1}^{(a,b,c)}(x,y)+\gamma_{nk}^2 J_{n,k}^{(a,b,c)}(x,y) +\gamma_{nk}^3 J_{n,k-1}^{(a,b,c)}(x,y),
\end{equation}
with the coefficients $\gamma_{nk}^i, i=1,2,3$ given by the equations \eqref{gamma1}, \eqref{gamma2}, \eqref{gamma3}. Given \eqref{eqdiff1}, \eqref{eqdiff2} and \eqref{eqdiff3} provide a realization of the centrally extended Racah algebra of rank one $\mathfrak{R}_1$ in the basis formed by the bivariate Jacobi polynomials.

\subsection{Additional remarks on the algebraic solution}

The algebraic solution of the generic superintegrable model on $S^2$ presented in this section proceeded by exploiting the representation of the dynamical algebra $\mathfrak{J}_2$. It relied on identifying the wavefunctions as the overlaps between the two representation bases $|n,k;a,b,c\rangle$ and $|x,y;a,b,c\rangle$ of $\mathfrak{J}_2$, and observing that the two recurrence relations of these overlaps entailed by the actions of the generators $X_1$ and $X_3$ in these bases coincide with those of the  bivariate Jacobi polynomials $J_{n,k}^{(a,b,c)}(x,y)$. This obviously assumes the a priori knowledge of this characterization of the polynomials $J_{n,k}^{(a,b,c)}(x,y)$ by their recurrence relations. We wish to add that there is a less learned approach to the algebraic solution of the generic superintegrable model on $S^2$ that is based more fundamentally on the representation theory of $\mathfrak{J}_2$. It calls upon the subalgebra structure of $\mathfrak{J}_2$ and involves as intermediary step the basic interpretation of the univariate Jacobi polynomials from the representation theory of the rank one Jacobi algebra $\mathfrak{J}_1$ \cite{granovskii1992mutual} \cite{genest2016tridiagonalization}. This method applies in the present context and is explained in detail in \cite{crampe2026algebraic}. In summary, it goes as follows.

A key element in this picture as mentioned, is the algebraic description of the univariate Jacobi polynomials $J_n^{(\alpha,\beta)}(u)$ as overlaps between the eigenbasis of the two (canonical) generators of the rank one Jacobi algebra $\mathcal{J}_1$. In a nutshell, with $n$ and $u$ denoting respectively the discrete and continuous eigenvalues of these operators and using the normalizations $\langle n',\alpha, \beta|n, \alpha, \beta\rangle = N_n^{(\alpha, \beta)} \delta_{n',n}$, and $\langle u'|u\rangle = \delta (u'-u)$, one has \cite{granovskii1992mutual, genest2016tridiagonalization,crampe2026algebraic}:
\begin{equation}
    \langle u, \alpha, \beta|n, \alpha, \beta\rangle = \left[ u^{\alpha} (1-u)^{\beta}\right]^{\frac{1}{2}} J_n^{(\alpha,\beta)}(u), 
\end{equation}
 with $\alpha$ and $\beta$ determined by the parameters of the algebra.

 It is standard to define representation bases elements as joint eigenvectors of maximal abelian subalgebras. Here, in addition to the two abelian algebras generated on the one hand by $\{X_1, X_3\}$ and, on the other hand by $\{L, L_1\}$ to which are respectively attached the eigenvectors $|x,y; a,b,c\rangle$ and $|n,k; a, b, c\rangle$, there is a third abelian subalgebra of relevance generated by $\{L_1, X_1\}$ to which are associated the eigenvectors $|x, k; b, c\rangle$ verifying
     \begin{equation} 
    X_1\;|x,k;b,c\rangle=x\;|x,k;b,c\rangle, \qquad L_1\;|x,k;b,c\rangle=-k(k+b+c)\;|x,k;b,c\rangle. \label{intbas}
\end{equation}
The situation, similar to the one encountered in the definition of the factorized Leonard pair \cite{FLP}, is depicted in Figure \ref{fig:1} where we have written the generators of these three abelian subalgebras over dots on a line segment. 
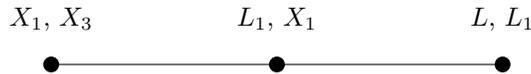
\begin{figure}[ht]
\centering
\begin{tikzpicture}[scale=1.5]
  \coordinate (A) at (0,0);
  \coordinate (B) at (2,0);
  \coordinate (C) at (4,0);

  \fill (A) circle (2pt);
  \fill (B) circle (2pt);
  \fill (C) circle (2pt);

  \draw (A) -- (B) -- (C);

  \node at ($(A)+(0,0.4)$) {\(X_1,\,X_3\)};
  \node at ($(B)+(0,0.4)$) {\(L_1,\,X_1\)};
  \node at ($(C)+(0,0.4)$) {\(L,\,L_1\)};
\end{tikzpicture}
\caption{This diagram illustrates how the two-variable Jacobi polynomials can be interpreted as convolutions of overlaps between representation bases of rank-one Jacobi algebra representations, connecting the \(\{\lvert x,y\rangle\}\) basis to the \(\{\lvert n,k;a,b,c\rangle\}\) basis via the \(\{\lvert x,k;b,c\rangle\}\) basis.}
\label{fig:1}
\end{figure}

Assuming that the vectors $|x, k;b,c\rangle$ are orthonormalized, we may use the resolution of the identity in terms of these basis vectors to write
\begin{equation}
    \langle x,y;a,b,c|n,k;a,b,c\rangle=\sum_{k'} \int _0^1 dx' \langle x, y;a,b,c\;|\;x',k';b,c\rangle\langle x',k';b,c\;|\;n,k; a, b, c\rangle. \label{conv1}
\end{equation}
Following the path in Figure \ref{fig:1} from left to right this amounts to expressing the wavefunction $\phi_{n,k}^{(a,b,c)}(x,y)$ as a convolution of the overlaps between the end bases and the middle one. Referring again to Figure \ref{fig:1} and the inner generators between pairs, it can be checked from the commutation relations given in Appendix \ref{sec:Rank2Jacobi} that the generators $X_3$ and $L_1$ form a Jacobi algebra of rank one where $X_1$ is central and where the continuous eigenvalue is $\frac{y}{1-x}$ and the parameters are $\alpha = b$, $\beta=c$. Similarly, it is also seen that the generators $X_1$ and $L$ form a rank one Jacobi algebra again where $L_1=-k(k+b+c+1)$ is central, the continuous eigenvalue is $x$ and the discrete one is $-(n-k)\left[(n-k)+a+b+c+2k+2\right]$ and the parameters are $\alpha=a, \;\beta=2k+b+c+1$. From these data it follows that 
\begin{equation}
       \langle x, y
       ; a, b. c\;|\;x',k'; b,c\rangle = \delta (x-x')\; \sqrt{\frac{1}{N_k^{(b,c)}}\frac{y^b(1-x-y)^c}{(1-x)^{b+c+1}}}\; J_{k'}^{(b,c)}\left(\frac{y}{1-x} \right), \label{overint1}
    \end{equation}
    and
    \begin{equation}
    \langle x',k'; b,c\;|\;n,k; a,b,c\rangle= \delta _{k,k'}\sqrt{{N_k^{(b,c)}x^a (1-x)^{(b+c+2k+1)}}}J_{n-k}^{(a, b+c+2k+1)}(x'). \label{overint2}
\end{equation}
Inserting these expressions in \eqref{conv1} yields the formula \eqref{wavefuncbar} for the wavefunctions.

This approach provides an algebraic solution of the superintegrable model on $S^2$ based on the dynamical algebra $\mathfrak{J}_2$ that is more refined than than the one given in Subsection \ref{sec:wave}. It uses the subalgebra structure of $\mathfrak{J}_2$ and takes the analysis to the level of the representation theory of $\mathfrak{J_1}$; it furthermore explains the intertwining of the univariate Jacobi polynomials in the expression of the two-variable ones. All the details are provided in \cite{crampe2026algebraic} where an even more elaborate pentagonal picture is presented.

\section{Conclusion}

Summarizing, this paper introduces, studies and uses the dynamical algebra of a maximally superintegrable model of great interest, namely the generic one on the two-sphere $S^2$. The symmetry algebra that accounts for the degenaracy of the energy levels of this system was known to be the Racah algebra of rank one $\mathfrak{R}_1$. Somewhat surprisingly, the dynamical algebra of this model had not been clearly identified so far. We showed that the spectrum generating algebra that incorporates all the state vectors of the system in a single irreducible module is the Jacobi algebra of rank two $\mathfrak{J}_2$ which includes as should be $\mathfrak{R}_1$ as subalgebra. This is a significant addition to the broad corpus of knowledge on superintegrable models.

Beyond identifying $\mathfrak{J}_2$ as the dynamical algebra of the generic superintegrable model on $S^2$, we further showed how an algebraic solution of the model is obtained from the construction of the physical representation of $\mathfrak{J}_2$. The steps of the analysis were the following:
\begin{enumerate}
    \item Start from the knowledge that $\mathfrak{su}(1,1)$ is the dynamical algebra of the one-dimensional singular oscillator. View the three-dimensional isotropic oscillator with three different singular terms as the diagonal embedding of $\mathfrak{su}(1,1)$ into its three-fold product. Recall from previous work \cite{genest2014racah}, \cite{genest2014generic} that the Hamiltonian $\mathcal{H}$ of the superintegrable model on $S^2$ corresponds to the total Casimir element, $\mathcal{C}^{(123)}$, in this differential realization of $\mathfrak{su}(1,1)$ in three variables $x_i,\ i=1,2,3$ supplemented with the constraint $x_1^2+x_2^2+x_3^2=1$. Take note as already observed, that the intermediate Casimir operators $\mathcal{C}^{(ij)}$ with $(ij) =(12), (23), (13)$ commute with the total Casimir operator and the constraint and, are hence conserved quantities. This implies that the differential operators $\mathcal{L}_1$ and $\mathcal{L}_3$, given as affine transformations of $\mathcal{C}^{(23)}$ and $\mathcal{C}^{(12)}$ respectively, realize the generators of the symmetry algebra of the model which is thus identified through this construct as the Racah algebra of rank one $\mathfrak{R}_1$ centrally extended by $\mathcal{L}$ affinely related to $\mathcal{C}^{(123)}$ (or equivalently $\mathcal{H}$). Observe then that this algebra $\mathfrak{R}_1$ can be extended to a dynamical algebra realized by adding to $\mathcal{L}$, $\mathcal{L}_1$, $\mathcal{L}_3$, the operators $\mathcal{X}_1=x_1^2$ and $\mathcal{X}_3 = x_3^2$.
    \item Use now the differential model as a guide to promote the operators $\mathcal{L}$, $\mathcal{L}_1$, $\mathcal{L}_3$,  $\mathcal{X}_1$, $\mathcal{X}_3$ to the abstract generators $L$, $L_1$, $L_3$, $X_1$, $X_s$ of the dynamical algebra by defining these by the same expressions in terms of the generators of $\mathfrak{su}(1,1)^{\otimes 3}$ as the one they have in the differential realization of this algebra. To be clear, with the $\mathfrak{su}(1,1)$ generators denoted by $\{J_0, J_{\pm}\}$ and represented by $\{\mathcal{J}_0, \mathcal{J}_{\pm}\}$ in the differential singular oscillator model in one variable $x$, we took for examble $X_1=\left[2J_0+J_{+}+J_{-}\right] \otimes 1\otimes 1$ since $\left[2\mathcal{J}_0+\mathcal{J}_{+}+\mathcal{J}_{-}\right] =x^2$. Given this embedding of the dynanical algebra into  $\mathfrak{su}(1,1)^{\otimes 3}$, the properties of this dynamical algebra are inferred from those of $\mathfrak{su}(1,1)$. The commutation relations are thus readily obtained from those of $\mathfrak{su}(1,1)$ and we identifiesdthe dynamical algebra with the rank two Jacobi algebra $\mathfrak{J}_2$ whose defining relations had been previously given \cite{crampe2025rank}.
    \item Our next task was to identify the relevant representation of $\mathfrak{J}_2$. A natural basis was first constructed as joint solutions on the $\mathfrak{su}(1,1)^{\otimes 3}$ module of the eigenvalue problems associated to $L, L_1$ and the constraint $X_1 + X_2 +X_3 \sim 1$.  With the total and intermediate Casimir elements being operators to be diagonalized, this involved using $\mathfrak{su}(1,1)$ recoupling coefficients. The diagonalization of one of the operators $X_1, X_2, X_3$, leads to generalized vectors given as sums over the standard discrete series basis elements of the corresponding factor with Laguerre polynomials as coefficients. This allowed to enforce the Dirac constraint. At this point we were equipped to proceed with the construction of the representation in that basis. The actions of the generators not involved in the eigenvalue problems were determined using their expressions in terms of the $\mathfrak{su}(1,1)^{\otimes 3}$ elements and $\mathfrak{su}(1,1)$ representation theory. This involved using the properties of the various polynomials that intervene (Hahn, Racah, Laguerre) and in particular contiguity relations for some of these functions that have been recently derived \cite{Contiguity25}.
    \item The last step consisted in showing how the representation of $\mathfrak{J}_2$ thus constructed provides an algebraic solution of the superintegrable model on $S^2$. To that end we introduced another basis given as the joint eigenfunctions of the ``position'' operators $X_1$ and $X_2$ (and automatically of $X_3$ given the constraint). Providing the wavefunctions then amounts to determining the overlaps between this position basis and the energy basis described in the preceding paragraph. An immediate outcome of the representation theory of $\mathfrak{J}_2$ is that the wavefunctions satisfy the two recurrence relations of the two-variable Jacobi polynomials. There then only remained to compute the ground state wavefunction which again could be done using results obtained in the construction of the representation of $\mathfrak{J}_2$

\end{enumerate}

Thinking about follow-ups to this study, it should be said that the generalization to superintegrable models on $S^n$ with $n \ge 3$ can in principle be achieved by extending the strategy presented here for $n=2$. The problem can be similarly set in the framework of $\mathfrak{su}(1,1)^{\otimes (n+1)}$ i.e. the $(n+1)-$ fold product of $\mathfrak{su}(1,1)$. The symmetry algebra is the Racah algebra of rank $n$, $\mathfrak{R}_n$, and the wavefunctions are expressed in terms of multivariate Jacobi polynomials. Some work on these models and the representations of the associated symmetry algebras has already been done in particular for $n=3$ \cite{kalnins2011two, genest2014generic,crampe2023representations, post2024racah} and for arbitrary $n$ as well \cite{ de2018higher, gaboriaud2018generalized, de2020racah}. It is expected that it should be possible to introduce a Jacobi algebra of rank $n+1$ that would have $\mathfrak{R}_n$ as a subalgebra and would be the dynamical algebra of the superintegrable model on $S^n$. Repeating the program carried out here for arbitrary $n$ would certainly be worthwhile in light of the results reported in \cite{Iliev18} providing the irreducible actions of the symmetry operators of these models in representation bases made out of Jacobi polynomials in $n$ variables.

Another project that we intend to pursue, is to determine and exploit the dynamical algebra of the superintegrable model on $S^2$ with reflections \cite{ genest2014bannai, de2015bannai, genest2015laplace}. In that case, the symmetry algebra is known to be the Bannai--Ito algebra \cite{tsujimoto2012dunkl}, which plays for the reflection model the role that the Racah algebra plays for the generic model. We plan on looking for the dynamical algebra of this model with reflections that would extend the Bannai--Ito algebra, in the same way that the Jacobi algebra of rank two extends the Racah algebra in the present work. We hope to report soon on these questions.

\appendix

\section{Presentation of the rank 2 Jacobi algebra}\label{sec:Rank2Jacobi}Here are the relations between the five generators $L, L_1, L_3, X_1, X_3$ that define the rank two Jacobi algebra $\mathcal{J}_2$.
We denote by $I$ the identity element.
 Recall that 
 \begin{equation}
     [L,L_1]=0, \; [L,L_3]=0, \; [L_1,X_1]=0, \; [L_3,X_3]=0, \; [X_1,X_3]=0. \label{zero}
 \end{equation}
 
\subsection{Commutators involving $[L,X_1]$}
 \begin{align}
     [[L,X_1]\;,L\;]&=2\{X_1,L\}-2L+2L_1-(a+b+c+1)((a+b+c+3)X_1 -(a+1)I),\label{LX1L}\\
     [[L,X_1],L_1]&=0 \quad \text{ by Jacobi identity}, \\
     [[L,X_1],L_3]&=\{X_1,L+L_3\}+\{X_3-I,L-L_1\}\nonumber\\
     & \quad -(a+b+c+1)\left((a+1)(X_1+X_3-I)+(b+1)X_1\right),\\ 
     [[L,X_1],X_1]&=-2X_1^2+2X_1,\label{LX1X1}\\
     [[L,X_1],X_3]&=-2X_1X_3.\label{LX1X3}
 \end{align}

\subsection{Commutators involving $[L,X_3]$}

\begin{align}
    [[L,X_3]\;,L\;]&=2\{ X_3,L\}-2L+2 L_3
    -(a+b+c+1)((a+b+c+3)X_3 - (c+1)I),\label{LX3L}\\
    [[L,X_3],L_1]&=\{X_1-I, L- L_3\}+\{X_3,L+L_1\}
\nonumber\\
& \qquad -(a+b+c+1)\left((c+1)(X_1+X_3-I)+(b+1)X_3\right),
\\
    [[L,X_3],L_3]&=0 \quad \text{by Jacobi identity},\\
    [[L,X_3],X_1]&=
     [[L,X_1],X_3] \quad \text{ by Jacobi identity}, \\
    [[L,X_3],X_3]&=-2X_3^2+2X_3.\label{LX3X3}
\end{align}

\subsection{Commutators involving $[L_1,L_3]$}
\begin{align}
[[L_1,L_3],\;L\;]&=0 \quad \text{by Jacobi identity},\\
[[L_1,L_3],L_1]&=2\{L_1,L_3\}+2L_1^2-2L_1L 
    +(b+c)(b+1)(L-L_1-L_3) \nonumber\\&
    - (b+c)(c+1)L_3-(b-c)(a+1)L_1, \label{L1L3L1}\\
[[L_1,L_3],L_3]&=-2\{L_1,L_3\}-2L_3^2+2L_3L  
    -(a+b)(b+1)(L-L_1-L_3) \nonumber\\&
    + (a+b)(a+1)L_1+(b-a)(c+1)L_3, \label{L1L3L3}\\
[[L_1,L_3],X_1]&=   -\{X_1-I,L-L_1-L_3\}-\{X_3,L-L_1\}
\nonumber\\
    &+(a+1)\left((b+1)X_3+(c+1)(X_1+X_3-I)\right), \\
[[L_1,L_3],X_3]&=\{X_1,L-L_3\} +\{X_3-I,L-L_1-L_3\}
\nonumber \\
    & 
-(c+1)\left((a+1)(X_1+X_3-I)+(b+1)X_1\right).
 \end{align}
 
\subsection{Commutators involving $[L_1,X_3]$}
\begin{align}
 [[L_1,X_3]\;,L\; ]&=
    [[L,X_3],L_1]\quad \text{ by Jacobi identity}, \\
 [[L_1,X_3],L_1]&=2\{X_3 ,L_1\}+\{X_1,L_1\}-2L_1  \nonumber \\
 &-(b+c) \left((b+c+2)X_3 -(c+1)(I-X_1) \right), \label{L1X3L1}\\
 [[L_1,X_3],L_3]&=
     [[L_1,L_3],X_3]\quad \text{ by Jacobi identity},\\
 [[L_1,X_3],X_1]&=0\quad \text{ by Jacobi identity},\\
 [[L_1,X_3],X_3]&=-2X_3^2+2(I-X_1)X_3.\label{L1X3X3}
\end{align}

\subsection{Commutators involving $[L_3,X_1]$}
\begin{align}
    [[L_3,X_1]\;,L\;]
     &=[[L,X_1],L_3]\quad \text{ by Jacobi identity},\\
    [[L_3,X_1],L_1]&=
   -[[L_1,L_3], X_1]\quad \text{ by Jacobi identity},\\
    [[L_3,X_1],L_3]&=\{2X_1+X_3-I,L_3\} \nonumber\\ 
 &-(a+b)\left((a+1)(X_1+X_3-I)+(b+1)X_1\right), \label{L3X1L3}\\
    [[L_3,X_1],X_1]&=-2X_1(X_1+X_3-I), \label{L3X1X1}\\
    [[L_3,X_1],X_3]&=0\quad \text{ by Jacobi identity}.
\end{align}
Note that the commutators down the list that can be obtained from the Jacobi identity $[[X,Y],Z]+[[Y,Z],X]+[[Z,X],Y] = 0$ are not repeated; rather, it is indicated to which expression previously given they are equal.

\subsection{Implied relations.}

Any commutation relations between two commutators can be computed from the previous relations. For example, let use the Jacobi identity and relations \eqref{LX1L}, \eqref{LX1X3} to transform the commutator $[[L,X_1],[L,X_3]]$:
\begin{align}
   &[[L,X_1],[L,X_3]]=[[[L,X_1],L],X_3]+[L,[[L,X_1],X_3]] \nonumber\\
   =&2[L_1,X_3]-2[L,X_3]+ 2[L,X_3] X_1-2[L,X_1]X_3.
\end{align}
Similar relations can be obtained for all the commutators between two commutators of the generators. These relations, together the defining ones, allows us to order the generators $L,L_1,L_3,X_1,X_3$ and the commutators $[L,X_1]$, $[L,X_3]$, $[L_1,L_3]$, $[L_1,X_3]$, $[L_3,X_1]$.

Let us mention that there exist additional relations between the ordered monomials. This feature has been already studied in \cite{crampe2021racah,post2024racah} for the higher rank Racah algebra. For example, the previous relation can be computed differently:
\begin{align}
   &[[L,X_1],[L,X_3]]=[[L,[L,X_3]],X_1]+[L,[X_1,[L,X_3]]] \nonumber\\
   =&2[L,X_1] -2[L_3,X_1]+2[L,X_1]X_3-2[L,X_3]X_1.
\end{align}
Comparison of the both results leads to 
\begin{align}
  [L_1,X_3]-[L,X_3]=[L,X_1] -[L_3,X_1]. \label{identity}
\end{align}

\section{Some properties of certain orthogonal polynomials families\label{app:A}}
In this appendix we collect necessary information about Laguerre, Racah and Hahn polynomials.

\subsection{Laguerre polynomials}
\label{sec:Laguerre-pols}
The Laguerre polynomials $L_n^{(\alpha)}(x)$ with parameter $\alpha \in \mathbb{C}$ are defined by
\begin{equation}\label{eq:LaguerreDefinition}
    L_n^{(\alpha)}(x)=\frac{(\alpha+1)_n}{n!}{}_1F_1\left({{-n}\atop
{\alpha+1}}\;\Bigg\vert \; x\right)\, .
\end{equation}
They satisfy the following three-term recurrence relation
\begin{equation}\label{eq:LaguerreRecurrence}
    x L_n^{(\alpha)}(x)=-(n+1) L_{n+1}^{(\alpha)}(x)+(2n+\alpha+1) L_n^{(\alpha)}(x)-(n+\alpha) L_{n-1}^{(\alpha)}(x).
\end{equation}
Their orthogonality relation takes the form
\begin{equation}
    \int_0^\infty e^{-x}x^{\alpha}L_m^{(\alpha)}(x)L_n^{(\alpha)}(x)dx=\Gamma(\alpha+1) \frac{(\alpha+1)_n}{n!} \delta_{mn},
\end{equation}
and the corresponding resolution of the identity is
\begin{equation}
\sum_{n=0}^{\infty}
\frac{n!}{(\alpha+1)_n}
L_n^{(\alpha)}(x)L_n^{(\alpha)}(y)
=
\frac{\Gamma(\alpha+1)}{e^{-x}x^{\alpha}}\,
\delta(x-y). \label{resid}
\end{equation}
They satisfy the following contiguity relations \cite{DLMF} 
\begin{align}
    L_n^{(\alpha)}(x)&=L_n^{(\alpha+1)}(x)-L_{n-1}^{(\alpha+1)}(x),\label{eq:contiguityLaguerre1}\\
    xL^{(\alpha)}_{n-1}(x)&=-nL_n^{(\alpha-1)}(x)+(n+\alpha-1)L_{n-1}^{(\alpha-1)}(x).\label{eq:contiguityLaguerre2}
\end{align}

\subsection{Racah polynomials}
\label{sec:Racah-pols}
The Racah polynomials $R_n(\lambda(x))=R_n(\lambda(x);\alpha,\beta,\gamma,\delta)$  are defined by
\begin{equation}
    R_n(\lambda(x))={}_4F_3 \argu{-n,n+\alpha+\beta+1,-x,x+\delta+\gamma+1}{\alpha+1,\beta+\delta+1,\gamma+1}{1},\ \ \ \ \ n=0,\cdots,N,
\end{equation}
with $\alpha+1=-N$, $\beta+\delta+1=-N$, or $\gamma+1=-N$ and 
\begin{equation}
    \lambda(x)=x(x+\gamma+\delta+1).
\end{equation}
They satisfy the following three term recurrence relation 
\begin{equation}\label{eq:Racah_Recurrence}
    \lambda(x)R_n(\lambda(x))=A_nR_{n+1}(\lambda(x)) -(A_n+C_n)R_n(\lambda(x))+C_nR_{n-1}(\lambda(x)),
\end{equation}
where 
\begin{align}
    A_n&=\frac{(n+\alpha+1)(n+\alpha+\beta+1)(n+\beta+\delta+1)(n+\gamma+1)}{(2n+\alpha+\beta+1)(2n+\alpha+\beta+2)} ,\\
    C_n&=\frac{n(n+\alpha+\beta-\gamma)(n+\alpha-\delta)(n+\beta)  }{(2n+\alpha+\beta+1)(2n+\alpha+\beta)}.
\end{align}

\subsection{Hahn polynomials}
\label{sec:Hahn-pols}
The Hahn polynomials $p_i(x;\rr)$ with $\rr=(\alpha,\beta,N)$ are defined by
\begin{align}
    p_i(x,\rr)={}_3F_2 \left({{-i,\;i+\alpha+\beta+1, \;-x}\atop
{\alpha+1,\;-N}}\;\Bigg\vert \; 1\right)\,,
\end{align}
where the parameters $\alpha$ and $\beta$ are real and $N$ a non-negative integer.
They satisfy the following orthogonality relation for $-1<\alpha,\beta$, or $\alpha,\beta<-N$
\begin{equation}\label{eq:orthogonality Hahn}
    \sum_{x=0}^N \binom{\alpha+x}{x}\binom{\beta+N-x}{N-x}p_i(x,\rr)p_j(x,\rr)=\frac{(-1)^i(i+\alpha+\beta+1)_{N+1}(\beta+1)_ii!}{(2i+\alpha+\beta+1)(\alpha+1)_i(-N)_iN!}\delta_{ij}.
\end{equation}
They also verify the following three-term recurrence relation
\begin{align} \label{eq:recuR}
& -x p_i(x; \rr)=A_{i,\rr}\; p_{i+1}(x; \rr)
-(A_{i,\rr}+C_{i,\rr})\;p_i(x; \rr)+C_{i,\rr}\; p_{i-1}(x; \rr)\,,
\end{align}
with $p_0(x;\rr)=p_i(0;\rr)=1$. 
The coefficients of the recurrence relation read
\begin{align}
    A_{i,\rr}=\frac{(i+\alpha+\beta+1)(i+\alpha+1)(N-i)}{(2i+\alpha+\beta+1)(2i+\alpha+\beta+2)}\,,\quad
    C_{i,\rr}=\frac{i(i+\alpha+\beta+N+1)(i+\beta)}{(2i+\alpha+\beta)(2i+\alpha+\beta+1)}\,.
\end{align}
The Hahn polynomials satisfy contiguity relations. The $B_2$-contiguity relations take the following form
\begin{equation}\label{eq:lp}
\lambda^{+}_{x; \rr}\; p_i(x; \rr)= \Phi^{+1,+}_{i}\ p_{i+1}(\ox; \orr)+\Phi^{0,+}_{i}\ p_{i}(\ox; \orr)+\Phi^{-1,+}_{i}\ p_{i-1}(\ox; \orr)\,,\\
\end{equation} 
where $\ox$ is a linear transformation of $x$ and $\orr$ is the list of modified parameters in $\rr$.
Similarly the $B_2'$-contiguity relations relations take the following form
\begin{subequations}
\begin{align}
\lambda^+_{x,\rr} p_{i}(x;\rr)&=\Phi_{i}^{0,+}p_{i}(\ox;\orr) +\Phi_{i}^{-1,+}p_{i-1}(\ox;\orr)+\Phi_{i}^{-2,+}p_{i-2}(\ox;\orr)\,, \qquad i\geq 0\,, \\
\lambda^-_{x,\rr} p_{i}(\ox;\orr)&=\Phi_{i}^{0,-}p_{i}(x;\rr) +\Phi_{i}^{1,-}p_{i+1}(x;\rr)+\Phi_{i}^{2,-}p_{i+2}(x;\rr)\,, \qquad i\geq 0\,. 
\end{align}
\end{subequations} 
In \cite{Contiguity25}, the $B_2$ and $B_2'$-contiguity relations
for the finite families of orthogonal polynomials in the Askey scheme have been classified. Next we recall the ones used in this paper, following the same labeling of the solutions as in \cite{Contiguity25}.\\

\noindent
\textit{$B_2$-contiguity relations}
\begin{enumerate}

    \item[(HI/II)]$\ox=x+1, \quad \oalpha=\alpha,\quad \obeta=\beta\,,\quad \oN=N+1$
    \begin{align*}
 &  \lambda_{x,\rr}^+=x+1\,,
 \\
  &  \Phi_i^{+1,+}=-\frac{(N+1)(i+\alpha+1)(i+\beta+\alpha+1)}{(2i+\alpha+\beta+1)(2i+\alpha+\beta+2)}\,,
  \\
    &  \Phi_i^{-1,+}=-\frac{i(N+1)(i+\beta)}{(2i+\alpha+\beta+1)(2i+\alpha+\beta)}\,,
    \\
     &  \Phi_i^{0,+}=\frac{N+1 }{2i+\alpha+\beta+1}\left(\frac{(i+\alpha)(i+\alpha+\beta)}{2i+\alpha+\beta}+\frac{(i+1)(i+\beta+1)}{2i+\alpha+\beta+2} \right).
     \end{align*}
    \item[(HII/I)]$\ox=x-1, \quad \oalpha=\alpha,\quad \obeta=\beta\,,\quad \oN=N-1$
     \begin{align*}
     &  \lambda_{x,\rr}^+=x+\alpha\,,
    \\
     &  \Phi_i^{+1,+}=-\frac{(N-i)(N-1-i)(i+\alpha+1)(i+\beta+\alpha+1)}{(2i+\alpha+  \beta+1)(2i+\alpha+\beta+2)N}\,,
     \\
      &  \Phi_i^{-1,+}=-\frac{i(i+\beta)(i+\alpha+\beta+N)(i+\alpha+ \beta+N+1)}{(2i+\alpha+  \beta+1)(2i+\alpha+\beta)N}\,,
       \\
     &  \Phi_i^{0,+}=\frac{(N-i)(i+\alpha+\beta+N+1)}{(2i+\alpha+\beta+1)N}\left(\frac{(i+\alpha)(i+\alpha+\beta)}{2i+\alpha+\beta}+\frac{(i+1)(i+\beta+1)}{2i+\alpha+\beta+2} \right).
     \end{align*}
     \item[(HIII/IV)] $\ox=x, \quad \oalpha=\alpha,\quad \obeta=\beta\,,\quad \oN=N+1$
\begin{align*}
 &  \lambda_{x,\rr}^+=N+1-x\,,
 \\
  &  \Phi_i^{+1,+}=\frac{\left(i +1+\alpha +\beta \right) \left(N +1\right) \left(i +1+\alpha \right)}{\left(2 i +1+\alpha +\beta \right) \left(2 i +2+\alpha +\beta \right)}\,,
  \\
    &  \Phi_i^{-1,+}=\frac{i \left(N +1\right) \left(i +\beta \right)}{\left(2 i +1+\alpha +\beta \right) \left(2 i +\alpha +\beta \right)}\,,
    \\
     &  \Phi_i^{0,+}=\frac{\left(\alpha  \beta +2 \alpha  i +\beta^{2}+2 i \beta +2 i^{2}+\alpha +\beta +2 i \right) \left(N +1\right)}{\left(2 i +\alpha +\beta \right) \left(2 i +2+\alpha +\beta \right)}
\,.
\end{align*}
    \item[(HIV/III)] $\ox=x, \quad \oalpha=\alpha,\quad \obeta=\beta\,,\quad \oN=N-1$
\begin{align*}
 &  \lambda_{x,\rr}^+=N+\beta-x\,,
 \\
  &  \Phi_i^{+1,+}=\frac{\left(i +\alpha +\beta +1\right) \left(N -i \right) \left(N -1-i \right) \left(i +\alpha +1\right)}{\left(2 i +1+\alpha +\beta \right) N \left(2 i +2+\alpha +\beta \right)}\,,
  \\
    &  \Phi_i^{-1,+}=\frac{i \left(i +1+\alpha +\beta +N \right) \left(i +\alpha +\beta +N \right) \left(i +\beta \right)}{\left(2 i +1+\alpha +\beta \right) N \left(2 i +\alpha +\beta \right)}\,,
    \\
     &  \Phi_i^{0,+}=\frac{\left(\beta  \alpha +2 i \alpha +\beta^{2}+2 i \beta +2 i^{2}+\alpha +\beta +2 i \right) \left(N -i \right) \left(i +1+\alpha +\beta +N \right)}{\left(2 i +\alpha +\beta \right) N \left(2 i +2+\alpha +\beta \right)}
\,.
\end{align*}
\end{enumerate}

\noindent
\textit{$B_2'$-contiguity relations}
\begin{enumerate}
  \item[(HIII/III)] $\ox=x, \quad \oalpha=\alpha,\quad \obeta=\beta+2\,,\quad \oN=N$
\begin{align*} &  \lambda_{x,\rr}^+=1\,,
 \\
  &  \Phi_i^{0,+}=\frac{(i+\alpha+\beta+1)(i+\alpha+\beta+2)}{\left(2 i +1+\alpha +\beta \right)  \left(2 i +2+\alpha +\beta \right)}\,,
  \\
    &  \Phi_i^{-1,+}=\frac{2i(i+\alpha+\beta+1)}{(2i+\alpha+\beta)(2i+\alpha+\beta+2)}\,,
    \\
     &  \Phi_i^{-2,+}=\frac{i(i-1)}{(2i+\alpha+\beta)(2i+\alpha+\beta+1)}
\,.
  \\
  \\
 &  \lambda_{x,\rr}^-=(N+\beta-x+1)(N+\beta-x+2)\,,
 \\
  &  \Phi_i^{0,-}=\frac{\left(i +\beta +1\right) (i+\beta+2)(i+\alpha+\beta+N+3)(i+\alpha+\beta+N+2)}{\left(2 i +2+\alpha +\beta \right)  \left(2 i +3+\alpha +\beta \right)}\,,
  \\
    &  \Phi_i^{1,-}=\frac{2(i+\beta+2)(i+\alpha+1)(N-i)(i+\alpha+\beta+N+3)}{(2i+\alpha+\beta+2)(2i+\alpha+\beta+4)}\,,
    \\
     &  \Phi_i^{2,-}=\frac{(N-i)(N-i-1)(i+\alpha+1)(i+\alpha+2)}{(2i+\alpha+\beta+3)(2i+\alpha+\beta+4)}
\,.
  \\
\end{align*}
    \item[(HIV/III)] $\ox=x, \quad \oalpha=\alpha,\quad \obeta=\beta+2\,,\quad \oN=N-1$
\begin{align*}
 &  \lambda_{x,\rr}^+=1\,,
 \\
  &  \Phi_i^{0,+}=\frac{\left(i +1+\alpha +\beta \right) \left(N -i \right) \left(i +2+\alpha +\beta \right)}{\left(2 i +1+\alpha +\beta \right) N \left(2 i +2+\alpha +\beta \right)}\,,
  \\
    &  \Phi_i^{-1,+}=\frac{\left(\alpha +2 N +\beta +2\right) \left(i +1+\alpha +\beta \right) i}{\left(2 i +\alpha +\beta \right) \left(2 i +2+\alpha +\beta \right) N}
\,,
    \\
     &  \Phi_i^{-2,+}=\frac{i \left(i +1+\alpha +\beta +N \right) \left(i -1\right)}{\left(2 i +1+\alpha +\beta \right) N \left(2 i +\alpha +\beta \right)}
\,.
\end{align*}
\item[(HIII/IV)] $\ox=x, \quad \oalpha=\alpha,\quad \obeta=\beta+2\,,\quad \oN=N-1$
\begin{align*}
 &  \lambda_{x,\rr}^-=(N-x)(N+\beta-x+1)\,,
 \\
  &  \Phi_i^{2,+}=\frac{N \left(i +1+\alpha \right) \left(N -1-i \right) \left(i +2+\alpha \right)}{\left(2 i +3+\alpha +\beta \right) \left(2 i +4+\alpha +\beta \right)}\,,
  \\
    &  \Phi_i^{1,+}=
\frac{\left(\alpha +2 N +\beta +2\right) \left(i +1+\alpha \right) N \left(i +2+\beta \right)}{\left(2 i +2+\alpha +\beta \right) \left(2 i +4+\alpha +\beta \right)}
\,,
    \\
     &  \Phi_i^{0,+}=\frac{N \left(i +1+\beta \right) \left(i +2+\alpha +\beta +N \right) \left(i +2+\beta \right)}{\left(2 i +2+\alpha +\beta \right) \left(2 i +3+\alpha +\beta \right)}
\,.
\end{align*}
\item[(HIV/IV)] $\ox=x, \quad \oalpha=\alpha,\quad \obeta=\beta+2\,,\quad \oN=N-2$
\begin{align*}
 &  \lambda_{x,\rr}^+=1\,,
 \\
  &  \Phi_i^{0,+}=\frac{(N-i)(N-i-1)(i+\alpha+\beta+1)(i+\alpha+\beta+2)}{N(N-1)(2i+\alpha+\beta+1)(2i+\alpha+\beta+2)}\,,
  \\
    &  \Phi_i^{-1,+}=\frac{2i(N-i)(i+\alpha+\beta+1)(i+\alpha+\beta+N+1)}{N(N-1)(2i+\alpha+\beta+2)(2i+\alpha+\beta)}\,,
    \\
     &  \Phi_i^{-2,+}=\frac{i(i-1)(i+\alpha+\beta+N)(i+\alpha+\beta+N+1)}{N(N-1)(2i+\alpha+\beta+1)(2i+\alpha+\beta)}
\,.\\
\\
 &  \lambda_{x,\rr}^-=(N-x)(N-x-1)\,,
 \\
  &  \Phi_i^{0,-}=\frac{N(N-1)(i+\beta+1)(i+\beta+2)}{(2i+\alpha+\beta+2)(2i+\alpha+\beta+3)}\,,
  \\
    &  \Phi_i^{1,-}=\frac{2N(N-1)(i+\beta+2)(i+\alpha+1)}{(2i+\alpha+\beta+2)(2i+\alpha+\beta+4)}\,,
    \\
     &  \Phi_i^{2,-}=\frac{N(N-1)(i+\alpha+1)(i+\alpha+2)}{(2i+\alpha+\beta+3)(2i+\alpha+\beta+4)}
\,.
\end{align*}

\end{enumerate}


\bigskip

\section*{Acknowledgements}N.~Cramp\'e is partially supported by the international research project AAPT of the CNRS. L.~Vinet is funded in part by a Discovery Grant from the Natural Sciences and Engineering Research Council (NSERC) of Canada. Q. Labriet and L. Morey enjoy postdoctoral fellowships provided by this grant. AZ is supported by the Ministry of Science and Higher Education of the Russian Federation
(agreement no. 075–15–2025–343)

\section*{Conflict of interest}
On behalf of all authors, the corresponding author states that there is no conflict of interest.

\section*{Data availability}
This manuscript has no associated data.
\bibliographystyle{unsrt}
\bibliography{biblio}
\end{document}